\title{Relational logic with framing and hypotheses:
technical report\footnote{Banerjee was partially supported by the US National Science Foundation (NSF). Naumann and Nikouei
were partially supported by NSF award 1228930. Any opinion, findings, and conclusions or recommendations expressed in the material are those of the authors and do not necessarily reflect the views of NSF.}
}
\titlerunning{Relational logic with framing and hypotheses (Technical Report)} 
\author[1]{Anindya Banerjee}
\author[2]{David A. Naumann}
\author[2]{Mohammad Nikouei}
\affil[1]{IMDEA Software Institute}%\\ \texttt{email}}
\affil[2]{Stevens Institute of Technology}%\\\texttt{naumann@cs.stevens.edu}}
\authorrunning{A.\,Banerjee, D.\,A.\,Naumann, and M.\ Nikouei}
\renewcommand{\emptyset}{\varnothing}
\renewcommand{\setminus}{\backslash} 
\definecolor{light-gray}{gray}{0.88}
\definecolor{dark-gray}{gray}{0.25}
\newcommand{\increment}{\mbox{\small \texttt{++}}}
\newcommand{\gassym}{\mathbin{{\color{blue} ::= }}} % assignment symbol for grammar
\newcommand{\gmid}{\mathrel{{\color{blue}\mid}}}
\newcommand{\dt}[1]{\index{#1}\textbf{\emph{#1}}} % defined term
\newcommand{\proves}{\vdash}
\newcommand{\powerset}{\mathbb{P}}
\newcommand{\powersetne}{\mathbb{P}^\text{ne}}
\newcommand{\aftqua}{.\:}
\newcommand{\all}[2]{\forall #1 \aftqua #2}
\newcommand{\some}[2]{\exists #1 \aftqua #2}
\newcommand{\unionC}{\mathbin{\mbox{\small$\!\cup\!$}}} % union in code
\newcommand{\union}{\mathbin{\mbox{\small$\cup$}}}
\newcommand{\intersect}{\mathbin{\mbox{\small$\cap$}}}
\newcommand{\imp}{\Rightarrow}
\newcommand{\eqdef}{\mathrel{\widehat{=}}}
\newcommand{\rn}[1]{\textsc{#1}} % rule names
\newcommand{ \labf }[2]{\begin{trivlist}\item[$\;\;#1\eqdef$]$#2$ \end{trivlist}}
\newcommand{\keyw}[1]{\ensuremath{\mathsf{#1}}} 
\newcommand{\Region}{\keyw{rgn}}
\newcommand{\sing}[1]{\{#1\}} % singleton region
\newcommand{\NEW}{\keyw{new}}
\newcommand{\WHILE}{\keyw{while}}
\newcommand{\IF}{\keyw{if}}
\newcommand{\FI}{\keyw{fi}}
\newcommand{\DO}{\keyw{do}}
\newcommand{\OD}{\keyw{od}}
\newcommand{\THEN}{\keyw{then}}
\newcommand{\new}[1]{\NEW\;#1} % second arg for backwards compatibility
\newcommand{\skipc}{\keyw{skip}}
\newcommand{\Endlet}{\keyw{elet}} % end for procedure parameter
\newcommand{\seqc}[2]{\ensuremath{{#1}\:;{#2}}} 
\newcommand{\ifc}[3]{\keyw{if}\ {#1}\ \keyw{then}\ {#2}\ \keyw{else}\ {#3}}
\newcommand{\whilec}[2]{\keyw{while}\ {#1}\ \keyw{do}\ {#2}}
\newcommand{\BE}{\mathbin{=}}
\newcommand{\letcom}[3]{\keyw{let}~#1 \BE #2~\keyw{in}~#3} % no recursion in relRL
\newcommand{\letcomMulti}[2]{\keyw{let}~#1~\keyw{in}~#2} % multiple procs
\newcommand{\INT}{\keyw{int}}
\newcommand{\BOOL}{\keyw{bool}}
\newcommand{\NULL}{\keyw{null}}
\newcommand{\ol}{\overline}
\newcommand{\uequiv}{\mathrel{\cong}} % unconditional equivalence of commands or biprograms
\newcommand{\allowTo}{\mathord{\to}} % was \rightsquigarrow
\newcommand{\allowDep}{\mathord{\Rightarrow}} 
\newcommand{\successorTo}{\hookrightarrow}
\newcommand{\trans}[1]{\stackrel{#1}{\longmapsto}} % be careful with these
\newcommand{\tranStar}[1]{\mathbin{\stackrel{#1}{\longmapsto}{\!\!}^*}}
\newcommand{\config}[2]{\langle #1,\: #2\rangle}
\newcommand{\configm}[3]{\langle #1,\: #2,\: #3\rangle} % with method env't 
\newcommand{\configc}[1]{\langle #1 \rangle} % just the code 
\newcommand{\mathconst}[1]{\mbox{\textsl{#1}}} % font for fixed math entities
\newcommand{\agree}{\mathconst{Agree}} % agreement modulo effect
\newcommand{\rlocs}{\mathconst{rlocs}} 
\newcommand{\wlocs}{\mathconst{wlocs}} 
\newcommand{\Lagree}{\mathconst{Lagree}} 
\newcommand{\freshRefs}{\mathconst{freshRefs}} 
\newcommand{\freshLocs}{\mathconst{freshLocs}}
\newcommand{\fields}{\mathconst{Fields}} 
\newcommand{\fresh}{\mathconst{Fresh}}
\newcommand{\update}[3]{[#1\, |\, #2:\, #3]} % function override - Reynolds' notation
\newcommand{\scol}{\mathord{:}} % spaceless colon
\newcommand{\extend}[3]{[#1 \mathord{+} #2\scol\, #3]} 
\newcommand{\drop}[2]{#1\mathbin{\!\upharpoonright\!}#2} % remove from domain of map
\newcommand{\type}{\mathconst{Type}} % dynamic type of an object in semantics - vs \Type
\newcommand{\dom}{\mathconst{dom}\,}
\newcommand{\Default}[1]{\mathconst{default}(#1)}
\newcommand{\semNull}{\mathconst{null}} % was \NIL
\newcommand{\Ref}{\mathconst{Ref}}
\newcommand{\True}{\keyw{true}}
\newcommand{\False}{\keyw{false}}
\newcommand{\Rall}[1][]{\mathit{R}}
\newcommand{\Emp}{\emptyset} 
\newcommand{\lloc}{\keyw{alloc}} % was \All
\newcommand{\disj}[2]{#1\mathbin{\mbox{\#}}#2} % disjointness 
\newcommand{\ind}[2]{#1 \indSymbol #2}
\newcommand{\indSymbol}{\mathbin{\cdot\mbox{\small\textbf{/}}.}}
\newcommand{\ftpt}{\mathconst{ftpt}} % footprint function
\newcommand{\fra}[2]{#1\mathrel{\keyw{frm}}#2} % frames judgement
\newcommand{\flowtr}[4]{#2:\: #1\leadsto #3\:[#4]}
\newcommand{\sflowtr}[4]{#2:\: #1\leadsto #3\:[#4]}
\newcommand{\flowtrTy}[3]{#1\leadsto #2\:[#3]}
\newcommand{\flowty}[3]{#1\leadsto #2\:[#3]} % flow type - spec without the method
\newcommand{\rflowty}[3]{#1\rspecSym #2\:[#3]} % flow type - spec without the method
\newcommand{\rflowtr}[4]{#2:\: #1\rspecSym #3\:[#4]}
\newcommand{\rflowtrf}[3]{#2:\: #1\rspecSym #3} % without frame condition - frameless
\newcommand{\rHPflowtr}[6]{\proves^{#1}_{#2}\rflowtr{#3}{#4}{#5}{#6}} 
\newcommand{\rHVflowtr}[6]{\models^{#1}_{#2}\rflowtr{#3}{#4}{#5}{#6}}
\newcommand{\HPflowtr}[6]{\proves^{#1}_{#2}\flowtr{#3}{#4}{#5}{#6}} % subscripted turnstile
\newcommand{\HVflowtr}[6]{\models^{#1}_{#2}\flowtr{#3}{#4}{#5}{#6}} % subscripted turnstile
\newcommand{\Img}{\mbox{\large\textbf{`}}}
\newcommand{\allfields}{\mathsf{any}} % R.* is all fields 
\newcommand{\eff}{\varepsilon} % effect
\newcommand{\effe}{\eta}
\newcommand{\effd}{\delta} 
\newcommand{\wri}[1]{\keyw{wr}\,#1} % write effect; \wr already defined
\newcommand{\rd}[1]{\keyw{rd}\,#1} % write effect; \wr already defined
\newcommand{\rw}[1]{\keyw{rw}\,#1} % read + write abbrev
\newcommand{\ldb}{[\![} % not for direct use
\newcommand{\rdb}{]\!]} % not for direct use
\newcommand{\means}[1]{\ldb {#1}\rdb}
\renewcommand{\phi}{\varphi}
\newcommand{\rprel}[2]{\Rprel{\pi}{#1}{#2}}
\newcommand{\Rprel}[3]{#2\stackrel{#1}{\sim}#3}
\newcommand{\configr}[5]{\langle #1,\: #2|#3,\:  #4|#5\rangle}
\newcommand{\syncbi}[1]{\lfloor #1 \rfloor}
\newcommand{\Syncbi}[1]{\llfloor #1 \rrfloor} % from stmaryrd.sty
\newcommand{\splitbi}[2]{(#1|#2)}
\newcommand{\Splitbi}[2]{(#1\mid#2)} % wider version of \splitbi
\newcommand{\smallSplitSym}{\mbox{$|$}} % for use with \ifcbi and \whilecbi
\newcommand{\alignSepSym}{\mbox{\small$\bullet$}} % was \cdot
\newcommand{\ifcbi}[3]{\keyw{if}\ {#1}\ \keyw{then}\ {#2}\ \keyw{else}\ {#3}}
\newcommand{\letcombi}[3]{\keyw{let}~#1 \BE #2~\keyw{in}~#3} 
\newcommand{\whilecbiA}[3]{\keyw{while}\ {#1}\ \alignSepSym\ {#2}\ \keyw{do}\ {#3}}
\newcommand{\Fault}{\lightning}
\newcommand{\PFault}{\Fault}
\newcommand{\AFault}{\Fault}
\newcommand{\AFPFault}{\Fault}
\newcommand{\biTrans}[2]{\stackrel[#1]{#2}{\Longmapsto}}  % bicommand transitions
\newcommand{\biTranStar}[2]{\mathbin{\stackrel[#1]{#2}{\Longmapsto}{\!\!}^*}}% ref trans clos
\def\rightharpoonupfill{$\mathsurround=0pt \mathord- \mkern-6mu
  \cleaders\hbox{$\mkern-2mu \mathord- \mkern-2mu$}\hfill
  \mkern-6mu \mathord\rightharpoonup$}
\def\leftharpoonupfill{$\mathsurround=0pt \mathord\leftharpoonup \mkern-6mu
  \cleaders\hbox{$\mkern-2mu \mathord- \mkern-2mu$}\hfill
  \mkern-6mu \mathord-$}
\def\overleftharpoonup#1{\vbox{\ialign{##\crcr
      \leftharpoonupfill\crcr\noalign{\kern-1pt\nointerlineskip}
      $\hfil\displaystyle{#1}\hfil$\crcr}}}
\def\overrightharpoonup#1{\vbox{\ialign{##\crcr
      \rightharpoonupfill\crcr\noalign{\kern-1pt\nointerlineskip}
      $\hfil\displaystyle{#1}\hfil$\crcr}}}
\def\overleftrightharpoonup#1{\vbox{\ialign{##\crcr
      \leftharpoonupfill\hspace*{-.7em}\rightharpoonupfill\crcr\noalign{\kern-1pt\nointerlineskip}
      $\hfil\displaystyle{#1}\hfil$\crcr}}}
\newcommand{\Left}[1]{\overleftharpoonup{#1}} % projections on various syntax; no longer formulas
\newcommand{\Right}[1]{\overrightharpoonup{#1}}
\newcommand{\written}{\mathconst{written}}
\newcommand{\rspecSym}{\ensuremath{\mathrel{\mbox{\footnotesize$\approx\!>$}}}}
\newcommand{\una}[1]{\lbag#1\rbag} % extract the unary specifications from a context
\renewcommand{\P}{\mathcal{P}}  % ALERT renew
\renewcommand{\S}{\mathcal{S}} % ALERT renew
\renewcommand{\L}{\mathcal{L}}  % relation - local coupling
\newcommand{\Q}{\mathcal{Q}} 
\newcommand{\R}{\mathcal{R}} 
\newcommand{\Z}{\mathbb{Z}} 
\newcommand{\weave}{\hookrightarrow} % weaves to 
\newcommand{\Agr}{\ensuremath{\mathbb{A}}}
\newcommand{\lorbi}{\lor}
\newcommand{\eqbi}{\mathrel{\ddot{=}}}
\newcommand{\later}{\diamond}
\newcommand{\always}{\mathord{\mbox{\small$\Box$}}}
\newcommand{\leftF}[1]{\triangleleft #1} 
\newcommand{\rightF}[1]{\triangleright #1} % right state predicate in relational formula; was \Right
\newcommand{\Both}[1]{\ensuremath{\mathbb{B}} #1}
\newcommand{\hbra}{
  \hbox to \columnwidth{\vrule width0.3mm height 1.8mm depth-0.3mm
    \leaders\hrule height1.8mm depth-1.5mm\hfill
    \vrule width0.3mm height 1.8mm depth-0.3mm}}
\newcommand{\hket}{
  \hbox to \columnwidth{\vrule width0.3mm height1.5mm
    \leaders\hrule height0.3mm\hfill
    \vrule width0.3mm height1.5mm}}
\newcommand{\ratio}{.35}
\newenvironment{tdisplay}[1]{\medskip
    \noindent\textbf{\normalsize #1}\\[-.3ex]
    \hbra\\[-.4ex]
  }{\par\hket
  \medskip}
\newcommand{\fullstuff}[1]{{\color{blue}\protect#1}} % grey background in text
\lstdefinelanguage{RL}[ANSI]{C}{%
%  mathescape=true,
  morekeywords={requires,ensures,modifies,reads,writes,
    rgn,procedure,module,end,
    assert,assume,bool,true,false,skip,od,fi,
    result,forall,exists, then,class,method,foreach,var,
    specpublic, ghost, pure, self, private, null
},
  columns = fullflexible,
  morecomment=[is]{//--}{//--},
   basicstyle=\rmfamily,
   identifierstyle=\textit,
  breaklines=true,
  literate= 
    {\\eqdef}{{$\quad\eqdef\quad$}}1
    {\\cat}{{$\cat$}}1
    {\\geq}{{$\geq$}}1
    {\\imp}{{$\imp\:$}}1
    {\\neq}{{$\neq$}}1
    {\\Agr}{{$\Agr\:$}}1
    {\\Img}{{$\Img\!\!$}}1
    {\\union}{{$\union\:$}}1
    {\\in}{{$\in\:$}}1
    {\\notin}{{$\notin\:$}}1
    {\\land}{{$\land\:$}}1
    {\\neg}{{$\neg$}}1 
    {\\lor}{{$\lor$}}1
    {\\forall}{{$\forall{}$}}1
    {\\exists}{{$\exists{}$}}1
    {\\disj}{{$\#\:$}}1
    {\\eqbi}{{$\eqbi\:$}}1
    {\\prime}{{$^\prime$}}1         %treat prime as such, not as string delimiter
}
\begin{document}

\maketitle

\begin{abstract}
Relational properties arise in many settings: relating two versions of a program that use different data representations, noninterference properties for security, etc.
The main ingredient of relational verification, relating aligned pairs of intermediate steps, 
has been used in numerous guises,
but existing relational program logics are narrow in scope.
This paper introduces a logic based on novel syntax
that weaves together product programs to express alignment of control flow points at which relational formulas are asserted.  Correctness judgments feature hypotheses with relational specifications, discharged by a rule for the linking of procedure implementations. The logic supports reasoning about program-pairs containing both similar and dissimilar control and data structures.
Reasoning about dynamically allocated objects is supported by a frame rule based on frame conditions amenable to SMT provers.
We prove soundness and sketch how the logic can be used for 
data abstraction,
loop optimizations, 
and secure information flow.
 \end{abstract}

\section{Introduction}\label{sec:intro}

Relational properties are ubiquitous.  Compiler optimizations, changes of data representation, and refactoring involve two different programs.  Non-interference (secure information flow) is a non-functional property of a single program; it says the program preserves a ``low indistinguishability'' relation~\cite{VolpanoSI96}.
Many recent works deal with one or more of these applications, using relational logic and/or some form of product construction that reduces the problem to partial correctness, 
though mostly for 
simple imperative programs. This paper advances extant work by providing a relational logic for local reasoning about heap data structures and programs with procedures.

To set the stage, first consider the two simple imperative programs:
\labf{C}{
x:=1;\ \WHILE\; y>0\;\DO\;x:=x * y;\ y:=y-1\;\OD
}
\vspace*{-2ex}
\labf{C'}{
\!\!\!
\begin{array}[t]{l}
x:=1;\ y:=y-1; \;
\WHILE\; y\geq 0\;\DO\; x:=x*y+x;\ y:=y-1\;\OD
\end{array}
}
Both $C$ and $C'$ change $x$ to be the factorial of the initial value of $y$, or to 1 if $y$ is initially negative.  
For a context where $y$ is known to be positive and its final value is not used, 
we could reason that they are interchangeable by showing both 
% DO NOT CHANGE SPACING -- it's to teach the reader our precedences
\begin{equation}\label{eq:functCorr} 
C\,:\;\: y = z \land y \geq 0 \;\leadsto\; x=z! \quad \mbox{and} 
\quad
C'\,:\;\: y = z \land y \geq 0 \;\leadsto\; x=z!
\end{equation}
This is our notation for partial correctness judgments, with evident pre- and postconditions, for $C$ and $C'$.  %, applied here to terminating programs.
It is not always easy to express and prove functional correctness, which motivates a
less well developed approach to showing interchangeability of the examples.
The two programs have a relational property which we write as
% DO NOT CHANGE SPACING -- it's to teach the reader our precedences
\begin{equation}\label{eq:factEq}
 \splitbi{C}{C'}\,:\;\: \Both{(y\geq 0)}\land y\eqbi y \;\rspecSym\; x \eqbi x
\end{equation}
This relational correctness judgment says that a pair of terminating executions of $C$ and $C'$, from a pair of states which both satisfy $y\geq 0$ %(written $\Both{(y \geq 0)}$) 
and which agree on the value of $y$, yields a pair of final states that agree on the value of $x$.  
The relational formula $x\eqbi x$ says that the value of $x$ in the left state is the same as its value in the right state.

Property (\ref{eq:factEq})
is a consequence of functional correctness (\ref{eq:functCorr}),
but there is a direct  way to prove it.
Any pair of runs, from states that agree on $y$, can be aligned in such a way that both $x\eqbi x$ and $y\eqbi y+1$ hold at the aligned pairs of intermediate states.  The alignment is almost but not quite step by step, owing to the additional assignment in $C'$.
The relational property is more complicated than partial correctness, in that it involves pairs of runs. 
On the other hand the requisite intermediate assertions are much simpler; they do not involve $!$ which is recursively defined. 
Prior work showed 
such assertions are amenable to automated inference (see Section~\ref{sec:related}).

Despite the ubiquity of relational properties and recent logic-based or product-based approaches to reasoning with them (see Section~\ref{sec:related}), simple heap-manipulating examples like the following remain out of reach: 
\labf{C''}{
xp:=\NEW\ Int(1);\ \WHILE\; y>0\;\DO\; xp.set(xp.get() * y);\ y:=y-1\;\OD;\ x:=xp.get()
}
This Java-like program uses get/set procedures acting on an object that stores an integer value, and
$\splitbi{C}{C''}$ satisfies the same relational specification 
as (\ref{eq:factEq}).
This code poses significant new challenges.  
It is not amenable to product reductions that rely on renaming of identifiers to encode
two states as a single state: encoding of two heaps in one can be done, but at the cost of
significant complexity~\cite{Naumann06esorics} or exposing an underlying heap model below the level of abstraction of the programming language.  
Code like $C''$ also needs to be linked with implementations of the procedures it calls.
For reasoning about two versions of a module or library, 
relational hypotheses are needed, and calls need to be aligned to enable use of such hypotheses.

Floyd~\cite{Floyd67} articulates the fundamental method  of inductive assertions for partial correctness: establish that certain conditions hold at certain intermediate steps of computation, designating those conditions/steps by associating formulas with control flow points.
For relational reasoning, pairs of steps need to be aligned and it is again natural to designate those in terms of points in control flow.  
Alignment of steps has appeared 
in many guises in prior work, often implicit in simulation proofs 
but explicit in a few works~\cite{ZuckPGBFH05,BNR08secpriv,KovacsSF13}.

\emph{First contribution:}
In this paper we embody the alignment principle in a formal system at the level of abstraction of the programming language---as Hoare logic does for the inductive assertion 
method---with sufficient generality to encompass many uses of relational properties
for programs including procedures and dynamically allocated mutable objects.
Our logic (Section~\ref{sec:relLogic}) manifests the reasoning principle directly, in structured syntax.
It also embodies other reasoning principles, such as frame rules,
case analysis, % (the disjunction rule in Hoare logic),
and hypothetical specifications for procedures.
The rules encompass relations between both similarly- and differently-structured programs,
and handle partially and fully aligned iterations.
This achievement 
brings together ideas from many recent works (Section~\ref{sec:related}), 
together with two ingredients we highlight as contributions in their own right.

\emph{Second contribution:}
Our relational assertion language (Section~\ref{sec:relForm}) can describe agreement between unbounded pointer structures, allowing for differences in object allocation, as is needed to specify
noninterference~\cite{BanerjeeNaumann03b} and for simulation relations~\cite{BanerjeeNaumann02c} in languages like Java and ML where references are abstract.  Such agreements are
expressed without the need for recursively defined predicates, and the assertion language has a direct translation to SMT-friendly encodings of the heap.
(For lack of space we do not dwell on such encodings in this paper, 
which has a foundational focus,
but see~\cite{RosenbergBN12,BanerjeeNN15}.)

\emph{Third contribution:}
We introduce a novel form of ``biprogram'' (Section~\ref{sec:biprogramSem}) that makes explicit the reasoner's choice of alignments.
A biprogram run models an aligned pair of executions of the underlying programs.
%% This is confirmed by Theorem~\ref{thm:biprogram-soundness} that says how biprogram properties correspond to relational properties of ordinary programs. 
The semantics of biprograms involves a number of subtleties:
To provide a foundation for extending the logic with encapsulation (based on ~\cite{RegLogJrnII}),
we need to use small-step semantics---which makes it difficult to prove
soundness of linking, even in the unary case~\cite{RegLogJrnII}.
For this to work we need to keep the semantics deterministic and to deal with 
semantics of hypotheses in judgments.

Section~\ref{sec:unaryLogic} provides background
and Section~\ref{sec:overview} is an overview of the logic using examples.
%\iffull
{\color{blue} 
This document is the technical report to accompany a paper to appear in FSTTCS 2016.
It has appendices and also some additional notes in the main body of the paper 
(which appear blue on color devices).
Sections~\ref{app:unary}--\ref{app:relProofRules} develop the syntax and semantics of the logic,
Sections~\ref{app:dissonant}--\ref{app:stack} develop examples,
Section~\ref{app:consistency} proves a theorem that says how biprogram runs model aligned pairs of ordinary runs,
and Section~\ref{app:sound} is on soundness of the logic.
There is a table of contents at the end of the document, 
}
%\fi 

\section{Background: synopsis of region logic}\label{sec:unaryLogic} % 1p

For reasoning about the heap, separation logic is very effective,
with modal operators that implicitly describe heap regions.
But for relations on unbounded heap structures at the Java/ML level of abstraction we need explicit means to refer to heap regions, as in the dependency logic of Amtoft et al.~\cite{AmtoftBB06}.
Our relational logic is based on an underlying unary logic dubbed ``region logic'' (\dt{RL}),
developed in a series of papers~\cite{RegLogJrnI,RegLogJrnII,BanerjeeNN15} 
to which we refer for rationale and omitted details.
RL is a Hoare logic augmented with some side conditions (first order verification conditions) which facilitate local reasoning about frame conditions~\cite{RegLogJrnI}
in the manner of dynamic frames~\cite{kas:fac11,DBLP:conf/lpar/Leino10}.
In the logic such reasoning hinges on a frame rule. 
In a verifier, framing  
can be done by the VC-generator, optionally guided by annotation~\cite{RosenbergBN12}.
Stateful frame conditions also support an approach to encapsulation that validates a second order frame rule (at the cost of needing to use small-step semantics)~\cite{RegLogJrnII}.
Read effects enable the use of pure method calls in assertions and 
in frame conditions~\cite{BanerjeeNN15} and are useful for proving some
equivalences, like commuting assignments, that hold in virtue of disjointness of effects~\cite{Benton:popl04}.

The logic is formalized for imperative programs with first order procedures and dynamically allocated mutable objects (records), see Fig.~\ref{fig:bnf}. 
% (omitting classes, i.e., named record types).
As in Java and ML, references are distinct from integers; they can be tested for equality but there is no pointer arithmetic.
Typing of programs is standard. 
In specifications we use ghost\fullstuff{\footnote{\fullstuff{We do not formalize a distinction between ghost and ordinary state.}}}
%Uninitialized ghost variables can be used in correctness judgments to play the role of specification-only variables so we omit the latter.
variables and fields of type $\Region$.
A \dt{region} is a set of object references, which may include the improper $\NULL$ reference.

\begin{figure*}[t]
  \begin{small}
\hfill
\(\begin{array}{l@{\hspace{.2em}}l@{\hspace{.3em}}r@{\hspace{.4em}}l}
  \multicolumn{4}{l}{m\in ProcName\hfill x,y,r\in VarName \hfill f,g\in FieldName \hfill K\in DeclaredClassNames
 } \\[1ex]
\mbox{(Types)} & T &\gassym&
\INT \gmid \BOOL \gmid \Region \gmid K % \OBJECT \gmid K           
\\[.2ex]
\mbox{(Program Expr.)}& E &\gassym& x \gmid c \gmid \NULL \gmid {E\oplus E } 
\quad \mbox{where $c$ is in $\Z$ and $\oplus$ is in $\{=,+,-,*,\geq,\land,\neg,\ldots\}$ }\\[.2ex]
\mbox{(Region Expr.)} & G  & \gassym & x \gmid \Emp \gmid \sing{E} \gmid G\Img f \gmid G \otimes G  
  \quad \mbox{where $\otimes$ is in $\{\cup,\cap,\setminus\}$ }\\[.2ex]
\mbox{(Expressions)} & F & \gassym & E \gmid G  \\[.2ex] 
\mbox{(Atomic comm.)} & A & \gassym & \skipc \gmid m() \gmid x := F \gmid x := \new{K} \gmid x := x.f \gmid x.f := x  \\[.2ex]
\mbox{(Commands)} & C &\gassym& A \gmid \letcom{m}{C}{C} \gmid \ifc{E}{C}{C} \gmid \whilec{E}{C} \gmid \seqc{C}{C} \\[.2ex] 
%&&&\gmid \varblock {x:T}{C}  \\[.2ex] % local vars are needed  for A-normal

\mbox{(Biprograms)} & CC &\gassym& \splitbi{C}{C} \gmid \syncbi{A} 
  \gmid \letcombi{m}{\splitbi{C}{C}}{CC} 
% \gmid \varblockbi{x:T}{CC} 
\gmid \seqc{CC}{CC} \\[.2ex]
&&&
\gmid \ifcbi{E\smallSplitSym E}{CC}{CC} 
\gmid \whilecbiA{E\smallSplitSym E}{\P\smallSplitSym \P}{CC}
\end{array}\)
%\hfill
\end{small}
\vspace*{-1ex}
\caption{Programs and biprograms. 
Assume each class type $K$ has a declared list of fields, $\ol{f}:\ol{T}$.
Biprograms are explained in Section~\ref{sec:overview}.
%The loop alignment guards $\P,\P'$ are relational formulas, see grammar (\ref{eq:relFormulas}).
% Future
%Syntax sugar: $\whilecbi{E\smallSplitSym E'}{CC}$
%abbreviates $\whilecbiA{E\smallSplitSym E'}{\False\smallSplitSym \False}{CC}$.
}
\label{fig:bnf}
\end{figure*}

A \dt{specification} $\flowty{P}{Q}{\eff}$ is comprised of precondition $P$, postcondition $Q$, and frame condition $\eff$.  
Frame conditions include both read and write effects:
\[ \eff \gassym  \rd{x} \gmid \rd{G \Img f} \gmid \wri{x} \gmid \wri{G \Img f} 
        \gmid \eff,\eff \gmid  (empty) \]
The form $\rd{G\Img f}$ means the program may read locations $o.f$ where $o$ is a reference in the region denoted by expression $G$.
We write $\rw{x}$ to abbreviate the composite effect $\rd{x},\wri{x}$, and omit repeated tags:
$\rd{x,y}$ abbreviates $\rd{x},\rd{y}$.
Predicate formulas $P$ include standard first order logic with equality, 
region subset ($G\subseteq G$), and 
the  ``points-to'' relation $x.f=E$,
which says $x$ is non-null and the value of field $f$ equals $E$.
A \dt{correctness judgment}  has the form 
\(
\Phi\proves C: \flowty{P}{Q}{\eff}
\)
where the \dt{hypothesis context} $\Phi$ 
maps procedure names to specifications. 
In $C$ there may be \dt{environment calls} to procedures bound by $\keyw{let}$ inside $C$, and also \dt{context calls} to procedures in  $\Phi$.
%The initial examples $C$ and $C'$ in Section~\ref{sec:intro} both have specification 
%$y = z \land y \geq 0 \leadsto x=z!\; [\rw x, \rw y]$ and the hypothesis context is empty.
%, and the frame condition says that variables $x, y$ may be read and written. 
%
The form $G\Img f$ is termed an \dt{image expression}.  
For an example of image expressions, consider this command which sums the elements of a singly-linked null-terminated list, ignoring nodes for which a deletion flag, $del$, has been set.
\labf{C_1}{
%\!\!\!\begin{array}[t]{l@{\,}l}
s:=0; 
\WHILE\; p\neq \NULL\; \DO\;
       \IF\; \neg p.del\; \THEN\; s:=s+p.val\;\FI;\;  p:=p.nxt\;\OD
}
For its specification we use ghost variable $r:\Region$ to contain the nodes. Its being closed 
under $nxt$ is expressed by $r\Img nxt \subseteq r$ in this specification: 
%Then a specification of the command is
\begin{equation}
\label{eq:sumspec}
p \in r \land r\Img nxt \subseteq r 
\; \leadsto \;
s = sum(listnd(\keyw{old}(p)))\;[ 
  \rw{s,\,p},\; \rd{r,\, r\Img val,\, r\Img nxt,\, r\Img del} ] 
\end{equation}
The r-value of the image expression $r\Img nxt$ is the set of values of $nxt$ fields
of the objects in $r$.
In frame conditions, expressions are used for their l-values.
In this case, the frame condition uses image expressions to say that for any object $o$ in $r$, locations $o.val, o.nxt, o.del$ may be read.
The frame condition also says that variables $s$ and $p$ may be both read and written.
Let function $listnd$ give the mathematical list of non-deleted values.\fullstuff{\footnote{\fullstuff{We do not formalize $\keyw{old}$ expressions in the logic, but our uses of them can be desugared using ghost variables.}}}

Some proof rules in RL have side conditions which are first order formulas on one or two states. 
\fullstuff{In this paper we treat these subsidiary judgments semantically. (Cognoscenti will have no difficulty thinking of ways to encode the subsidiary judgments as $\forall$-formulas amenable to SMT, for usual representations of program state.)
}
One kind of side condition, dubbed the ``frames judgment'', delimits the part of state on which a formula depends (its read effect). %\footnote{%
RL's use of stateful frame conditions provides for a useful frame rule, and even second order frame rule~\cite{OHearnYangReynoldsInfoToplas,RegLogJrnII}, but there is a price to be paid.  Frame conditions involving state dependent region expressions are themselves susceptible to interference by commands.  That necessitates side conditions, termed ``immunity'' and ``read-framed'', in the proof rules for sequence and iteration \cite{RegLogJrnII,BanerjeeNN15}.
%} %footnote
The frame rule allows to infer from 
$\Phi\HPflowtr{}{}{P}{C}{Q}{\eff}$
the conclusion
$\Phi\HPflowtr{}{}{P\land R}{C}{Q\land R}{\eff}$
provided that $R$ is framed by read effects $\effe$ (written $\fra{\effe}{R$})
for locations disjoint from those writable according to $\eff$ (written $\ind{\effe}{\eff}$).

In keeping with our goal to develop a comprehensive deductive system, our unary and relational logics include a rule for discharging hypotheses, expressed in terms of the linking construct.
Here is the special case of a single non-recursive procedure.
\[
\inferrule*[left=Link]{ 
\sflowtr{R}{m}{S}{\effe}\HPflowtr{}{}{P}{ C }{Q}{\eff} \\
\HPflowtr{}{}{R}{B}{S}{\effe}
 }{
\HPflowtr{}{}{P}{ \letcom{m}{B}{C} }{Q}{\eff}
}
\]

\section{Overview of the relational logic}\label{sec:overview}

This section sketches highlights of relational reasoning about a number of illustrative examples,
introducing features of the logic incrementally.
Some details are glossed over.

We write $\splitbi{C}{C'} : \Q\rspecSym \R$ to express that a pair of programs $C,C'$ satisfies the relational contract with precondition $\Q$ and postcondition $\R$, 
leaving aside frame conditions for now.
The judgment constrains executions of $C$ and $C'$ from pairs of states related by $\Q$.
(For the grammar of relational formulas, see (\ref{eq:relFormulas}) in Section~\ref{sec:relForm}.)
It says neither execution faults (e.g., due to null dereference), and if both terminate then the final states are related by $\R$.  Moreover no context procedure is called outside its precondition.
(We call this property the $\forall\forall$ form, for contrast with refinement properties of $\forall\exists$ form.)

Assume $f,g$ are pure functions.
The programs 
\[
C_0 \:\eqdef\:  x := f(z); y := g(z) \qquad C_0' \:\eqdef\: y := g(z); x := f(z)
\]
are equivalent.
%(Distinct variables $x,y,z$ are distinct locations.)
Focusing on relevant variables, the equivalence can be specified as
\begin{equation}\label{eq:reorder}
\Splitbi{C_0}{C_0'} : \: z\eqbi z \rspecSym x\eqbi x \land y\eqbi y 
\end{equation}
which  can be proved as follows.
Both $C_0$ and $C_0'$ satisfy
\( true \leadsto x=f(z) \land y=g(z) \),
which directly entails that 
\( 
\Splitbi{C_0}{C_0'} : \: \Both{true} \rspecSym  \Both{(x=f(z) \land y=g(z))}
\) 
by an embedding rule.  % (\rn{rEmb}) in Fig.~\ref{fig:proofrulesR}.
%that lifts unary judgments to relational ones with rectangular pre- and post-conditions,
%i.e., unary predicates on the left or on the right state.
The general form of embedding combines two different unary judgments, with different specifications, using relational formulas that assert a predicate on just the left 
($\leftF{}$) or right ($\rightF{}$) state.
So  $\Both{P}$ is short for $\leftF{P}\land \rightF{P}$.
%The rules are in Fig.~\ref{fig:proofrulesR}.
Since $z$ is not written by $C_0$ or $C_1$, we can introduce $z\eqbi z$ using the relational frame rule,
to obtain
\( 
\Splitbi{C_0}{C_0'}  : \: z\eqbi z \rspecSym \Both{(x=f(z) \land y=g(z))} \land z\eqbi z
\).
This yields (\ref{eq:reorder}) using the relational rule of consequence %(\rn{rConseq}) 
with the two valid relational assertion schemas
$u\eqbi u' \land \leftF{(u=v)} \land \rightF{(u'=v')} \imp v\eqbi v'$
and 
$z\eqbi z \imp f(z) \eqbi f(z)$.

For the factorial example $\splitbi{C}{C'}$ in Section~\ref{sec:intro},  
we would like to align the loops and use the simple
relational invariant $x\eqbi x \land y \eqbi y+1$.
We consider the form $\splitbi{C}{C'}$ as a biprogram 
which can be rewritten to equivalent forms using the 
\dt{weaving} relation which preserves the underlying programs 
but aligns control points together so that relational assertions can be used. (A minor difference from most other forms of product program is that we do not need to rename apart the variables on the left and right.)
The weaving relation is given in Section~\ref{sec:biprogramSem}.
In this case we weave to the form
% same as \label{eq:factWoven} in appendix
\[\begin{array}{l}
\splitbi{x:=1\,}{\,x:=1; y:=y-1};\,
\WHILE\;y>0 \mid y \geq 0\; \DO \;
\Splitbi{x:= x * y }{ x:= x * y + x}; \, \syncbi{y:= y-1} % \, \OD
\end{array}
\]
This enables us to assert the relational invariant at the beginning and end of the loop bodies.
Indeed, we can also assert it just before the last assignments to $y$.
The rule for this form of loop requires the invariant to imply equivalence of the two loops' guard conditions, 
which it does:
$x\eqbi x \land y \eqbi y+1 \imp (y>0  \eqbi y \geq 0)$.
For a biprogram of the \dt{split} form $\splitbi{C}{C'}$,
the primary reasoning principle is the lifting of unary judgments about $C$ and $C'$.
For an atomic command $A$, the \dt{sync} notation $\syncbi{A}$ is an alternative to $\splitbi{A}{A}$ 
that indicates its left and right transition are considered together.
This enables the use of relational specifications for procedures,
and a relational principle for object allocation.
For an ordinary assignment, sync merely serves to abbreviate, as in 
$\syncbi{y:= y-1}$ above.

The next example involves the heap and it also involves a loop that is ``dissonant'' in the sense that we do not want to align all iterations---that is, \emph{alignment is ultimately about traces,} not program texts.
Imagine the command $C_1$ from Section~\ref{sec:unaryLogic} is run on a list from which secret values have been deleted. 
To specify that no secrets are leaked, we use the relational judgment
\( \splitbi{C_1}{C_1}: \: listnd(p) \eqbi listnd(p) \rspecSym s \eqbi s \)
which says: Starting from any two states containing the same non-deleted values, terminating computations 
agree on the sums.  
%Here we are dealing with heap structure by data abstraction. [yes but distracting thought?]
The judgment can be proved by showing the functional property that $s$ ends up as 
$sum(listnd(\keyw{old}(p)))$.
But we can avoid reasoning about list sums and prove this relational property by aligning some of
the loop iterations in such a way that $listnd(p)\eqbi listnd(p) \land s\eqbi s$ holds at every aligned pair, that is, it is a relational invariant.   
Not every pair of loop iterations should be aligned: 
When $p.del$ holds for the left state but not the right,
a left-only iteration maintains the invariant, and \emph{mutatis mutandis} when $p.del$ holds only on the right.
To handle such non-aligned iterations we use a novel syntactic annotation dubbed \dt{alignment guards}. The idea is that the loop conditions are in agreement, and thus the iterations are synchronized, unless one of the alignment guards hold---and then that iteration is  unsynchronized but the relational invariant must still be preserved.
We weave $\splitbi{C_1}{C_1}$ to the form
\begin{equation}\label{eq:C1woven}
\syncbi{s:=0};
\!\!
\begin{array}[t]{l} 
\WHILE\ p\neq\NULL\mid p\neq\NULL\ \alignSepSym\ \leftF{(p.del)} \mid \rightF{(p.del)}\\
\quad \DO \; 
\IF\; \neg p.del \mid \neg p.del \; 
   \THEN\; \syncbi{s:=s+p.val} \;\FI;\; \syncbi{p:=p.nxt}\;\OD
\end{array}
\end{equation}
with alignment guards $\leftF{p.del}$ and $\rightF{p.del}$.
The rule for the while biprogram %(\rn{rWh}) 
has three premises for the loop body: for executions on the left (resp.\ right) under alignment guard $\leftF{p.del}$ (resp.\ $\rightF{p.del}$) and for simultaneous executions when neither of the alignment guards hold. 
Each premise requires the invariant to be preserved.
\fullstuff{The loop body uses the synchronized conditional which requires agreement on the guard conditions; indeed, agreement does hold when neither of the loop alignment guards hold.}

The final example is a change of data representation.  It illustrates dynamic allocation and frame conditions, as well as procedures and linking.  
A substantive example of this sort would be quite lengthy, so we contrive a toy example to provide hints of the issues that motivate 
various elements of our formal development.
Our goal is to prove a conditional equivalence between these programs,
whose components are defined in due course.
\[ %\label{eq:cc'}
C_4  \eqdef \letcomMulti{push(x:\INT)=B}{Cli} \qquad
C_4' \eqdef \letcomMulti{push(x:\INT)=B'}{Cli}
\]
These differ only in the implementations $B,B'$ of the stack interface (here stripped down to a single procedure), to which the client program $Cli$ is linked.
For modular reasoning, the unary contract for $push$ should not expose details of the data representation.
We also want to avoid reliance on strong functional specifications---the goal is equivalence of the two versions, not functional correctness of the client.
The client, however, should respect encapsulation of the stack representation,
to which end frame conditions are crucial.
%\dn{more flexibly than has been done in SL? what about Hofmann et al?}.  
A simple pattern is for contracts to expose a ghost variable $rep$ (of type $\Region$) for the set of objects considered to be owned by 
a program module.
Here is the specification for $push$, with parts named for later reference. Let $size$ and $rep$ be \dt{spec-public}, i.e., they can be used in public contracts but not in client code~\cite{LeavensM07}.
\begin{equation}
\label{eq:push}
\begin{array}[t]{l@{\,}l}
push(x:\INT): \, R \leadsto S [\effe] \mbox{ where}
& R \eqdef size < 100 \\
& S \eqdef size = \keyw{old}(size) + 1 \\
& \effe \eqdef \rw{rep,size,rep\Img\allfields} % \lloc
\end{array}
\end{equation}
Variables $rep$ and $size$ can be read and written 
(keyword $\keyw{rw}$) by $push$. 
This needs to be explicit, 
even though client code cannot access them, because reasoning about client code involves them.
The notation $rep\Img\allfields$ designates all fields of objects in $rep$; these too may be read and written.
The specification makes clear that calls to $push$ affect the encapsulated state, while not exposing details.
Here is one implementation of $push(x)$.
\labf{B}{
\!\!\!
\begin{array}[t]{l}
top := \NEW\ Node(top,x);\ rep := rep \union \{top\};\ size\increment
\end{array}
}
Variable $top$ is considered internal to the stack module, so it need not appear 
in the frame condition.
The alternate implementation of $push$ replaces $top$ by 
module variables $free: \INT;\ slots: String[\: ];$.
\labf{B'}{
\!\!\!
\begin{array}[t]{l}
\IF\ slots = \NULL\ \THEN\ slots := \NEW\ String[100];\ rep := rep \union \{slots\};\ free := 0\ \FI; \\
slots[free\increment] := x;\  size\increment
\end{array}
}
Unary correctness of the two versions is proved using module invariants
 \labf{I}{
\!\!\!
\begin{array}[t]{l}
(top=\NULL \land size=0) \; \lor 
(top\in rep \land rep\Img nxt \subseteq rep \land size=\mathconst{length}(\mathconst{list}(top))) 
   \end{array}
}

\vspace*{-2ex}

 \labf{I'}{(slots=\NULL\land  size=0) \lor (slots\in rep \land size = free) }
Here $\mathconst{list}(top)$ is the mathematical list of values reached from $top$.
Recall that in an assertion the expression $rep\Img nxt$ is the image of set $rep$ under the $nxt$ field, i.e., the set of values of $nxt$ fields of objects in $rep$.  
The condition $rep\Img nxt \subseteq rep$ says that $rep$ is closed under $nxt$.
This form is convenient in using ghost code to express shapes of data structures without recourse to reachability or other inductive predicates~\cite{RegLogJrnI,RosenbergBN12}.

%% It can be proved that $B:\: R \land I \leadsto S \land I [\effe]$
%% and $B':\: R \land I' \leadsto S \land I' [\effe]$.
%% These unary judgments do not match the proposed public specification 
%% (\ref{eq:push}).
%% %with respect to which clients should be verified.  That is, 
%% If we wanted to prove a correctness judgment for $Cli$,
%%  we would like to use a hypothetical judgment
%% (for some $P,Q$)
%% \begin{equation}\label{eq:pushUnarySpec}
%%   push(x): \: R\leadsto S [\effe]; pop: \: \ldots \proves Cli: \: P\leadsto Q 
%% \end{equation}
%% But we need to verify $B$ with respect to $R \land I \leadsto S \land I [\effe]$
%% and $B'$ using $I'$.
%% %% The hypothesis can be discharged, to verify the linked program 
%% %% $\letcomMulti{push(x:\INT)=B; pop() = \ldots}{Cli}$,
%% %% using the mismatched judgment $B:\: R \land I \leadsto S \land I [\effe]$.
%% Hiding $I$ and $I'$ requires a full account of encapsulation, 
%% in a second order frame rule~\cite{RegLogJrnII}, but that is beyond the scope of this paper.
%% We are not verifying functional correctness of $Cli$.
%% We return to this issue just before the summary of this Section.

As a specific $Cli$,  we consider one that
% respects the limitation to stack size 100. In addition, it 
allocates and updates a node of the same type as used by the list implementation; this gets assigned to a global variable $p$.  
\labf{Cli}{ 
   push(1);\ p := \NEW\ Node(\NULL,2);\ 
   p.val := 3;\ push(4) 
}
Having completed the definitions of $C_4,C_4'$ we can ask:
In what sense are $C_4,C_4'$ %in (\ref{eq:cc'}) 
equivalent?
A possible specification for $\splitbi{C_4}{C_4'}$ 
requires agreement on $size$ and ensures agreement on $size$ and on $p$ and $p.val$.
However, the latter agreements cannot be literal equality: following 
the call $push(1)$, one implementation has allocated a $Node$ whereas the array implementation has not.  Depending on the allocator, different references may be assigned to $p$ in the two executions.
The appropriate relation is ``equivalence modulo renaming of references''~\cite{AmtoftBB06,BanerjeeNaumann02c,BanerjeeNaumann03b,BentonHN14,BentonKBH07}.
For region expression $G$ and field name $f$,
we write $\Agr G\Img f$ for the \dt{agreement} relation that says 
there is a partial bijection on references between the two states,
that is total on the region $G$, and for which corresponding $f$-fields are equal.
%(compare $L.f\ltimes$ in~\cite{AmtoftBB06}). 
The notation $\Agr G\Img\allfields$ means agreement on all fields.
In the present example, the only region expression used is the singleton region $\sing{p}$ containing the reference denoted by $p$.  

%% A possible specification for $\splitbi{C_4}{C_4'}$, leaving aside frame conditions, is
%% \begin{equation}\label{eq:ff}
%% \Both{(size=0)} 
%% \rspecSym \Agr(p,size) \land \Agr \sing{p}\Img \allfields
%% \end{equation}
%% Here we use shorthand $\Agr(p,size)$ for $p\eqbi p \land size\eqbi size$.
%% In the case of reference values, $\eqbi$ is interpreted modulo bijective renaming,
%% whereas for integers like $size$ it means identity.
%% We do not need initial agreement on $size$: that follows from 
%% $\Both{(size=0)}$. 

To prove a relational judgment 
%like (\ref{eq:ff}) 
for $\splitbi{C_4}{C_4'}$
we need suitable relational judgments for 
$\splitbi{B}{B'}$ for the implementations of $push$.
It is standard~\cite{Hoare:data:rep} that they should 
preserve a ``coupling relation''
that connects the two data representations and also includes 
the data invariants for each representation.
For the example, the connection is that 
the sequence of elements reached from $top$, written $\mathconst{list}(top)$, is the same as the reversed sequence of elements in $slots[0..free-1]$. Writing $rev$ for reversal,
we define the coupling and specification %relation $\L$ by
\[ 
\L\eqdef \leftF{I} \land \rightF{I'} \land LtR 
%\hfill
\qquad
LtR \eqdef 
list(top) \eqbi 
            rev(\langle\,\rangle \,\keyw{if}\, slots = \NULL \,\keyw{else}\, slots[0..free-1])
\]
\vspace*{-5ex}
\begin{equation}\label{eq:ggx}
\splitbi{C_4}{C_4'} : \;
\Both{(size=0)} \land \L
\:\rspecSym \:
% \Agr(p,size)  % NOTE avoid abbreviation in this version of paper, but use them in long one
p\eqbi p \land size\eqbi size
\land \Agr \sing{p}\Img \allfields \land \L
\end{equation}
%
%% After allocation, the refperm on the two initial states may be extended to relate the freshly allocated cells in the two final states. For this purpose we use the modal operator 
%% $\later$ as in the formula $\later{P}$ (read ``later $\P$''). For example, after the 
%% assignment to $log$, $\later{\Agr log}$ holds. This means that for states $\sigma, \sigma'$ and refperm $\pi$, the formula $\Agr log$ holds for some extension, say, $\rho$ of $\pi$ such that the freshly allocated $log$ cells are $\rho$-related.
%
We now proceed to sketch a proof of (\ref{eq:ggx}).
First, we weave $\splitbi{C_4}{C_4'}$ to 
%\begin{equation}\label{eq:hh} 
\(  \letcombi{push(x:\INT)}{\splitbi{B}{B'}}{ \Syncbi{Cli} }
\).
Here $\Syncbi{Cli}$ abbreviates the fully aligned biprogram
\( \syncbi{push(1)}; 
  \syncbi{p := \NEW\ Node(\NULL,2)};
  \syncbi{p.val := 3}; 
  \syncbi{push(4)} 
\). 
This biprogram %(\ref{eq:hh}) 
simultaneously links the procedure bodies on left and right, and 
aligns the client.
Using $\syncbi{p := \NEW\ Node(\NULL,2)}$ enables use of a
relational postcondition that says the objects are in agreement.
Using $\syncbi{push(4)}$ enables use of $push$'s relational specification.

Like in unary RL, the proof rule for linking
has two premises:
one says the bodies $\splitbi{B}{B'}$ satisfy their specification, 
the other says $\Syncbi{Cli}$ satisfies the overall specification 
under the hypothesis that $push$ satisfies its spec
(see \rn{rLink} in Fig.~\ref{fig:proofrulesR}). 
This hypothesis context gives $push$ a relational specification,
using $\Agr x$ as sugar for $x\eqbi x$:
%\footnote{The spec doesn't need to say $\rd{x}$, cf.~\cite{BanerjeeNN15}.}
\label{disp:Phi:pg}
\labf{\Phi}{%\label{eq:ee}
push(x): \;
\begin{array}[t]{l}
   \Both{R}
   \land \Agr size \land \Agr x
   \land \L
\:\rspecSym\:
   \Both{S} 
   \land \Agr size
   \land \L
% \\ {}%TeX
\; [\effe, \rw{top} % \lloc
\mid 
\effe, \rw{slots,free}] % \lloc
\end{array}
}%labf
Here $\effe$ is the effect $\rw{rep,size,rep\Img\allfields}$
in the original specification~(\ref{eq:push}) of $push$.

The specification in $\Phi$ is not simply a relational lift of $push$'s public specification~(\ref{eq:push}). 
%or a more complicated client it would be nontrivial to prove that the client code preserves $\L$. 
Invariants $I$ and $I'$ on internal data structures should not appear in $push$'s API:
they should be hidden, because the client should not touch the 
internal state on which they depend.  Effects on module variables (like $top$) should also be hidden.
This kind of reasoning is the gist of second order framing~\cite{OHearnYangReynoldsInfoToplas,RegLogJrnII}.
The relational counterpart is a relational second order frame rule which says that any client that respects encapsulation will preserve $\L$.
Hiding is the topic of another paper, for which this one is laying the groundwork (see Section~\ref{sec:discuss}).

\section{Relational formulas}\label{sec:relForm} 

The relational assertion language is essentially syntax for a first order structure comprised of the variables and heaps of two states, together with a refperm connecting the states.
\begin{equation}\label{eq:relFormulas}
\begin{array}{ll}
 \P \, \gassym & 
    F\eqbi F \gmid \Agr G\Img f \gmid \later\P \gmid \leftF{P} \gmid \rightF{P} \gmid
 \P\land\P \gmid \P\imp\P % \gmid \P\lorbi\P 
 \gmid \all{x\smallSplitSym x':K}{\P}
 \end{array}
\end{equation}
A \dt{refperm} is a type-respecting partial bijection from references allocated in one state to references allocated in the other state.
For use with SMT provers, a refperm can be encoded by a pair of maps with universal formulas stating they are inverse~\cite{BanerjeeNN15}. 
The syntax for relations caters for dynamic allocation
by providing primitives such as $F\eqbi F'$ that says the value of $F$ in the left state
equals that of $F'$ in the right state, modulo the refperm.
In case of integer expressions, this is ordinary equality.  
For reference expressions, %(type $K$ for some class $K$), 
it means the two values are related by the refperm.  
For region expressions, $G\eqbi G'$ means the refperm forms a bijection between the 
reference set denoted by $G$ in the left state and $G'$ in the right state (ignoring $\semNull$).
The agreement  formula $\Agr G\Img f$ says, of a pair of states, that the refperm is total on 
the set denoted by $G$ in the left state,
and moreover the $f$-field of each object in that set has the same value, modulo refperm,
as the $f$-field of its corresponding object in the right state.
%In the case of integer values, ``modulo refperm'' means identical.
%\footnote{We do not abbreviate the form $G\eqbi G$ as $\Agr G$ because that would be ambiguous in case $G$ is an image expression.}

For commands that allocate, the postcondition needs to allow the refperm to be extended,
which is expressed by the modal operator $\later$ (read ``later''):
$\later\P$ holds if there is an extension of the refperm with zero or more pairs of references for which $\P$ holds.
For example, after the assignment to $p$ in the stack example, 
the relational rule for allocation yields postcondition $\later(p\eqbi p\land \Agr \{p\}\Img\allfields)$.
Aside from the left and right embeddings of unary predicates ($\leftF{P}$ and $\rightF{P}$),
the only other constructs are the logical ones (conjunction, implication, quantification over values).

\fullstuff{
We use the following syntax sugars:
$\Both{P} \eqdef \leftF{P} \land \rightF{P}$,
$\Agr E \eqdef E\eqbi E$,
%$\always\P \eqdef \neg\later\neg\P$, 
$\Agr (\rd{G\Img f}, \rd{x}) \eqdef \Agr G\Img f \land \Agr x$, etc.
Note that $\Agr E$ is unambiguous, but we cannot use the 
analogous abbreviation for region expressions:
For region expressions  of the image form, $G\Img f$, the atomic formula 
$\Agr G\Img f$ already has a meaning, which is different from
$G\Img f \eqbi G\Img f$.
The meaning of 
$G\Img f \eqbi G\Img f$ is equality, modulo refperm, of two sets:
the $f$-fields of $G$-objects in the left state and in the right state.
By contrast, $\Agr G\Img f$ means that for each non-null reference $o$ in region $G$ (interpreted in the left state), with counterpart $o'$ according to the refperm, the $f$ fields of $o$ and $o'$ agree.
}

%omittable
Let $\always\P \eqdef \neg\later\neg\P$.
Validity of $\P\imp\always\P$ is equivalent to $\P$ being \dt{monotonic},
i.e., not falsified by extension of the refperm.
\fullstuff{Validity of $\later\P\imp\P$ expresses that $\P$ is refperm-independent.}
Here are some valid schemas:
$\P\imp\later\P$, 
$\later\later\P\imp\later\P$,
%$\always\P\imp\P$,
and
$\later(\P\land\Q) \imp \later\P\land\later\Q$.
The converse of the latter is not valid.\fullstuff{\footnote{\fullstuff{For example, 
$\later (x\eqbi y) \land \later(x\eqbi z\land \rightF{(z\neq y)})$ is satisfiable
but $\later (x\eqbi y \land x\eqbi z\land \rightF{(z\neq y)})$ is not.}}}
For framing, a key property is that
%\begin{equation}\label{eq:mono-distrib}
$\later\P\land\Q \imp \later(\P\land\Q)$ is valid if $\Q$ is monotonic.
In practice, $\later$ is only needed in postconditions, and only at the top level.
Owing to $\later\later\P\imp\later\P$, this works 
fine with sequenced commands.\fullstuff{\footnote{\label{fn:later}\fullstuff{There is a convenient 
derived rule for sequencing of judgments like 
$\rflowtrf{\P}{CC}{\later\Q}$ and
$\rflowtrf{\Q}{DD}{\later\R}$.
We can use rule \rn{rLater} to get 
$\rflowtrf{\later\Q}{DD}{\later\later\R}$, and thus 
$\rflowtrf{\later\Q}{DD}{\later\R}$ by the rule of consequence, %\rn{rConseq}
using $\later\later\R\imp\later\R$.
Then by the sequence rule %\rn{rSeq} 
we get $\rflowtrf{\P}{CC;DD}{\later\R}$.
}}}  % end footnote
%The meta-notations for projection provide another valid schema: $\P\imp \Left{\P}\land\Right{\P}$.
Many useful formulas are monotonic, including 
$\Agr G\Img f$ and $F\eqbi F'$,
% and $G\eqbi G'$, 
but not $\neg (F\eqbi F')$.
% omittable
\fullstuff{The $\later$ operator can also break monotonicity: $\later (x\eqbi x)$ is not 
monotonic.
The $\later$ operator extends the refperm but not the sets of allocated references.  
So this is valid: $\lloc\eqbi\lloc \land \later\P \imp \P$, where $\land$ binds more tightly than $\imp$.
(Because $\lloc\eqbi\lloc$ says the refperm is a total bijection on allocated references and has no proper extensions.)
}

\section{Biprograms}\label{sec:biprogramSem} 

A biprogram $CC$ (Fig.~\ref{fig:bnf}) represents a pair of commands, which are given by syntactic projections
defined by clauses including the following:
$\Left{\splitbi{C}{C'}}  \eqdef  C$,
$\Right{\splitbi{C}{C'}}  \eqdef  C'$,
$\Left{\syncbi{A}}  \eqdef A$,
$\Left{\ifcbi{E\smallSplitSym E'}{BB}{CC}}  \eqdef  \ifc{E}{\Left{BB}}{\Left{CC}}$,
and %$\Left{\whilecbiA{E\smallSplitSym E'}{\P\smallSplitSym \P'}{CC}}  =  \whilec{E}{\Left{CC}}$, and 
$\Left{\letcombi{m}{\splitbi{C}{C'}}{CC}}  \eqdef  \letcom{m}{C}{\Left{CC}}$.
%
%Define $\Left{\configm{CC}{\sigma|\sigma'}{\mu|\mu'}}$ to be $\configm{\Left{CC}}{\sigma}{\mu}$
%and \emph{mutatis mutandis} for the right.
%
The weaving relation has clauses including the following.
\[\begin{array}{l}
\splitbi{A}{A} \weave \syncbi{A} \qquad \mbox{(for atomic commands $A$)}\\[.5ex]
\Splitbi{C;D}{C';D'} \weave \splitbi{C}{C'};\splitbi{D}{D'} \\[.5ex]
\Splitbi{ \ifc{E}{C}{D} }{ \ifc{E'}{C'}{D'} } 
   \weave \ifcbi{E\smallSplitSym E'}{\splitbi{C}{C'} }{ \splitbi{D}{D'} }
\\[.5ex]
%% % \label{eq:weaveIf}
%% \hspace*{-.5em}\begin{array}{l} 
%%    \Splitbi{ \ifc{E}{C}{D} }{ \ifc{E'}{C'}{D'} } \\
%%    \qquad 
%%    \weave \ifcbi{E\smallSplitSym E'}{\splitbi{C}{C'} }{ \splitbi{D}{D'} }
%% \end{array}  
%% \\[.5ex]
\Splitbi{ \whilec{E}{C} }{ \whilec{E'}{C'} }
   \weave \whilecbiA{E\smallSplitSym E'}{\P\smallSplitSym \P'}{\splitbi{C}{C'}}
\qquad \mbox{(for any $\P,\P'$)}
%\begin{array}{l} 
%   \Splitbi{ \letcom{m}{B}{C} }{ \letcom{m}{B'}{C'} }
%   \\ \weave \letcombi{m}{\splitbi{B}{B'}}{ \splitbi{C}{C'} } 
%\end{array}
\\[.5ex]
% ALERT hack for spacing
%\Splitbi{ \letcom{m\!}{\!B}{C} }{ \letcom{m\!}{\!B'}{C'} }
% \weave \letcombi{m\!}{\!\splitbi{B}{B'}}{ \splitbi{C}{C'} } 
 \end{array}\]
Additional clauses are needed for congruence, e.g., 
$CC\weave DD$ implies $BB;CC\weave BB;DD$.
The loop weaving introduces chosen alignment guards.
The \dt{full alignment} of a command $C$ is written $\Syncbi{C}$ and defined by
$\Syncbi{A} \eqdef \syncbi{A}$, 
$\Syncbi{C;D} \eqdef \Syncbi{C};\Syncbi{D}$,
$\Syncbi{\ifc{E}{C}{D}} \eqdef \ifcbi{E\smallSplitSym E}{\Syncbi{C}}{\Syncbi{D}}$,
$\Syncbi{\whilec{E}{C}} \eqdef \whilecbiA{E\smallSplitSym E}{\False\smallSplitSym 
 \False}{\Syncbi{C}}$, etc.
%% \[
%% \begin{array}[t]{l@{\hspace*{.6em}}c@{\hspace*{.6em}}l}
%% \Syncbi{A} &\eqdef& \syncbi{A} \qquad \mbox{ (atomic commands)}\\
%% \Syncbi{C;D} &\eqdef& \Syncbi{C};\Syncbi{D}\\
%% \Syncbi{\ifc{E}{C}{D}} &\eqdef& \ifcbi{E\smallSplitSym E}{\Syncbi{C}}{\Syncbi{D}} \\
%% \Syncbi{\whilec{E}{C}} &\eqdef& \whilecbiA{E\smallSplitSym E}{\False\smallSplitSym \False}{\Syncbi{C}}
%% \\
%% \Syncbi{\letcom{m}{B}{C}} &\eqdef& \letcombi{m}{\splitbi{B}{B}}{ \Syncbi{C}} 
%% \end{array}
%% \]
Note that $\splitbi{C}{C}\weave^* \Syncbi{C}$ for any $C$.

Commands are deterministic (modulo allocation), so 
termination-insensitive noninterference and equivalence properties can be expressed in a simple $\forall\forall$ form described at the start of Section~\ref{sec:overview}, rather than the $\forall\exists$ form needed for
refinement and for possibilistic noninterference
(``for all runs \ldots there exists a run \ldots'').
The transition rules for biprograms must ensure that the behavior is compatible with the underlying unary semantics, while enforcing the intended alignment.
That would still allow some degree of nondeterminacy in biprogram transitions.
However, we make biprograms deterministic (modulo allocation), because it greatly simplifies the soundness proofs.
Rather than determinize by means of a scheduling oracle or other artifacts
that would clutter the semantics,
we build determinacy into the transition semantics. 
Whereas the syntax aligns points of interest in control flow, 
biprogram traces explicitly represent aligned pairs of executions. 
We make the arbitrary choice of left-then-right semantics for the split form. 
In a trace of $\splitbi{C}{C'}$, every step taken by $C$ is effectively aligned with the initial state for $C'$.  This is followed by the steps of $C'$, each aligned with the final state of $C$.
To illustrate the idea, here is a sketch of the trace of a split biprogram (center column) and
its alignment with left and right unary traces.

\vspace*{-2ex}

% ALERT: need the \mbox around code, because assignment symbol in  math seems to confuse tikz
\begin{center}
  \begin{small}
  \begin{tikzpicture}[
%    a1/.style={thick}
    a1/.style={thin, dash pattern=on 2pt off 2pt}
    ]
    \matrix[column sep=5mm]{
      \node (U0) {$\configc{\mbox{x:=0; y:=0}}$}; 
      & \node (T0) {$\configc{\mbox{(x:=0; y:=0 $|$ x:=0; y:=0)}}$}; 
      & \node (V0) {$\configc{\mbox{x:=0; y:=0}}$}; \\[-1ex]
     \node (U1) {$\configc{\mbox{y:=0}}$}; 
      & \node (T1) {$\configc{\mbox{(y:=0 $|$ x:=0; y:=0)}}$}; \\[-1ex]
      \node (U2) {$\configc{\mbox{skip}}$}; 
      & \node (T2) {$\configc{\mbox{(skip $|$ x:=0; y:=0)}}$}; \\[-1ex]
      & \node (T3) {$\configc{\mbox{(skip $|$  y:=0)}}$}; 
      & \node (V1) {$\configc{\mbox{y:=0}}$}; \\[-1ex]
      & \node (T4) {$\configc{\syncbi{\mbox{skip}}}$}; 
      & \node (V2) {$\configc{\mbox{skip}}$}; \\[-1ex]
    };
    \draw[-] (U0) edge[a1] (T0);
    \draw[-] (T0) edge[a1] (V0);
    \draw[-] (U1) edge[a1] (T1);
    \draw[-] (T1) edge[a1, bend right=5] (V0);
    \draw[-] (U2) edge[a1] (T2);
    \draw[-] (T2) edge[a1, bend right=10] (V0);
    \draw[-] (U2) edge[a1, bend right=5] (T3);
    \draw[-] (T3) edge[a1] (V1);
    \draw[-] (U2) edge[a1, bend right=10] (T4);
    \draw[-] (T4) edge[a1] (V2);
  \end{tikzpicture}
  \end{small}
\end{center}

\vspace*{-2ex}

\noindent
This pattern is also typical for ``high conditionals'' in noninterference proofs,
where different branches may be taken (cf.\ rule \rn{rIf4}).
Here is the sync'd version in action.

\vspace*{-2ex}

\begin{center}
  \begin{small}
  \begin{tikzpicture}[
%    a1/.style={thick},
     a1/.style={thin, dash pattern=on 2pt off 2pt}
    ]
    \matrix[column sep=5mm]{
      \node (U0) {$\configc{\mbox{x:=0; y:=0}}$}; 
      & \node (T0) {$\configc{\syncbi{\mbox{x:=0}}; \syncbi{\mbox{y:=0}}}$}; 
      & \node (V0) {$\configc{\mbox{x:=0; y:=0}}$}; \\[-1ex]
      \node (U1) {$\configc{\mbox{y:=0}}$}; 
      & \node (T1) {$\configc{\syncbi{\mbox{y:=0}}}$}; 
      & \node (V1) {$\configc{\mbox{y:=0}}$}; \\[-1ex]
      \node (U2) {$\configc{\mbox{skip}}$}; 
      & \node (T2) {$\configc{\syncbi{\mbox{skip}}}$}; 
      & \node (V2) {$\configc{\mbox{skip}}$}; \\[-1ex]
    };
    \draw[-] (U0) edge[a1] (T0);
    \draw[-] (T0) edge[a1] (V0);
    \draw[-] (U1) edge[a1] (T1);
    \draw[-] (T1) edge[a1] (V1);
    \draw[-] (U2) edge[a1] (T2);
    \draw[-] (T2) edge[a1] (V2);
  \end{tikzpicture}
  \end{small}
\end{center}

The relational correctness judgment has the form 
%\begin{equation}\label{eq:judgR}
$\Phi \rHPflowtr{}{}{\P}{CC}{\Q}{\eff|\eff'} $.
The hypothesis context $\Phi$ maps some procedure names to their specifications:
$\Phi(m)$ may be a unary specification as before or else a relational one of the form 
$\rflowty{\R}{\S}{\eff|\eff'}$.
Frame conditions retain their meaning, separately for the left and the right side.
In case $\eff$ is the same as $\eff'$, the judgment or specification is abbreviated as 
$\rflowty{\P}{\Q}{\eff}$.

The semantics of biprograms uses small steps, which makes alignments explicit.
A configuration is comprised of a biprogram, two states, and two environments for procedures.
The transition relation depends on a semantic interpretation for each procedure in the hypothesis context $\Phi$.  Context calls, i.e., calls to procedures in the context, take a single step in accord with the interpretation.
For the sake of determinacy, this is formalized in the semantics of relational correctness 
by quantifying over deterministic ``interpretations'' of the specifications (as in \cite{BanerjeeNN15}),
rather than a single nondeterministic transition rule (as in  \cite{RegLogJrnII,OHearnYangReynoldsInfoToplas}).

\fullstuff{An aligned conditional, $\ifcbi{E\smallSplitSym E'}{CC}{DD}$,
faults from initial states that do not agree on the guard conditions $E,E'$.
An aligned loop 
$\whilecbiA{E\smallSplitSym E'}{\P\smallSplitSym \P'}{CC}$
executes the left part of the body, $\Left{CC}$, if $E$ and the left alignment guard $\P$ both hold,
and \emph{mutatis mutandis} for the right.  If neither alignment guard holds,
the loop faults unless the guards $E,E'$ agree.

The relational correctness judgment disallows faults, so
correctness of a biprogram implies it represents the intended alignments.
Note that the weaving transformations can introduce, but not eliminate, alignment faults.
}%fullstuff

Let us sketch the semantic consistency theorem,
%\fullstuff{(Theorem~\ref{thm:biprogram-soundness} in Sec.~\ref{app:biprograms})}
which confirms that executions of a biprogram from a pair of states
correspond to pairs of executions of the underlying commands,
so that judgments about biprograms represent relational properties
of the underlying commands.
Suppose  $\Phi\rHPflowtr{}{}{\P}{\splitbi{C}{C'}}{\Q}{\eff|\eff'}$ is valid and 
$\Phi$ has only unary specifications.
Consider any states $\sigma,\sigma'$ that are related by $\P$ (modulo some refperm).
Suppose $C$ and $C'$, when executed from $\sigma,\sigma'$, reach final states $\tau,\tau'$.
(In the formal semantics, transitions are defined in terms of interpretations $\phi$ that satisfy the specifications $\Phi$, so this is written
$\config{C}{\sigma}\tranStar{\phi}\config{\skipc}{\tau}$ and 
$\config{C'}{\sigma'}\tranStar{\phi}\config{\skipc}{\tau'}$.)
Then $\tau,\tau'$ satisfy $\Q$.
%$\tau|\tau'\models_\pi\Q$.

\section{Relational region logic}\label{sec:relLogic} 

\begin{figure}[t!]
\begin{small}
\begin{mathpar}
\inferrule*[left=rLink]{ 
\rflowtr{\R}{m}{\S}{\effe} 
   \rHPflowtr{}{}{\P}{\Syncbi{C}}{\Q }{\eff} \\
\rHPflowtr{}{}{\R}{\splitbi{B}{B'}}{\S}{\effe} 
% DN following conditions saved; may be relevant to RSOF
%\\
%\P\imp\Agr \reads(\eff) \\
%\R\imp\Agr \reads(\effe) \\
%\S\imp\Agr \rTow(\effe) \\
%\writes(\eff) \mbox{ are } \Left{\P}/\eff\mbox{ and }\Right{\P}/\eff\mbox{-immune}\\
%\writes(\effe) \mbox{ are } \Left{\R}/\effe\mbox{ and }\Right{\R}/\effe\mbox{-immune}
}{ 
 \rHPflowtr{}{}{\P}{
  \letcombi{m}{\splitbi{B}{B'}}{\Syncbi{C}}}{\Q}{\eff}
}

\and
\inferrule*[left=rIf4]
{
\Phi \rHPflowtr{}{}{\P\land\leftF{E}\land\rightF{E'}}{\splitbi{C}{C'}}{\Q}{\eff|\eff'} \\
\Phi \rHPflowtr{}{}{\P\land\leftF{E}\land\rightF{\neg E'}}{\splitbi{C}{D'}}{\Q}{\eff|\eff'} \\
\Phi \rHPflowtr{}{}{\P\land\leftF{\neg E}\land\rightF{E'}}{\splitbi{D}{C'}}{\Q}{\eff|\eff'} \\
\Phi \rHPflowtr{}{}{\P\land\leftF{\neg E}\land\rightF{\neg E'}}{\splitbi{D}{D'}}{\Q}{\eff|\eff'} 
}{
\Phi\rHPflowtr{}{}{\P}{\splitbi{\ifc{E}{C}{D}}{\ifc{E'}{C'}{D'}}}{\Q}{\eff,\ftpt(E)|\eff',\ftpt(E')}
}

\and
\inferrule*[left=rIf]
{
\P\imp E\eqbi E' \\
\Phi \rHPflowtr{}{}{\P\land\leftF{E}\land\rightF{E'}}{CC}{\Q}{\eff|\eff'} \\
\Phi \rHPflowtr{}{}{\P\land\leftF{\neg E}\land\rightF{\neg E'}}{DD}{\Q}{\eff|\eff'} 
}{
\Phi \rHPflowtr{}{}{\P}{\ifcbi{E\smallSplitSym E'}{CC}{DD}}{\Q}{\eff,\ftpt(E)|\eff',\ftpt(E')}
}

\and
\inferrule*[left=rWeave]{
  \Phi \rHPflowtr{}{}{\P}{DD}{\Q}{\eff|\eff'} \\
  CC \weave DD \\
  unaryOnly(\Phi) \\
  terminates(\Left{\P},\Left{DD}) \\
  terminates(\Right{\P},\Right{DD})
}{ 
  \Phi \rHPflowtr{}{}{\P}{CC}{\Q}{\eff|\eff'} \\
}

%% \and
%% \inferrule*[left=rDisj]{
%% \Phi \rHPflowtr{}{}{\P_0}{CC}{\Q}{\eff|\eff'} \\
%% \Phi \rHPflowtr{}{}{\P_1}{CC}{\Q}{\eff|\eff'}
%% }{ 
%% \Phi \rHPflowtr{}{}{\P_0\lor\P_1}{CC}{\Q}{\eff|\eff'}
%% }
\end{mathpar}
\end{small}
\vspace*{-3ex}
\caption{Selected relational proof rules.
%The typing context $\Gamma|\Gamma'$ is unchanged thoughout, so omitted.
}
\label{fig:proofrulesR}
\end{figure}

Selected proof rules appear in Fig.~\ref{fig:proofrulesR}.
%Soundness of the rules is proved in the full paper, using the semantics given in the Appendix.

For linking a procedure with its implementation, rule \rn{rLink} caters for a client program $C$ related to itself, in such a way that its executions can be aligned to use the same pattern of calls.
The procedure implementations may differ, as in the stack example, Section~\ref{sec:overview}.
% Fig.~\ref{fig:unaryRules}.\dn{oops, removed the unary rules figure.}
The rule shown here is for the special case of a single procedure, 
and the judgment for $\splitbi{B}{B'}$ has empty hypothesis context, to disallow recursion.
We see no difficulty to add mutually recursive procedures, as done for the unary logic in~\cite{RegLogJrnII}, but have not yet included that in a detailed soundness proof.
The soundness proof is basically an induction on steps
as in~\cite{RegLogJrnII} but with the construction 
of an interpretation as in the proof of the linking rule in \cite{BanerjeeNN15}.
The general rule also provides for un-discharged hypotheses for ambient libraries used in the client and in the procedure implementations~\cite{RegLogJrnII}.

Rule \rn{rIf4} is the obvious rule that considers all paths for a conditional not aligned with itself (e.g., for ``high branches''), whereas \rn{rIf} leverages the alignment 
designated by the biprogram form.
The disjunction rule---i.e., from 
  $\Phi \rHPflowtr{}{}{\P_0}{CC}{\Q}{\eff|\eff'}$ and
  $\Phi \rHPflowtr{}{}{\P_1}{CC}{\Q}{\eff|\eff'}$ 
  infer $\Phi \rHPflowtr{}{}{\P_0\lor\P_1}{CC}{\Q}{\eff|\eff'}$---serves to split cases on the initial states, allowing different weavings to be used for different circumstances, which is why there is no notion like alignment guards for the biprogram conditional. 
The obvious conjunction rule is sound.\fullstuff{\footnote{\fullstuff{That is,
from 
$\Phi \rHPflowtr{}{}{\P_0}{CC}{\Q_0}{\eff|\eff'}$ and 
$\Phi \rHPflowtr{}{}{\P_1}{CC}{\Q_1}{\eff|\eff'}$ infer
$\Phi \rHPflowtr{}{}{\P_0\land\P_1}{CC}{\Q_0\land\Q_1}{\eff|\eff'}$.
The hypotheses and frame conditions are unchanged.
}}}
It is useful for deriving other rules.
For example, we have this simple axiom for allocation:
\( \rHPflowtr{}{}{true}{ \syncbi{x:=\new{K}} }{\later(x\eqbi x)}{\wri{x},\rw{\lloc}} \).
Using conjunction, embedding, \fullstuff{the unary rule \rn{Alloc},} and framing, one can add postconditions 
like $\Agr\sing{x}\Img f$ and freshness of $x$.

A consequence of our design decisions is ``one-sided divergence'' of biprograms,
which comes into play with weaving.
%, because it is about ``infinite delay'' of steps.
For example, assuming $loop$ diverges, 
$\Splitbi{y:=0;z.f:=0}{loop;x:=0}$ assigns $z.f$ before diverging.
But it weaves 
to $\splitbi{y:=0}{loop}; \splitbi{z.f:=0}{x:=0}$ which never assigns $z.f$.
This biprogram's executions do not cover all executions of the underlying unary programs.
The phenomenon becomes a problem for code that can fault (e.g., if $z$ is null).
Were the correctness judgments to assert termination, 
this shortcoming would not be an issue, but in this paper we choose the simplicity of partial correctness. 
Rule \rn{rWeave} needs to be restricted to prevent one-sided divergence of the premise biprogram $DD$ from states where $CC$ in the conclusion terminates.
For simplicity in this paper we assume given a termination check:
$terminates(P,C)$ means that $C$ faults or terminates normally, 
from any initial state satisfying $P$, 
This is about unary programs, so the condition can be discharged 
by standard means.\fullstuff{\footnote{\fullstuff{One can also think about a more complicated semantics for biprograms, in which splits take steps on alternating sides.  But this requires to augment configurations with some kind of scheduler state and  would slightly complicate some soundness proofs, so we leave that to future work.}}}

The relational frame rule %rule \rn{rFrame}, 
%exhibits the notational clutter that seems unavoidable in a logic about pairs of programs and pairs of states, but otherwise it 
is a straightforward extension of the unary frame rule.  
From a judgment 
$\Phi \rHPflowtr{}{}{\P}{CC}{\Q}{\eff|\eff'}$ it infers
$\Phi \rHPflowtr{}{}{\P\land \R}{CC}{\Q\land \R}{\eff|\eff'}$
provided that $\R$ is framed by read effects (on the left and right) that are disjoint from the write effects in $\eff|\eff'$.

To prove a judgment 
$\Phi \rHPflowtr{}{}{\Q}{\whilecbiA{E\smallSplitSym E'}{\P\smallSplitSym \P'}{CC}}{\Q}{\eff,\ftpt(E)|\eff',\ftpt(E')}$,
the rule %\rn{rWh} 
has three main premises:\fullstuff{\footnote{\fullstuff{The syntactic footprint $\ftpt(E)$ is 
described in the Appendix and fully defined in~\cite{RegLogJrnI}.}}}
$\Phi \rHPflowtr{}{}{\Q\land\P\land\leftF{E}}{\splitbi{\Left{CC}}{\skipc}}{\Q}{\eff|\,}$
for left-only execution of the body,
$\Phi \rHPflowtr{}{}{\Q\land\P'\land\rightF{E'}}{\splitbi{\skipc}{\Right{CC}}}{\Q}{\,|\eff'}$ for 
right-only, and 
$\Phi \rHPflowtr{}{}{\Q\land\neg\P\land\neg\P'\land\leftF{E}\land\rightF{E'}}{CC}{\Q}{\eff|\eff'}$ 
for aligned execution.
A side condition requires that the invariant $\Q$ implies 
these cases are exhaustive:
$\Q\imp E\eqbi E' \lorbi (\P\land\leftF{E}) \lorbi (\P'\land\rightF{E'})$.
Additional side conditions require the effects to be self-immune, just as in unary RL~\cite{RegLogJrnI,BanerjeeNN15}.
Finally, the formulas  $\later\P\imp\P$ and $\later\P'\imp\P'$ must be valid; this 
says the alignment guards are refperm-independent, which is needed because refperms are part of the semantics of judgments but are not 
part of the semantics of biprograms.

The above rule is compatible with weaving a loop body, as in (\ref{eq:C1woven}).
The left and right projections $\Left{CC}$ and $\Right{CC}$ undo the weaving and take care of unaligned iterations.

There are many other valid and useful rules.
% omittable
Explicit frame conditions are convenient, both in tools and in a logic,
in part because they compose in simple ways.
This may lose precision, but that can be overcome using postconditions 
to express, e.g., that $x:=x$ does not observably write $x$.
This is addressed, in unary RL, by a rule to ``mask'' write effects~\cite{RegLogJrnI}.
Similarly, the relational logic supports a rule to mask read effects.
There is a rule of transitivity along these lines:
$\rflowtrf{\P}{\splitbi{B}{C}}{\Q}$
and $\rflowtrf{\R}{\splitbi{C}{D}}{\S}$
infer 
$\rflowtrf{\P;\R}{\splitbi{B}{D}}{\Q;\S}$
where $(;)$ denotes composition of relations.  
A special case is where the pre-relations (resp.\ post-relations) are the same, transitive, relation.
The rule needs to take care about termination of $C$.

%Another issue is that $\later(\Q;\S)$ implies $(\later\Q);(\later\S)$ but the converse does not hold in general.

% omittable
%In common practice, programs are directly annotated with assertions, which can be seen as the outline of a Hoare logic proof and fits well with software development processes.
%For relational logic it remains an open problem to devise comparably general and convenient notation.

\section{Related work}\label{sec:related}

Benton~\cite{Benton:popl04} introduced relational Hoare logic, around the same time that Yang was developing relational separation logic~\cite{Yang07relsep}.
\fullstuff{Their works emphasize the effectiveness and flexibility of relational reasoning using ordinary extensional program semantics.} 
Benton's logic does not encompass the heap. Yang's does; it features separating conjunction and a frame rule. % but not hypothetical judgments. 
\fullstuff{The fully nondeterministic allocator is used.}
Pointers are treated concretely in~\cite{Yang07relsep};
agreement means identical addresses, which suffices for
some low level C code.
Neither work includes procedures.
Beringer~\cite{Beringer11} reduces relational verification to unary verification 
via specifications and uses that technique to derive rules of a relational Hoare logic for programs including the heap (but not procedures).  
Whereas the logics of Benton, Yang, and others provide only rules for synchronized alignment of loops,
Beringer derives a rule that allows for unsynchronized (``dissonant'') iterations;
our alignment guards are similar to side conditions of that rule.
RHTT~\cite{NanevskiBG13} implements a relational program logic in dependent type theory (Coq). %, using abstract predicates.
The work focuses on applications to information flow. 
It handles dynamically allocated mutable state and procedures, 
and both similar and dissimilar control structures.
Like the other relational logics it does not feature frame conditions.
RHTT is the only prior relational logic to include both the heap and procedures,
and the only one to have a procedure linking rule.
It is also the only one to address any form of encapsulation;
it does so using abstract predicates, as opposed to hiding~\cite{RegLogJrnII,OHearnYangReynoldsInfoToplas}.

Several works investigate construction of product programs that encode 
nontrivial choices of alignment~\cite{ReynoldsCraft,TerauchiAiken,ZaksP08,BartheCK11,BartheCK13,BartheCK16}.
In particular, our weaving relation was inspired by~\cite{BartheCK11,BartheCK16} 
which address programs that differ in structure.
In contrast to the 2-safety properties for deterministic programs considered in this paper and most prior work,
Barthe et al.~\cite{BartheCK13} handle properties of the form 
``for all traces \ldots there exists a trace \ldots'' which are harder to work with but which encompass notions of refinement and continuity.
%omittable sentence:
Relational specifications of procedures are used in a series of papers by Barthe et al. (e.g.,\cite{BartheKOB13}) for computer-aided cryptographic proofs.
Sousa and Dillig~\cite{SousaD16} implement a logic that encompasses $k$-ary relations, e.g.,
the 3-safety property that a binary method is computing a transitive relation; their verification
algorithm is based on an implicit product construction.
None of these works address the heap or the linking of procedure implementations.
\fullstuff{(Although the prototype implementation~\cite{SousaD16} does.)}
Several works show that syntactic heuristics can often find good 
weavings in the case of similarly-structured programs not involving the 
heap~\cite{KovacsSF13,MuellerKS15,SousaD16}. 
% though in general finding alignments is like finding invariants BartheCK16,
Mueller et al.~\cite{MuellerKS15} use a form of product program and a relational logic 
to prove correctness of a static analysis for dependency, including procedures but no heap.
%Computes universal flow by wp calculation.

Works on translation validation and conditional equivalence checking use verification conditions (VCs)
with implicit or explicit product constructions~\cite{ZaksP08,ZuckPGBFH05}.
Godlin and Strichman formulate and prove soundness of rules for proving equivalence of 
programs with similar control structure~\cite{GodlinS08}.
They use one of the rules to devise an algorithm for VCs using uninterpreted functions to encode equivalence of called procedures,
which has been implemented in two prototype tools for equivalence checking~\cite{GodlinS13}.
(Pointer structures are limited to trees, i.e., no sharing.)
Hawblitzel et al.~\cite{HawblitzelKLR13} 
and Lahiri et al.~\cite{LahiriMSH13}
use relational procedure summaries
for intra- and inter-procedural reasoning
about program transformations.
The heap is modeled by maps.
These and related works report good experimental results using SMT or SAT solvers
to discharge VCs.
Felsing et al.~\cite{FelsingGKRU14} use Horn constraint solving to infer coupling relations and relational procedure summaries, which works well for similarly structured programs; they do not deal with the heap.
The purpose of our logic is not to supplant VC-based tools approaches but rather to provide a foundation for them. % and for their extension to the heap.
Our biprograms and relational assertions are easily translated to SMT-based back ends like Boogie and Why3.

\nocite{LeinoMuellerESOP08}

%Dependency analysis for ghost code: simple formulations of ghost code rely on rudimentary static analysis to ensure the underlying computation is not influenced by ghost state [Owicki/Gries; why3?], but things get harder when dynamically allocated ghost state is used [e.g., VCC etc].

Amtoft et al.~\cite{AmtoftBB06} introduce a logic for information flow in object-based programs,
using abstract locations to specify agreements in the heap.
%L$ that denote regions.
%Unary assertions can express disjointness of regions.
%Their relational assertion $L.f\ltimes$ says the $f$-fields of objects in $L$ agree between the two states, modulo a bijection on references, as needed to deal with two runs that differ in allocation.  
It was proposed in~\cite{BNR08secpriv} to extend this approach to more general relational specifications, for fine-grained declassification policies.
% omittable citation:
Banerjee et al.~\cite{BNR08} 
showed how region-based reasoning including a frame rule can be encoded, using 
ghost code, with standard FOL assertions instead of an ancillary notion of abstract region.
This evolved to the logic in Section~\ref{sec:relLogic}.
%In this paper we return to the original motivation and introduce the agreement
%relation $\Agr G\Img f$ (like $L.f\ltimes$, but $G$ is an ordinary expression).

Relational properties have been considered in the context of separation logic:~\cite{BirkedalY08rel} and~\cite{ThamsborgBY12} both give relational interpretations of unary separation logic that account for representation independence,
using second order framing~\cite{BirkedalY08rel} or abstract predicates~\cite{ThamsborgBY12}.
Extension of this work to a relational logic seems possible, 
but the semantics does not validate the rule of conjunction
so it may not be a good basis for verification tools.
Tools often rely heavily on splitting conjunctions in postconditions.

%% The most closely related work is that of Birkedal and Yang \cite{BirkedalY08rel}, who observe that client code proved correct under hypotheses about a module with hidden state ---i.e., using the second order frame (SOF) rule of separation logic \cite{OHearnYangReynoldsInfoToplas}--- should be relationally parametric. They develop this idea in a relational semantics of (unary) separation logic, for a higher order imperative language with heap cells storing integers. A Reynolds-style abstraction theorem is then embodied as soundness of the relational interpretation of the logic, in particular the SOF.  They deal with parametricity of memory allocation using FM domain theory and they build higher order frame properties into the model using Kripke semantics. 

%% Thamsborg et al.~\cite{ThamsborgBY12} also lift separation logic to a relational interpretation, but instead of second order framing they address abstract predicates. The goal is again to give a relational interpretation of proofs. They uncover and solve a surprising problem: not all uses of the rule of consequence in SL lift to relations.  
%% %(In our logic this rule is easily shown sound.) \dn{Is the problem due to sep, or Kripke, or alloc?}
%% Other interesting issues in simulation reasoning and separation are studied by Filipovic et al~\cite{FilipovicOTY10}.

Ahmed et al.~\cite{AhmedDR09}  address representation independence for higher order code and code pointers, using a step-indexed relational model, and prove challenging instances of contextual equivalence.
Based on that work, Dreyer et al.~\cite{DreyerNRB10} formulate a relational modal logic for proving contextual equivalence for a language that has general recursive types and general ML-style references atop System F. 
%The primary proof rules are used to prove equivalence of terms. 
%However, LADR is different in style from standard Hoare logics in that its proof rules use small-step judgments that, under a precondition reduce an expression to another expression that satisfies a postcondition. 
The logic serves to abstract from details of semantics in ways likely to facilitate interactive proofs of interesting contextual equivalences, but it includes intensional 
atomic propositions about steps in the transition semantics of terms.
Whereas contextual equivalence means equivalent in all contexts, general relational logics 
can express equivalences conditioned on the initial state.  For example, 
the assignments $x:=y.f$ and $z.f:=w$ do not commute, in general, because their effects can overlap.
But they do commute under the precondition $y\neq z$. 
We can easily prove equivalence judgments such as 
\( \Splitbi{x:=y.f; z.f:=w}{z.f:=w; x:=y.f} :
   \Both{(y\neq z)} \land \Agr \sing{y}\Img f \land w \eqbi w
   \rspecSym
   x\eqbi x \land \Agr\sing{z}\Img f 
\). 
By contrast with~\cite{AhmedDR09,NanevskiBG13}, we do not rely on embedding in higher-order logic.

\fullstuff{%
%% % omittable
Benton et al.~\cite{BentonHN14} give a region-based type and effect system that supports observational purity
%The denotational semantics uses a novel variant of logical relations (setoids) that allows clients of a module to 
and validates a number of equivalences that hold in virtue of effects alone.
The authors note that 
the semantics does not validate equivalences involving representation independence.
%% %Benton et al~\cite{BentonHN14} ``Despite the above, the model as presented here does not directly justify equivalences involving representation independence,...''
}

\section{Conclusion}\label{sec:discuss}

We provide a general relational logic that encompasses the heap and includes procedures. It handles both similarly- and differently-structured programs. 
We use small-step semantics with the goal to leverage, in future work, our prior work on SMT-friendly heap encapsulation~\cite{RosenbergBN12,RegLogJrnII,BanerjeeNN15}
for representation independence, which is not addressed in prior relational logics.\footnote{With the partial exception of~\cite{AhmedDR09}, see Section~\ref{sec:related}.
Although there has been some work on observational equivalence for higher order programs,
we are not aware of work dealing with general relational judgments for higher order programs.}

As articulated long ago by Hoare~\cite{Hoare:data:rep} but never fully formalized in a logic of programs, 
reasoning about change of data representation is  based on simulation relations on encapsulated state, which are necessarily preserved by client code in virtue of encapsulation.  
For functional correctness this corresponds to ``hiding'' of invariants on encapsulated data, i.e., not including the invariant in the specification used by a client.
O'Hearn et al.~\cite{OHearnYangReynoldsInfoToplas} formalize this as a hypothetical or second order framing rule (which has been adapted to RL~\cite{RegLogJrnII}).
In ongoing work, the logic presented here has been extended to address encapsulation and provides a relational second order frame rule which embodies Reynolds' abstraction theorem~\cite{Reynolds84}.
Whereas framing of invariants relies on write effects, framing of encapsulated relations also relies on read effects.
Our ongoing work also addresses observational purity, which is known to be
closely related to representation independence~\cite{Hoare:data:rep,obspureTCS}.

Although we can prove equivalence for loop tiling, some array-oriented loop optimizations seem to be out of reach of the logic as currently formulated.
Loop interchange changes matrix row to column order, reordering unboundedly many atomic assignments,
as does loop fusion/distribution.
% merges or un-merges two loops acting on disjoint parts of state; again, unboundedly many atomic assignments are reordered.  
Most prior work 
%\dn{indeed, simulation proofs} 
does not handle these examples;~\cite{ZuckPGBFH05} does handle them, 
with a non-syntactic proof rule that involves permutations on transition steps,
cf.~\cite{NamjoshiSinghania}.

%Unpublished work by \dn{Kedar and ?} implements a similar rule, using declarative but specialized description of the loops based on polyhedra; again, it is not a syntactic rule.

%Here ``unboundedly'' is an exaggeration --- the bounds are given by array size, but the size may not be determined until runtime, and in any case it is not merely reordering a sequence of distinct assignment commands.

% Future:
%% We might want to show a pattern that Hongseok points out and we used in the example:
%% deriving a relational judgment from unary ones by embed, then consequence, then frame.
%% \begin{small}
%% \(
%%    \inferrule
%%    {  \inferrule*
%%       {   \inferrule*
%%           { \flowtr{P}{C}{Q}{\eff} \\
%%             \flowtr{P'}{C'}{Q'}{\eff'} }
%%           { \rflowtr{\leftF{P}\land\rightF{P'}}{\splitbi{C}{C'}}{\leftF{Q}\land\rightF{Q'}}{\eff|\eff'}    } \\ 
%%           \P\land\R\imp \leftF{P}\land\rightF{P'}
%%       }{
%%           \rflowtr{\P\land\R}{\splitbi{C}{C'}}{\leftF{Q}\land\rightF{Q'}}{\eff|\eff'}    
%%       } \\
%%       \P\land\R\proves \fra{\effe|\effe'}{\R} \\
%%       \P\land\R\imp \ind{(\effe|\effe')}{\eff|\eff'} 
%%    }{ 
%%     \rflowtr{\P\land\R}{\splitbi{C}{C'}}{\leftF{Q}\land\rightF{Q'}\land\R}{\eff|\eff'}    
%%    } 
%% \)
%% \end{small}

\paragraph*{Acknowledgments}
Thanks to C{\'e}sar Kunz, Mounir Assaf, and Andrey Chudnov for helpful suggestions
and comments on previous versions of this paper.

% Mounir Assaf and Andrey Chudnov gave helpful feedback.
% Mounir suggested the proof strategy for rWeave.  

\addcontentsline{toc}{section}{\protect\numberline{}{References}}

\bibliographystyle{plainurl}% the recommended bibstyle for LIPIcs
%\bibliography{rlMod,relRL}

% following line is intended to force the floats to appear where they do in short version
\clearpage

\appendix

\section{Semantics of unary programs and their correctness judgments}\label{app:unary}

A typing context, $\Gamma$, maps variables to types.  (Types are in Fig.~\ref{fig:bnf}.)
A \dt{$\Gamma$-state} is comprised of a heap and a type-respecting assignment of values to the variables in $\Gamma$, which always includes the special variable $\lloc$, built into the semantics, that is not allowed to be assigned in code.  
Its value is the set of allocated references.  
It appears in frame conditions of code that allocates,
a detail that is glossed over in Sec.~\ref{sec:overview}.
A state must be well formed in the sense that there are no dangling references.
In particular, the value of a region expression is a set of allocated references, possibly also including $\semNull$.
We write $\sigma(x)$ to look up the value of $x$ in state $\sigma$, 
$\sigma(o.f)$ to look up field $f$ of reference $o$, 
$\sigma(F)$ for the value of expression $F$,
and $\means{\Gamma}$ for the set of $\Gamma$-states.

The transition semantics uses configurations $\configm{C}{\sigma}{\mu}$ where 
$\mu$ is an environment that maps procedure names to commands.
(The control state $C$ encodes a stack of continuations as a single command, using scope endmarkers for $\keyw{var}$ and $\keyw{let}$.
Nothing is needed to mark the end of a procedure call, as procedures have neither parameters nor returns.)
We work with typed configurations, and typed correctness judgments, but gloss over typing in this paper (see~\cite{RegLogJrnII}).
The transition semantics is standard,
except that we aim for reasoning about programs under hypotheses, i.e., procedure specifications, as explained in due course.
  
The heap is unbounded.  The command  
  $x:=\new{K}$ allocates a fresh reference and maps it to an object of type $K$
  initialized  with 0-equivalent values.  In order to model real allocators, which may
  depend on state not visible at the language level, we assume an arbitrary choice
  function for fresh references, which may be, but need not be, nondeterministic.

The semantics of formulas is standard.
The points-to relation is defined by $\sigma\models x.f=E$ iff 
$\sigma(x)\neq\semNull$ and $\sigma(\sigma(x).f)=\sigma(E)$.
Quantifiers for reference types range over allocated non-null references:
$\sigma\models \all{x:K}{P}$ iff $\extend{\sigma}{x}{o}\models P$ for all $o\in\sigma(\lloc)$
of type $K$.
(The notation indicates extending $\sigma$ with $x$ mapped to $o$.)

The meaning of a correctness judgment %(\ref{eq:judg}) 
is defined in terms of executions from initial configurations where the environment is empty (written $\_$).
Recall that in $C$ there may be environment calls to procedures bound by $\keyw{let}$ in $C$ and there may also be context calls to procedures in  a hypothesis context $\Phi$.
In the transition semantics, context calls take a single step to an outcome in accord with the specification:
if the pre-state satisfies the precondition then the post-state satisfies the postcondition, 
and otherwise the outcome is fault ($\Fault$).
In~\cite{RegLogJrnII} and~\cite{OHearnYangReynoldsInfoToplas},
this kind of semantics is defined in terms  of a single transition relation for the procedure, which 
encodes under-specification by nondeterminacy.
Here, a key design choice is to avoid nondeterminacy, to cater for simple semantics of relational properties.
%\footnote{For deterministic programs, dependency can be expressed as ``for all runs, agreement on the initial states implies agreement on the final states.  For nondeterministic programs the best one can ask for is ``possibilistic'' properties of $\forall\exists$ form.}
Following~\cite{BanerjeeNN15}, this is achieved by semantics of 
correctness judgments
in terms of all ``interpretations'' of the hypotheses, each interpretation being  
% \dt{quasi-deterministic}, that is,   
deterministic up to renaming of references.
This is captured in the $\forall\forall$ semantics of read effects
(item (c) of Context Interpretation, below).\footnote{We have to deal with renaming of references in any case, even if the allocator is deterministic, to handle properties like noninterference for a program that allocates differently depending on secrets, or two versions of an algorithm using different pointer structures.}

A \dt{location} is a variable $x$ or a pair $o.f$ of a reference $o$ and field $f$.
Define $\rlocs(\sigma,\eff)$, the locations designated in $\sigma$ by read effects of $\eff$, by
\( \rlocs(\sigma,\eff) = 
    \{ x \mid \mbox{$\eff$ contains $\rd{x}$} \} 
   \union \{ o.f \mid \mbox{$\eff$ contains $\rd{G\Img f}$ with
                     $o\in \sigma(G)$} \} 
\).
Define the $\wlocs(\sigma,\eff)$ the same way but for write effects.
Say $\tau$ \dt{can succeed} $\sigma$, written $\sigma\successorTo\tau$, provided 
$\sigma(\lloc)\subseteq\tau(\lloc)$ and
$\type(o, \sigma) = \type(o, \tau)$ for all $o\in\sigma(\lloc)$.
Define $\written(\sigma,\tau)$ to be 
$\{ x \mid \sigma(x)\neq\tau(x) \} \union \{ o.f \mid \sigma(o.f)\neq\tau(o.f) \}$.
Say $\eff$ \dt{allows change from} $\sigma$ \emph{to} $\tau$, 
\index{$\sigma \allowTo \tau \models \eff$}
written $ \sigma \allowTo \tau \models \eff $,
iff $\sigma\successorTo\tau$ and $\written(\sigma,\tau)\subseteq \wlocs(\sigma,\eff)$.

The semantics of read effects is more involved.
Let $\pi$ range over \dt{partial bijections} on $\Ref\setminus\{\semNull\}$.
Write $\pi(p)=p'$ to say that $\pi$ is defined on $p$ and has value $p'$. 
A \dt{refperm} from $\sigma$ to $\sigma'$ is a partial bijection $\pi$ such that
$\dom(\pi)\subseteq \sigma(\lloc)$,
$rng(\pi)\subseteq \sigma'(\lloc)$, and 
$\pi(p)=p'$ implies $\type(p,\sigma)=\type(p',\sigma')$ for all proper references $p,p'$.
For references $o,o'$ define $\rprel{o}{o'}$ to mean 
$o=\semNull=o'$ or $\pi(o)=o'$.
Extend $\stackrel{\pi}{\sim}$ to a relation on integers by $\rprel{i}{j}$ iff $i=j$.  For reference sets $X,Y$, define $\rprel{X}{Y} \;\mbox{ iff }\; \pi(X)=  Y$,
where $\pi(X)$ is the direct image of $X$.
%
%\begin{definition}[agreement on a location set, $\Lagree$]\label{def:locagreement}
%
For a set $W$ of locations, define 
$\Lagree(\sigma,\sigma',\pi,W)$
iff 
\[\begin{array}[t]{l}
\all{x\in W}{\: \rprel{\sigma(x)}{\sigma'(x)} } \; \land \;
\all{(p.f)\in W}{\: p\in dom(\pi) \:\land\: \rprel{ \sigma(p.f) }{ \sigma'(\pi(p).f) }} 
\end{array}
\]\index{$\Lagree$}
%
%% Agreement on locations has a kind of symmetry:
%% \begin{equation}\label{eq:LagreeSym}
%% \Lagree(\sigma,\sigma',\pi,W) \mbox{ implies } \Lagree(\sigma',\sigma,\pi^{-1},\pi(W))
%% \mbox{ for all $\sigma,\sigma',\pi,W$}
%% \end{equation}
%% A critical notion is agreement on read effects, for which symmetry turns out to be more delicate.
%
%\begin{definition}[agreement on read effects]\label{def:agreeX}
Say that $\sigma$ and $\sigma'$ \dt{agree on $\eff$ modulo $\pi$},
written $\agree(\sigma, \sigma', \eff, \pi)$,
iff
$\Lagree(\sigma,\sigma',\pi,\rlocs(\sigma,\eff))$.

%Define $\agree(\sigma,\sigma',\eff) = \agree(\sigma,\sigma',\eff,\pi)$ where 
%$\pi$ is the identity on $\sigma(\lloc)\intersect\sigma'(\lloc)$.

Note that  $\agree(\sigma,\tau,\rd{G}\Img f,\pi)$ 
implies $\sigma(G)\subseteq \dom(\pi)$ but does not imply 
$\tau(G)\subseteq rng(\pi)$ or $\rprel{\sigma(G)}{\tau(G)}$.

The next definitions are the basis for the semantics of read effects,
which is a relational property of two initial states $\sigma,\sigma'$ and two final states $\tau,\tau'$.  

\begin{tdisplay}{Allowed dependence \hfill
$\sigma,\sigma'\allowDep\tau,\tau' \models \eff$} 
\vspace*{-2.5ex}
\[ 
\hspace*{-1ex}
\begin{array}{l}
\freshRefs(\sigma,\tau) \eqdef \tau(\lloc)\setminus \sigma(\lloc) \\

\freshLocs(\sigma,\tau)
\eqdef \{ p.f | p\in\freshRefs(\sigma,\tau),
           f\in\fields(\type(p,\tau)) \}
\end{array}
\]

\smallskip

\noindent Say $\eff$ \dt{allows dependence} from $\tau,\tau'$ to $\sigma,\sigma'$, written
$\sigma,\sigma'\allowDep\tau,\tau' \models \eff$, 
iff 
for all $\pi$ if $\agree(\sigma,\sigma',\eff,\pi)$
then there is $\rho\supseteq\pi$ such that 
$\Lagree(\tau,\tau',\rho, \freshLocs(\sigma,\tau)\union \written(\sigma,\tau))$.
\end{tdisplay}

%\begin{definition}[allowed dependence, $\sigma,\sigma'\allowDep\tau,\tau' \models_\theta \eff$]
%\index{$\sigma,\sigma'\allowDep\tau,\tau' \models_\theta \eff$}

An interpretation returns a non-empty set of outcomes from each initial state (notation $\powersetne$). 

\begin{tdisplay}{Context interpretation \hfill $\phi$ for $\Phi$}
%\label{def:ctxinterp}
For $\Phi$ well formed in $\Gamma$, 
a \dt{$\Phi$-interpretation}
is a function $\phi$ with $\dom\phi=\dom\Phi$ and for each $m: \flowtrTy{P}{Q}{\eff}$ in $\Phi$,
we have that 
$\phi(m)$ is a function $\means{\Gamma}\to\powersetne(\means{\Gamma}\union\{\Fault\})$
such that for all $\sigma \in \means{\Gamma}$
we have
\begin{list}{}{}
\item[(a)] $\Fault\in\phi(m)(\sigma)$ iff $\sigma\not\models P$,
and also $\Fault\in\phi(m)(\sigma)$ implies $\phi(m)(\sigma) = \{\Fault\}$.
\item[(b)] For all $\tau \in \phi(m)(\sigma)$,
if $\sigma\models P$
then $\tau\models Q$ 
and $\sigma\allowTo\tau\models \eff$.
\item[(c)] For all $\tau,\sigma',\tau'$, 
if $\sigma\models P$ 
and $\sigma'\models P$
and $\tau \in \phi(m)(\sigma)$ and
$\tau' \in \phi(m)(\sigma')$, then
$\sigma,\sigma'\allowDep\tau,\tau'\models\eff$.
\end{list}
\vspace*{-3ex}
\end{tdisplay}
Owing to the second condition in (a),
dubbed \dt{fault determinacy},
we could as well choose to treat $\phi(m)$ as a function with 
codomain $\powersetne(\means{\Gamma})\union\{\Fault\}$, but the chosen formulation slightly streamlines some definitions.

The transition relation $\trans{\phi}$ depends on an interpretation $\phi$.
Transitions act on configurations where the environment $\mu$ has procedures distinct from those of $\phi$.
% i.e., $\dom(\mu)\intersect\dom(\phi)=\emptyset$.  
Aside from the use of interpretations, the definition is mostly standard (and omitted).
We assume $\fresh$ is a function such that, for any $\sigma$,
$\fresh(\sigma)$ a non-empty set of non-null references that are not 
in $\sigma(\lloc)$.

\begin{tdisplay}{Selected transition rules \hfill $\trans{\phi}$}
%\label{eq:envCall}
%\label{eq:ctxCall}
\vspace*{-3ex}
\begin{small}
\begin{mathpar}
   \inferrule{ \mu(m) = C \\
   }{ \configm{ m() }{\sigma}{\mu} 
      \trans{\phi}
%      \configm{C;\Endcall()\, }{\sigma}{\mu}
      \configm{C}{\sigma}{\mu}
    }

% Endcall not needed without parameters
%\and
%  \configm{ \Endcall() }{\sigma}{\mu} \trans{} \configm{ \skipc }{\sigma}{\mu}

\and
 \inferrule
   { \tau\in\phi(m)(\sigma) }
   { \configm{m()}{\sigma}{\mu} \trans{\phi} \configm{\skipc}{\tau}{\mu} }

\and
   \inferrule
   { \Fault\in\phi(m)(\sigma) }
   { \configm{m()}{\sigma}{\mu} \trans{\phi} \Fault}

%%%%%%%%%
\and
  \inferrule{ o \in \fresh(\sigma) \\
              \fields(K) \:=\:\ol{f}:\ol{T} \\
%              \sigma_1 = \New(\sigma,o,K,\Default{\ol{T}})}
              \sigma_1 = \mbox{``$\sigma$ with $o$ added to heap, with type $K$ and default fields''}
            }
            { \configm{x:=\new{K}}{\sigma}{\mu} \trans{\phi}
              \configm{\skipc}{\update{\sigma_1}{x}{o} }{\mu}}

\and
\inferrule{} 
{  \configm{ \letcom{m()}{B}{C} }{\sigma}{\mu} 
      \trans{\phi} \configm{C;\Endlet(m)\,}{\sigma}{\extend{\mu}{m}{B}}
}

\and
 \configm{ \Endlet(m) }{\sigma}{\mu} \trans{\phi} \configm{ \skipc}{\sigma}{\drop{\mu}{m}}

\end{mathpar}
\end{small}
\vspace*{-3ex}
\end{tdisplay}

%\begin{definition}[valid judgment]\label{def:valid2}\upshape

The correctness judgment gives a modular form of partial correctness.  
The avoidance of faults says not only that there are no null dereferences but more importantly that no context procedure is called outside its specified precondition.

\begin{tdisplay}{Valid correctness judgment
\hfill $\Phi\HVflowtr{}{}{P}{C}{Q}{\eff}$ 
}
The judgment is \dt{valid} 
iff the following conditions hold for 
all $\Phi$-interpretations $\phi$ 
and all $\Gamma$-states $\sigma$ such that  $\sigma\models P$.
\begin{list}{}{}
\item[\quad (Safety)] It is not the case that $\configm{C}{\sigma}{\_} \tranStar{\phi} \,\Fault$.
\end{list}
And for every $\tau$ with $\configm{C}{\sigma}{\_} \tranStar{\phi} \configm{\skipc}{\tau}{\_}$
we have
\smallskip
\begin{list}{}{}
\item[\quad (Post)]
$\tau \models Q$
\item[\quad (Write Effect)] $\sigma\allowTo\tau\models \eff$
\item[\quad (Read Effect)]
For all $\sigma',\tau'$
if 
$\configm{C}{\sigma'}{\_} \tranStar{\phi} \configm{\skipc}{\tau'}{\_}$
and $\sigma'\models P$
then $\sigma,\sigma'\allowDep\tau,\tau'\models\eff$
\end{list}
\vspace*{-2ex}
\end{tdisplay}

Selected proof rules appear in Fig.~\ref{fig:unaryRules}.
We proceed to some notions used in the rules, starting with read effects of formulas.

\begin{tdisplay}{Framing of formulas \hfill $P \models \fra{\effe}{Q}$}
$P \models \fra{\effe}{Q} \; $ iff 
for all $\sigma, \sigma', \pi$,
if $\agree(\sigma, \sigma', \effe, \pi)$
and $\sigma \models P \land Q$ then 
$\sigma' \models Q$
%% for all $\sigma, \sigma', \pi$,
%% \\\hspace*{6.2em} $\agree(\sigma, \sigma', \effe, \pi)$
%% and $\sigma \models P \land Q$ imply
%% $\sigma' \models Q$
\end{tdisplay}

For atomic formulas, read effects can be computed syntactically by function $\ftpt$.
Two clauses of the definition~\cite{RegLogJrnI} 
are $\ftpt(x) = \rd{x}$ and $\ftpt(x.f = F) = \rd{x},\rd{\sing{x}\Img f}, \ftpt(F)$.
The basic lemma is that 
$\agree(\sigma, \sigma', \ftpt(F), \pi)$ implies 
$\rprel{\sigma(F)}{\sigma'(F)}$.

To express region disjointness we use a syntactic function $\indSymbol$ defined by structural recursion on effects (see~\cite{RegLogJrnI}).
Please note that $\indSymbol$ is not syntax in the logic; it's a function in the metalanguage that is used to obtain formulas from effects.
For example, $\ind{r\Img nxt}{r\Img val}$ is the formula $true$ and 
$\ind{r\Img nxt}{s\Img nxt}$ is the formula $r\intersect s \subseteq \{\semNull\}$.
The key lemma is that the formula $\ind{\eff}{\effe}$ 
holds in a state $\sigma$ iff $\rlocs(\sigma,\eff)\intersect \wlocs(\sigma,\effe)=\emptyset$.

The \dt{subeffect} judgment $P\models \eff \leq \effe$ holds iff 
$\rlocs(\sigma,\eff)\subseteq\rlocs(\sigma,\eta)$ and 
$\wlocs(\sigma,\eff)\subseteq\wlocs(\sigma,\eta)$ for all $\sigma$ with $\sigma \models P$.
The key lemma about subeffects is that 
if $\sigma \allowTo \tau \models \eff$
and  $\sigma\models \ind{\effe}{\eff}$ 
and $P\models \eff \leq \effe$ 
and $\sigma\models P$
then $\agree(\sigma, \tau, \effe, id)$ where $id$ is the identity on $\sigma(\lloc)$.

\begin{figure}[t]
\begin{small}
\begin{mathpar}
\inferrule*[left=Frame]
{ \Phi\HPflowtr{}{}{P}{C}{Q}{\eff} \\
  P \models \fra{\effe}{R} \\
  P\land R \imp \ind{\effe}{\eff}
}{ 
\Phi\HPflowtr{}{}{P\land R}{C}{Q\land R}{\eff}
} 

\and
\inferrule*[left=FieldUpd]
{y\not\equiv x }
{ \HPflowtr{}{}{
  x \neq \NULL }{x.f := y\;}{\,x.f =y}{\wri{\sing{x}\Img f},\rd{x},\rd{y}}}

\and
\inferrule*[left=Alloc]
{ \fields(K) = \ol{f}:\ol{T} \\
  \Default{\ol{T}} = \ol{v} 
}
{
  \mbox{\(\begin{array}[t]{ll}
  \proves x:=\new{K} : ~ 
             \lloc = g \leadsto \lloc = g\union\{x\} \land x.\ol{f}=\ol{v}
   \:[ \wri{x}, \rw{\lloc} ]
  \end{array}
  \)}
}

\and
\inferrule*[left=Seq]  
{ 
\Phi\HPflowtr{}{}{P}{C_1}{P_1}{\eff_1}\\
\Phi\HPflowtr{}{}{P_1}{C_2}{Q}{\eff_2,\rw{H\Img\ol{f}}}\\
%\eff_1 \mbox{ and } \eff_2 \mbox{ have framed reads}\\ everywhere in this paper
P_1\Rightarrow H\#g\\
\eff_2 \mbox{ is } P/\eff_1\mbox{-immune}\\
\wri{g}\not\in\eff_1
} 
{
\Phi \HPflowtr{}{}{P\wedge g=\lloc}{C_1;C_2}{Q}{\eff_1,\eff_2}
}

\end{mathpar}
\end{small}
\vspace*{-3ex}
\caption{Selected unary proof rules (from~\cite{RegLogJrnII,BanerjeeNN15}).
%To be instantiated only with well formed premises and conclusions.
}\label{fig:unaryRules}
\end{figure}

Separator formulas are used in the notion of immunity, which amounts to framing for frame conditions.
Expression $G$ is \dt{$P/\eff$-immune}
iff this is valid:
\( P\imp \ind{\ftpt(G)}{\eff}  \).
Effect  $\effe$ is \dt{$P/\eff$-immune} iff 
$G$ is $P/\eff$-immune for every $G$ with 
$\wri{G\Img f}$ or $\rd{G\Img f}$ in $\effe$.

The key lemma about immunity is that 
if $\effe$ is $P/\eff$-immune,
$\sigma\models P$, and $\sigma\allowTo\tau \models\eff$,
then 
$\rlocs(\sigma,\effe) = \rlocs(\tau,\effe)$ and 
$\wlocs(\sigma,\effe) = \wlocs(\tau,\effe)$.

The preceding notions are concerned with protecting formulas and effects from the write effects of a command.
That is, framing and immunity are about preserving the value of an expression or formula from one control point to a later one.  To deal with read effects (and other relations), agreements also need to be preserved.  
To this end we use the following notion.
An effect $\eff$ has \dt{framed reads} provided that for every $\rd{G\Img f}$ in $\eff$, 
its footprint $\ftpt(G)$ is in $\eff$.
For example, with $r: \Region$ the effect $\rd{r\Img f}$ does not have framed reads, but it is a subeffect of $\rd{r\Img f},\rd{r}$ which does.
For $\eff$ that has framed reads, if $\agree(\sigma,\sigma',\eff,\pi)$
then $\rprel{\sigma(G)}{\sigma'(G)}$ for any $\rd{G\Img f}$ in $\eff$.
In addition, a kind of symmetry holds:
$\agree(\sigma,\sigma',\eff,\pi)$ implies
$\agree(\sigma',\sigma,\eff,\pi^{-1})$.
This property implies what we need for preservation of effects (see \cite{BanerjeeNN15} for details).

In this paper we assume without comment that all frame conditions in unary and relational judgments have framed reads.
(An alternative would be to change the semantics so that effects are interpreted 
in terms of their $\ftpt$-closure.)

\section{Relation formulas}

Relational correctness judgments are typed in a context of the form 
$\Gamma|\Gamma'$ comprised of contexts $\Gamma$ and $\Gamma'$ for the left and right sides.
For relation formulas, typing is reduced to typing of unary formulas:
$\Gamma|\Gamma'\proves \P \;$ iff $\; \Gamma\proves\Left{\P}\mbox{ and }\Gamma'\proves\Right{\P}$.
This refers to the following.
\begin{tdisplay}{Syntactic projections}
\[ \begin{array}{l@{\hspace{1.5ex}}l@{\hspace{1.5ex}}l@{\hspace{2em}}l@{\hspace{1.5ex}}l@{\hspace{1.5ex}}l} 
\Left{\leftF{P}} & \eqdef & P 
  & \Right{\leftF{P}} & \eqdef & \True \\
\Left{\rightF{P}} & \eqdef & \True
  & \Right{\rightF{P}} & \eqdef & P \\
\Left{\later\P} & \eqdef & \Left{\P}
  & \Right{\later\P} & \eqdef & \Right{\P} \\ 
\Left{F\eqbi F'} & \eqdef & (F=F)
  & \Right{F\eqbi F'} & \eqdef & (F'=F') \\  
\Left{\Agr G\Img f} & \eqdef & (G\Img f = G\Img f)
  & \Right{\Agr G\Img f} & \eqdef & (G\Img f = G\Img f) \\
\Left{\all{x\smallSplitSym x':K}{\P}}  & \eqdef & \all{x:K}{\Left{\P}}   
& \Right{\all{x\smallSplitSym x':K}{\P}} & \eqdef & \all{x':K}{\Right{\P}}  
%% \multicolumn{6}{l}{ 
%%   \Left{\all{x,x':K}{\P}}  \eqdef  \all{x:K}{\Left{\P}}   
%% \quad
%%   \Right{\all{x,x':K}{\P}} \eqdef  \all{x':K}{\Right{\P}} } 
\end{array}
\]
\end{tdisplay}
Next are various notions used in the semantics of the program logic, starting 
with read effects of formulas.

\begin{tdisplay}{Relation formula semantics \hfill $\sigma|\sigma'\models_\pi\P$}
\begin{small}
\(
% HACK alert
\begin{array}{l@{\hspace*{1ex}}l@{\hspace*{1ex}}l}
\sigma|\sigma'\models_\pi\leftF{P} 
  & iff \; & \sigma\models P \\

\sigma|\sigma'\models_\pi\rightF{P} 
  & iff \; & \sigma'\models P \\

\sigma|\sigma'\models_\pi F\eqbi F' 
  & iff & \rprel{\sigma(F)}{\sigma'(F')} \\

%\sigma|\sigma'\models_\pi G\subsetbi G' 
%  & iff & \pi(\sigma(G))\subseteq \sigma'(G') \\

\sigma|\sigma'\models_\pi \Agr G\Img f 
  & iff & \agree(\sigma,\sigma',\rd{G\Img f},\pi) \\ 

%\theta|\theta', \sigma|\sigma' \models_\pi \Agr(\old(G)\Img f)
%  & iff & Lagree(\sigma,\sigma',\{o.f\mid o\in \theta(G)\},\pi) \\ 

\sigma|\sigma'\models_\pi \later \P 
  & iff & \some{\rho}{\rho\supseteq\pi \mbox{ and } \sigma|\sigma'\models_\rho \P } \\

\sigma|\sigma'\models_\pi  \P\land\Q 
  & iff & \sigma|\sigma'\models_\pi  \P \mbox{ and } \sigma|\sigma'\models_\pi \Q \\

\sigma|\sigma'\models_\pi  \P\imp\Q 
  & iff & \sigma|\sigma'\models_\pi  \P \mbox{ implies } \sigma|\sigma'\models_\pi \Q  \\

%\sigma|\sigma'\models_\pi  \P\lorbi\Q 
%  & iff & \sigma|\sigma'\models_\pi  \P \mbox{ or } \sigma|\sigma'\models_\pi \Q \\

% Ok, just reformatted below
%% \sigma|\sigma' \models^{\Gamma|\Gamma'}_\pi \all{x, x':K\in G}{\P}
%% & iff &
%% \extend(\sigma,x,o)|\extend(\sigma',x',o') \\
%% & & \models^{\Gamma,x:K|\Gamma',x':K} \P, \mbox{for all $o, o'$ with}\\
%% & & \mbox{$o\in\sigma(G)$ and $o'\in\sigma'(G)$.}
%% \\

\end{array}\)
\end{small}
%% \\
%% Validity: $\sigma|\sigma\models \P \eqdef \all{\pi}{\sigma|\sigma\models_\pi \P}$ 
%% and $\models\P \eqdef \all{\sigma,\sigma'}{\sigma|\sigma\models \P}$ 
\end{tdisplay}

The framing judgment is like the unary one:
$\P\models \fra{\effd|\effd'}{\Q}$ 
iff for all $\pi, \sigma, \sigma', \tau, \tau'$, 
if $\agree(\sigma, \tau, \effd, id)$
and $\agree(\sigma', \tau', \effd', id)$
and $\sigma|\sigma' \models_\pi^{\Gamma|\Gamma'} \P \land \Q$ then $\tau|\tau' \models_\pi^{\Gamma|\Gamma'} \Q$.
% LONG
%The agreements are modulo identity on $\sigma(\lloc)$ and $\sigma'(\lloc)$ respectively.

\section{Biprograms}\label{app:biprograms}

\begin{tdisplay}{Biprograms: syntactic projections
\hfill $\Left{CC},\Right{CC}$}
%$\Gamma|\Gamma'\proves CC$ iff $\Gamma\proves\Left{CC}$ and $\Gamma'\proves\Right{CC}$\medskip
\(
\begin{array}{lll}
\Left{\configm{CC}{\sigma|\sigma'}{\mu|\mu'}} & = & \configm{\Left{CC}}{\sigma}{\mu} \\

\Left{\splitbi{C}{C'}} & = & C \\%[.5ex]
\Left{\syncbi{A}} & = & A \\%[.5ex]
\Left{\ifcbi{E\smallSplitSym E'}{BB}{CC}} & = & \ifc{E}{\Left{BB}}{\Left{CC}} \\%[.5ex]
\Left{\whilecbiA{E\smallSplitSym E'}{\P\smallSplitSym \P'}{CC}} & = & \whilec{E}{\Left{CC}} \\%[.5ex]
\Left{\seqc{BB}{CC}} & = & \seqc{\Left{BB}}{\, \Left{CC}} \\%[.5ex]
%\Left{\varblockbi{x:T}{CC}} & = & \varblock{x:T}{\Left{CC}} \\%[.5ex]
\Left{\letcombi{m}{\splitbi{C}{C'}}{CC}} & = & \letcom{m}{C}{\Left{CC}}
\end{array}
\)
\\[.5ex]
We identify $(\skipc;C) \equiv C$, 
$(C;\skipc) \equiv C$, 
$\splitbi{\skipc}{\skipc} \equiv \syncbi{\skipc}$
and $\syncbi{\skipc};CC \equiv CC$.
Thus, for example, $\Left{\splitbi{\skipc}{x:=0};\syncbi{y:=0}}$ is $y:=0$.
\end{tdisplay}

Typing of biprograms can be defined in terms of these meta-operators, roughly as $\Gamma|\Gamma'\proves CC$ iff $\Gamma\proves\Left{CC}$ and $\Gamma'\proves\Right{CC}$. But the alignment guards $\P,\P'$ in \keyw{while} should also be typechecked (by evident rules).

Biprograms are given transition semantics, with configurations of the form 
$\configm{CC}{\sigma|\sigma'}{\mu|\mu'}$ that represent an aligned pair of unary configurations.
Environments are unchanged from unary semantics: $\mu$ and $\mu'$ map procedure names to commands, not to biprograms.
We lift the syntactic projections to configurations:
$\Left{\configm{CC}{\sigma|\sigma'}{\mu|\mu'}} 
= \configm{\Left{CC}}{\sigma}{\mu}$. 

\begin{figure}[t!]
\begin{small} %\begin{tiny}
\begin{mathpar}
\inferrule*[left=bSync]{
   \configm{A}{\sigma}{\mu} \trans{\una{\phi}} \configm{\skipc}{\tau}{\nu}\\
   \configm{A}{\sigma'}{\mu'} \trans{\una{\phi}} \configm{\skipc}{\tau'}{\nu'}
}{
   \configr{\syncbi{A}}{\sigma}{\sigma'}{\mu}{\mu'}
   \biTrans{}{\phi} 
   \configr{\syncbi{\skipc}}{\tau}{\tau'}{\nu}{\nu'}
}

\and
\inferrule*[left=bSyncX]{
   \configm{A}{\sigma}{\mu} \trans{\una{\phi}} \AFPFault 
\quad\mbox{or}
\quad
   \configm{A}{\sigma'}{\mu'} \trans{\una{\phi}} \AFPFault 
}{
   \configr{\syncbi{A}}{\sigma}{\sigma'}{\mu}{\mu'}
   \biTrans{}{\phi} 
   \AFPFault
}

\and
  \inferrule*[left=bCall]{ 
    (\tau|\tau') \in \phi(m)(\sigma|\sigma')
  }{ 
    \configr{\syncbi{m()}}{\sigma}{\sigma'}{\mu}{\mu'} 
      \biTrans{}{\phi} \configr{\syncbi{\skipc}}{\tau}{\tau'}{\mu}{\mu'} 
   }

\and
 \inferrule*[left=bCallX]{ 
     \PFault \in \phi(m)(\sigma|\sigma')
  }{ 
    \configr{\syncbi{m()}}{\sigma}{\sigma'}{\mu}{\mu'} 
      \biTrans{}{\phi} \PFault
   }

\and
  \inferrule*[left=bCallE]{ 
    \mbox{$\mu(m) = B \qquad \mu'(m) = B'$} %\\  % HACK alert for format
%    CC = (\Syncbi{B} \mbox{ if } B\equiv B' \mbox{ else } \splitbi{B}{B'})
  }{ 
    \configr{\syncbi{m()}}{\sigma}{\sigma'}{\mu}{\mu'} 
      \biTrans{}{\phi} \configr{\splitbi{B}{B'}}{\sigma}{\sigma'}{\mu}{\mu'} 
   }

\and
\inferrule*[left=bSplitL]{
   \configm{C}{\sigma}{\mu} \trans{\una{\phi}} \configm{D}{\tau}{\nu}
}{
   \configr{\splitbi{C}{C'}}{\sigma}{\sigma'}{\mu}{\mu'}
   \biTrans{}{\phi} 
   \configr{\splitbi{D}{C'}}{\tau}{\sigma'}{\nu}{\mu'}
}

\and
\inferrule*[left=bSplitR]{
   \configm{C'}{\sigma'}{\mu'} \trans{\una{\phi}} \configm{D'}{\tau'}{\nu'} 
}{
   \configr{\splitbi{\skipc}{C'}}{\sigma}{\sigma'}{\mu}{\mu'}
   \biTrans{}{\phi} 
   \configr{\splitbi{\skipc}{D'}}{\sigma}{\tau'}{\mu}{\nu'}
}

\and
\inferrule*[left=bSplitLX]{
   \configm{C}{\sigma}{\mu} \trans{\una{\phi}} \AFPFault
}{
   \configr{\splitbi{C}{C'}}{\sigma}{\sigma'}{\mu}{\mu'}
   \biTrans{}{\phi}
   \AFPFault
}

\and
\inferrule*[left=bSplitRX]{
   \configm{C'}{\sigma'}{\mu'} \trans{\una{\phi}} \AFPFault
}{
   \configr{\splitbi{\skipc}{C'}}{\sigma}{\sigma'}{\mu}{\mu'}
   \biTrans{}{\phi} 
   \AFPFault
}

\and
\inferrule*[left=bLet]{
%m\notin dom(\mu) \\
\nu = \extend{\mu}{m}{C}\\ 
\nu' = \extend{\mu'}{m}{C'} 
}{
  \configr{ \letcombi{m}{\splitbi{C}{C'}}{DD} }{\sigma}{\sigma'}{\mu}{\mu'}
  \biTrans{}{\phi} 
   \configr{DD;\syncbi{\Endlet(m)}}{\sigma}{\sigma'}{\nu}{\nu'}
}

\and
\inferrule*[left=bIfTT]{
\sigma(E)= \True = \sigma'(E')
}{
   \configr{\ifcbi{E|E'}{CC}{DD}}{\sigma}{\sigma'}{\mu}{\mu'}
   \biTrans{}{\phi} 
   \configr{CC}{\sigma}{\sigma'}{\mu}{\mu'}
}

\and
\inferrule*[left=bIfX]{
   \sigma(E)\neq \sigma'(E') 
}{
   \configr{\ifcbi{E|E'}{CC}{DD}}{\sigma}{\sigma'}{\mu}{\mu'}
   \biTrans{}{\phi} 
   \AFPFault
}

\and
\inferrule*[left=bIfFF]{
\sigma(E)= \False = \sigma'(E')
}{
   \configr{\ifcbi{E|E'}{CC}{DD}}{\sigma}{\sigma'}{\mu}{\mu'}
   \biTrans{}{\phi} 
   \configr{DD}{\sigma}{\sigma'}{\mu}{\mu'}
}

%%%%%%%%%%%%%%

\and
\inferrule*[left=bSeq]{
  \configr{ BB }{\sigma}{\sigma'}{\mu}{\mu'} 
   \biTrans{}{\phi}
  \configr{ CC }{\tau}{\tau'}{\nu}{\nu'} 
 }{
  \configr{ BB ; DD }{\sigma}{\sigma'}{\mu}{\mu'} 
   \biTrans{}{\phi}
  \configr{ CC ; DD}{\tau}{\tau'}{\nu}{\nu'} 
}

\and
\inferrule*[left=bSeqX]{
  \configr{ BB }{\sigma}{\sigma'}{\mu}{\mu'} 
   \biTrans{}{\phi}  \AFPFault
}{
  \configr{ BB ; DD }{\sigma}{\sigma'}{\mu}{\mu'} 
   \biTrans{}{\phi}  \AFPFault
}
\end{mathpar}
\end{small} %\end{tiny}
\vspace*{-3ex}
\caption{Transition rules for biprograms, except biprogram loop (for which see Fig.~\ref{fig:biprogTransU}).}
\label{fig:biprogTrans}
\end{figure}

We define suitable interpretations $\phi$ for relational specifications, used to define the transition relation $\biTrans{}{\phi}$ for biprograms, analogous to the transition relation for unary programs.
If $\Phi(m)$ is a relational specification, $\phi(m)$ maps initial state-pairs to 
non-empty sets of final pairs, or fault, in accord with the specification.  The effect conditions are essentially lifted from the corresponding unary ones.
 
% DN changed \phi to \theta to avoid confusion over type of \phi

\begin{tdisplay}{Interpretation of relational specification}
An \dt{interpretation} $\theta$ for $\rflowty{\R}{\S}{\effe|\effe'}$, in context $\Gamma|\Gamma'$,
is a function 
\[ \theta \ : \ 
\means{\Gamma}\times\means{\Gamma'} \to \powerset^{ne} ((\means{\Gamma}\times\means{\Gamma'}) \union \{\Fault\}) 
\]
such that
if $(\tau,\tau')\in \theta(\sigma,\sigma')$ then $\sigma\successorTo\tau$ and $\sigma'\successorTo\tau'$.
Moreover 
\begin{list}{}{}
\item[(a)] $\Fault \in \theta(\sigma,\sigma')$ iff
     $\neg\some{\pi}{ \sigma|\sigma' \models_\pi  \R }$, 
   and also $\Fault \in \theta(\sigma,\sigma')$ implies $\theta(\sigma,\sigma') = \{\Fault\}$.
\item[(b)] for all $\sigma,\sigma'$ and $(\tau,\tau')$ in $\theta(\sigma,\sigma')$, 
   \begin{list}{}{}
   \item[(post)] $\all{\pi}{ (\sigma|\sigma' \models_\pi \R) \imp (\tau|\tau'\models_\pi \S )} $

   \item[(write)] $\sigma\allowTo\tau\models \effe$ and $\sigma'\allowTo\tau'\models \effe'$

   \item[(read)] 
For all $\pi,\dot{\sigma},\dot{\pi},\dot{\tau}$, \\ 
        (i) if $\sigma|\sigma'\models_\pi \R$ 
        and $\dot{\sigma}|\sigma'\models_{\dot{\pi};\pi} \R$ 
        and $(\dot{\tau},\tau')\in\theta(\dot{\sigma},\sigma')$
        then $\dot{\sigma},\sigma\allowDep\dot{\tau},\tau\models\effe$\\
        (ii) 
        if $\sigma|\sigma'\models_\pi \R$ 
        and $\sigma|\dot{\sigma}\models_{\pi;\dot{\pi}} \R$ 
        and $(\tau,\dot{\tau})\in\theta(\sigma,\dot{\sigma})$
        then $\sigma'\!,\dot{\sigma}\allowDep\tau'\!,\dot{\tau}\models\effe'$
   \end{list}
\end{list}
\vspace*{-2ex}
\end{tdisplay}
As in the case of unary interpretations, the second part of (a) is dubbed \dt{fault determinacy}.

Note that the read and write effect conditions amount to their unary counterparts, imposed on both the left and right sides.

Note also that (read)(i) is equivalent to:
For all $\dot{\sigma},\dot{\tau}$ with $(\dot{\tau},\tau')\in\theta(\dot{\sigma},\sigma')$,
if there are $\pi,\dot{\pi}$ such that $\sigma|\sigma'\models_\pi \R$ 
   and $\dot{\sigma}|\sigma'\models_{\dot{\pi};\pi} \R$ 
   then $\dot{\sigma},\sigma\allowDep\dot{\tau},\tau\models\effe$.
\emph{Mutatis mutandis} for (ii).

%% \dn{Note that the allowed dependency quantifies over refperms including $\dot{\pi}$ but potentially others as well, for which $\R$ need not hold. This could be slightly weakened by inlining the definition of allowed dependency but we have not found need for that so far.}

\begin{figure}[t!]
\begin{small}
\begin{mathpar}
\inferrule*[left=bWhL]{
  \sigma(E)=\True \\
  \sigma|\sigma'\models\P
}{
   \configr{CC}{\sigma}{\sigma'}{\mu}{\mu'}
   \biTrans{}{\phi} 
   \configr{\splitbi{\Left{BB}}{\skipc};CC}{\sigma}{\sigma'}{\mu}{\mu'}
}

\and
\inferrule*[left=bWhR]{
  \sigma|\sigma'\not\models\P \\
  \sigma'(E')=\True \\
  \sigma|\sigma'\models\P'
}{
   \configr{CC}{\sigma}{\sigma'}{\mu}{\mu'}
   \biTrans{}{\phi} 
   \configr{\splitbi{\skipc}{\Right{BB}};CC}{\sigma}{\sigma'}{\mu}{\mu'}
}

\and
\inferrule*[left=bWhTT]{
  \sigma|\sigma'\not\models\P \\
  \sigma|\sigma'\not\models\P'\\ 
  \sigma(E)= \True = \sigma'(E')
}{
   \configr{CC}{\sigma}{\sigma'}{\mu}{\mu'}
   \biTrans{}{\phi} 
   \configr{BB;CC}{\sigma}{\sigma'}{\mu}{\mu'}
}

\and
\inferrule*[left=bWhFF]{
  \sigma(E) =  \False = \sigma'(E') 
}{
   \configr{CC}{\sigma}{\sigma'}{\mu}{\mu'}
   \biTrans{}{\phi} 
   \configr{\syncbi{\skipc}}{\sigma}{\sigma'}{\mu}{\mu'}
}

\and
\inferrule*[left=bWhX]{
  \mbox{($\sigma(E)=\True$ and $\sigma'(E')=\False$ and   $\sigma|\sigma'\not\models\P$)}  
\\
  \mbox{or ($\sigma(E)=\False$ and $\sigma'(E')=\True$ and $\sigma|\sigma'\not\models\P'$)} 
}{
   \configr{CC}{\sigma}{\sigma'}{\mu}{\mu'}
   \biTrans{}{\phi} 
   \AFault
}
\end{mathpar}
\end{small}
%\end{tiny}
\vspace*{-3ex}
\caption{Transition rules for loops,  in which we 
abbreviate $CC\;\equiv\; \whilecbiA{E|E'}{\P|\P'}{BB}$.
}
\label{fig:biprogTransU}
\end{figure}

Say $\phi$ is a \dt{$\Phi$-interpretation} if for each $m$ in $\dom\Phi$
with $\Phi(m)$ relational, $\phi(m)$ is an interpretation in the above sense.
In case $\Phi(m)$ is unary, $\phi(m)$ is a context interpretation in the sense defined in Sec.~\ref{app:unary}.
The biprogram transition rules are in Figs.~\ref{fig:biprogTrans} and \ref{fig:biprogTransU},
on pages \pageref{fig:biprogTrans} and \pageref{fig:biprogTransU}.
Some depend on unary transitions, for which purpose we write $\una{\phi}$ 
for the restriction of $\phi$ to unary interpretations.

% ALERT: the following makes the figures appear here instead of at the end of the doc

\bigskip

[THIS SPACE INTENTIONALLY BLANK]

\bigskip

\clearpage

The \rn{bIf*} rules align the biprogram conditional; it faults if the same branch is not taken.
This embodies the purpose of the conditional biprogram, which is to indicate 
that the guards can be proved to agree.  Similarly for the loop transitions (Fig.~\ref{fig:biprogTransU}).
Notice that the agreement checked by conditional/loop biprograms is agreement on boolean values.  (Equality of reference values would not make sense, and agreement modulo a refperm cannot be defined because there are no refperms in the biprogram semantics.)

For a given configuration, exactly one rule is applicable.  
For context call this fact relies on two features of the semantics.  One is that a hypothesis  context maps a procedure name to a single specification, either unary or relational.
The other feature is ``fault determinacy'' of interpretations, i.e., the second part of condition (a) in the definition of interpretation for relational specifications, together with the similar condition (a) in the definition of context interpretation for unary specifications.

In all cases except where the active biprogram involves $\NEW$ or a procedure call, there is a unique outcome.  
In case of context call, $\phi$ has either a unary or a relational interpretation, and in either case the result may be nondeterministic; but it is determined modulo renaming of references, owing to the read condition in the definitions of unary/relational context interpretation.
In the case of $\NEW$, the transition rule will be one of \rn{bSync}, \rn{bSplitL}, and \rn{bSplitR}; 
any nondeterminacy in the the choice function for references
%(which we allow to be arbitrary) 
is reflected in the transitions.

\section{Relational proof rules}\label{app:relProofRules}

\begin{tdisplay}{Valid relational judgment
\hfill 
%$\Phi\rHPflowtr{\Gamma|\Gamma'}{}{\P}{CC}{\Q}{\eff|\eff'}$
$\Phi\rHPflowtr{}{}{\P}{CC}{\Q}{\eff|\eff'}$
}  %\label{def:validR}
The judgment 
is \dt{valid} iff
the following holds for all 
states $\sigma$ and $\sigma'$,
$\Phi$-interpretations $\phi$,
% $\Gamma$-environments $\mu$, $\Gamma'$-environments $\mu'$, 
and refperms $\pi$ %for $\sigma,\sigma'$
with  $\sigma|\sigma'\models_\pi \P$:
\begin{list}{}{}
\item[\quad(Safety)] It is not the case that 
$\configr{CC}{\sigma}{\sigma'}{\_\,}{\,\_} \biTranStar{}{\phi} \,\Fault$.
\end{list}
And for 
all $\tau,\tau'$ such that 
$\configr{CC}{\sigma}{\sigma'}{\_}{\_} \biTranStar{}{\phi} 
\configr{\syncbi{\skipc}}{\tau}{\tau'}{\_}{\_}$  
\begin{list}{}{}
\item[\quad(Post)] $\tau|\tau' \models_\pi \Q$  
\item[\quad(Write Effect)]
$\sigma\allowTo\tau\models \eff$ and $\sigma'\allowTo\tau'\models \eff'$ 

\item[\quad (Read Effect)]
For any $\rho,\dot{\sigma},\dot{\tau}$, 
\\ 
(i) if
$\configr{CC}{\dot{\sigma}}{\sigma'}{\_}{\_} \biTranStar{}{\phi} 
\configr{\syncbi{\skipc}}{\dot{\tau}}{\tau'}{\_}{\_}$ 
and 
$\dot{\sigma}|\sigma'\models_{(\rho;\pi)} \P$
then
$\dot{\sigma},\sigma\allowDep\dot{\tau},\tau\models\eff$
\\
(ii) if
$\configr{CC}{\sigma}{\dot{\sigma}}{\_}{\_} \biTranStar{}{\phi} 
\configr{\syncbi{\skipc}}{\tau}{\dot{\tau}}{\_}{\_}$ 
and 
$\sigma|\dot{\sigma}\models_{(\pi;\rho)} \P$
then
$\sigma'\!,\dot{\sigma}\allowDep\tau'\!,\dot{\tau}\models\eff'$
\end{list}
\vspace*{-2ex}
\end{tdisplay}

In addition to the relational proof rules in Fig.~\ref{fig:proofrulesR}, 
we give in Fig.~\ref{fig:proofrulesRX} some additional rules that have been proved sound.

\begin{figure}
%\begin{small}
\begin{mathpar}

\inferrule*[left=rCall]{}{
\quad \rflowtr{\P}{m}{\Q}{\eff} \rHPflowtr{}{}{\P}{\syncbi{m()}}{\Q}{\eff} 
}

\and
\inferrule*[left=rLater]
{\Phi\rHPflowtr{}{}{\P}{CC}{\Q}{\eff|\eff'} 
}
{ \Phi\rHPflowtr{}{}{\later\P}{CC}{\later\Q}{\eff|\eff'} 
}

\and
\inferrule*[left=rFrame]
{ \Phi \rHPflowtr{}{}{\P}{CC}{\Q}{\eff|\eff'} \\
  \P\models \fra{\effe|\effe'}{\R} \\
  \P\land\R \imp \leftF{(\ind{\effe}{\eff})} \land \rightF{(\ind{\effe'}{\eff'})} 
}{
\Phi \rHPflowtr{}{}{\P\land \R}{CC}{\Q\land \R}{\eff|\eff'}
}

\and
\inferrule*[left=rEmb]{ 
   \Phi\HPflowtr{}{}{P}{C}{Q}{\eff}\\
   \Phi\HPflowtr{}{}{P'}{C'}{Q'}{\eff'}\\
}{ \Phi\rHPflowtr{}{}{\leftF{P}\land\rightF{P'}}{\splitbi{C}{C'}}{\leftF{Q}\land\rightF{Q'}}{\eff|\eff'}
}

\and
\inferrule*[left=rSeq]
{\Phi\rHPflowtr{}{}{\P}{CC_1}{\P_1}{\eff_1|\eff'_1} \\ 
\Phi\rHPflowtr{}{}{\P_1}{CC_2}{\Q}{\eff_2|\eff'_2} \\
\eff_2 \mbox{ is } \Left{\P}/\eff_1\mbox{-immune} \\
\eff'_2 \mbox{ is } \Right{P}/\eff'_1\mbox{-immune} 
}
{ \Phi\rHPflowtr{}{}{\P}{\seqc{CC_1}{CC_2}}{\Q}{\eff_1, \eff_2|\eff'_1,\eff'_2}
}
\and % TeX HACK don't add lines around this one
\inferrule*[left=rWh]
{
\Phi \rHPflowtr{}{}{\Q\land\P\land\leftF{E}}{\splitbi{\Left{CC}}{\skipc}}{\Q}{\eff|\,} \\
\Phi \rHPflowtr{}{}{\Q\land\P'\land\rightF{E'}}{\splitbi{\skipc}{\Right{CC}}}{\Q}{\,|\eff'} \\
\Phi \rHPflowtr{}{}{\Q\land\neg\P\land\neg\P'\land\leftF{E}\land\rightF{E'}}{CC}{\Q}{\eff|\eff'} \\
\Q\imp E\eqbi E' \lorbi (\P\land\leftF{E}) \lorbi (\P'\land\rightF{E'}) \\
\Q\proves\fra{\eff|\eff'}{\P}\\
\eff\mbox{ is }\Left{\P}/\eff\mbox{-immune}\\
\later\P\imp\P \\
\Q\proves\fra{\eff|\eff'}{\P'}\\
\eff'\mbox{ is }\Right{\P'}/\eff'\mbox{-immune} \\
\later\P'\imp\P' 
}{
\Phi \rHPflowtr{}{}{\Q}{\whilecbiA{E\smallSplitSym E'}{\P\smallSplitSym \P'}{CC}}{\Q}{\eff,\ftpt(E)|\eff',\ftpt(E')}
}

\inferrule*[left=rEqu]
{\Phi\rHPflowtr{}{}{\P}{\splitbi{C}{C'}}{\Q}{\eff|\eff'} 
\\ C\uequiv D 
\\ C'\uequiv D' 
}
{ \Phi\rHPflowtr{}{}{\P}{\splitbi{D}{D'}}{\Q}{\eff|\eff'} 
}

\and
\inferrule*[left=rConseq]{ 
 \Phi\rHPflowtr{}{}{\P}{CC}{\Q}{\eff|\eff'}\\
 \R\imp \P \\
 \Q \imp \S \\
 \P\models (\eff|\eff')\leq(\effe|\effe') \\
}{ 
 \Phi\rHPflowtr{}{}{\R}{CC}{\S}{\effe|\effe'}
}

\end{mathpar}
%\end{small}
\caption{Additional relational proof rules.}\label{fig:proofrulesRX}
\end{figure}

The \rn{rLater} rule is used (with \rn{rConseq} and  $\later\later\P\imp\later\P$) to derive a variation on \rn{rSeq} 
where the intermediate relation has $\later$ (see footnote~\ref{fn:later}).

Rule \rn{rConseq} includes a subeffect judgment 
$ \P\models (\eff|\eff')\leq(\effe|\effe')$ which is a direct generalization 
of subeffects in the unary logic.

Rule \rn{rEqu} uses unconditional program equivalence to rewrite the commands in a split,
if they differ only in the way their control flow is expressed, i.e., their behavior in all contexts is the same.  
Commands $C,C'$ are \dt{unconditionally equivalent},
written
$C \uequiv C'$, 
iff for all $\sigma,\tau,\phi,D$ we have
\begin{list}{}{}
\item[(a)]
If $\configm{C}{\sigma}{\mu} \tranStar{\phi} \configm{\skipc}{\tau}{\mu}$ then
$\configm{C'}{\sigma}{\mu} \tranStar{\phi} \configm{\skipc}{\tau}{\mu}$.
\item[(b)]
If $\configm{C}{\sigma}{\mu} \tranStar{\phi} \configm{D}{\tau}{\nu} \trans{\phi} \Fault$
then $\configm{C'}{\sigma}{\mu} \tranStar{\phi} \configm{D'}{\tau}{\nu} \trans{\phi} \Fault$
for some $D'$.
\item[(c)] \emph{Mutatis mutandis} for $C'/C$.
\end{list}

\begin{lemma}
\label{lem:uncond}
$\uequiv$ is an equivalence relation and \\
\mbox{(a) } \( \WHILE\; E\;\DO\;C\;\OD
\uequiv
\WHILE\; E \;\DO\; C; \WHILE\; E \land E0 \;\DO\; C \;\OD\; \OD
\)
\\
\mbox{(b) } \(\WHILE\; E \;\DO\; C \;\OD
\uequiv
\IF\; E\;\DO\; C\;\FI; \WHILE\; E\;\DO\; C \;\OD
\)
\end{lemma}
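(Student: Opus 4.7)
The plan is to dispose of the equivalence-relation claim first, then give a small-step simulation argument for each of the two identities.

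Reflexivity and transitivity of $\uequiv$ are immediate from the conjunctive form of its defining clauses (a)--(c); symmetry is built in by clause (c). No substantive work is needed there.

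For (b), the argument is classical one-step unrolling. Fix any interpretation $\phi$, environment $\mu$, and initial $\sigma$. If $\sigma(E)=\False$, both $\WHILE\; E\;\DO\; C\;\OD$ and $\IF\; E\;\THEN\; C\;\ELSE\;\skipc\;\FI;\;\WHILE\; E\;\DO\; C\;\OD$ reach $\configm{\skipc}{\sigma}{\mu}$, differing only by a bounded number of administrative transitions that choose the else-branch and thread through the now-inert while. If $\sigma(E)=\True$, the LHS takes one transition to $\configm{C;\WHILE\; E\;\DO\; C\;\OD}{\sigma}{\mu}$, while the RHS takes a bounded sequence of transitions ending at exactly the same configuration. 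From that common continuation every terminating trace and every faulting trace coincides, giving clauses (a) and (b); clause (c) is symmetric.

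For (a), the key operational observation is that both sides execute $C$ exactly at the states where $E$ holds, in the same order, and both exit exactly when $E$ first fails. On the RHS each outer iteration performs one $C$, enters an inner loop that performs further $C$-executions while $E\land E_0$ holds, and drops back to the outer loop when either conjunct fails; the outer loop then continues iff $E$ still holds. On the LHS each iteration performs a single $C$ and re-evaluates $E$. Thus in both cases the sequence of post-states of consecutive $C$-executions is determined by $C$ and the initial $\sigma$ alone and is independent of $E_0$. I will make this precise by defining a relation $\sim$ between LHS and RHS configurations whose program components either match or differ by a specified administrative continuation, and checking that each LHS step is matched by a nonempty RHS step sequence landing in $\sim$, and vice versa. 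A faulting trace on either side factors as a prefix of completed $C$-executions followed by a faulting $C$; the factorization is preserved by the correspondence.

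The main obstacle is the bookkeeping rather than any deep issue: the RHS has strictly more administrative configurations (entering and exiting the inner while, evaluating $E\land E_0$, threading the outer-loop residual) than the LHS, so "matching" means matching modulo these silent transitions. I expect no real difficulty beyond making $\sim$ explicit and checking every applicable transition rule; the determinacy of the underlying transition semantics at fixed $\phi$ (and the freedom to match any choices of the fresh-reference function) then upgrades the simulation into the strict coincidence of terminating and faulting outcomes that $\uequiv$ demands.
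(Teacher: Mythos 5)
The paper states Lemma~\ref{lem:uncond} without proof, so there is nothing to compare your argument against line by line; on its own merits your route is the natural one and I see no error in it. The equivalence-relation part is indeed immediate from the shape of clauses (a)--(c) of the definition of $\uequiv$, and your treatment of (b) is complete: guard expressions in this language cannot fault (they contain no field dereferences), so after the one bounded burst of administrative steps both sides sit at the identical configuration $\configm{C;\whilec{E}{C}}{\sigma}{\mu}$ (or $\configm{\skipc}{\sigma}{\mu}$), from which terminating and faulting behaviour trivially coincide; note that the fault clause only demands agreement on the state and environment, not on the residual command, which is what licenses the ``modulo administrative continuation'' matching. For (a) you have correctly isolated the two facts that make the identity work --- every iteration of either loop on the right-hand side executes one copy of $C$ (so the right side cannot spin administratively and diverge where the left terminates), and the right side exits only via the outer guard, so $E0$ influences only \emph{which} guard performs each check, never whether $C$ runs --- but you stop at a plan: the relation $\sim$ and the case analysis over transition rules are announced rather than carried out. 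Two points deserve explicit care when you do write it out: the nondeterminism of allocation and of context calls means clause (a) of $\uequiv$ is an existential over traces of the other program, so the simulation must be phrased as ``for every trace of one side there is a matching trace of the other'' with the choices resolved to agree (you note this, but it belongs in the invariant of $\sim$); and when the inner guard $E\land E0$ fails while $E$ still holds, the right side re-enters the outer loop and runs $C$ again, so the correspondence between check points is not one-to-one --- your $\sim$ must absorb that extra outer test as an administrative step.
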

We also have $C \uequiv \skipc; C; \skipc$.
This is an instance of reflexivity, because we identify $\skipc; C; \skipc$ with $C$.

The weaving relation $\weave$ is defined inductively by axioms and congruence rules.
Here is the complete list of weaving axioms, each of which replaces a split by one of the other biprogram forms:
\[\begin{array}{l}
\splitbi{A}{A} \weave \syncbi{A} \\[.5ex]
\Splitbi{C;D}{C';D'} \weave \splitbi{C}{C'};\splitbi{D}{D'} \\[.5ex]
\Splitbi{ \ifc{E}{C}{D} }{ \ifc{E'}{C'}{D'} } 
   \weave \ifcbi{E\smallSplitSym E'}{\splitbi{C}{C'} }{ \splitbi{D}{D'} }
\\[.5ex]
\Splitbi{ \whilec{E}{C} }{ \whilec{E'}{C'} }
   \weave \whilecbiA{E\smallSplitSym E'}{\P\smallSplitSym \P'}{\splitbi{C}{C'}}
\\[.5ex]
   \Splitbi{ \letcom{m}{B}{C} }{ \letcom{m}{B'}{C'} }
    \weave \letcombi{m}{\splitbi{B}{B'}}{ \splitbi{C}{C'} } 
 \end{array}\]
Here is the complete list of congruence rules.  
Each is formulated in terms of a single sub-biprogram, for technical convenience.
The premise in each case is $BB\weave CC$;
the conclusions are
\[\begin{array}{l}
BB;DD \weave CC;DD\\
DD;BB \weave DD;CC\\
\ifcbi{E\smallSplitSym E'}{BB}{DD} \weave \ifcbi{E\smallSplitSym E'}{CC}{DD} \\
\ifcbi{E\smallSplitSym E'}{DD}{BB} \weave \ifcbi{E\smallSplitSym E'}{DD}{CC} \\
\whilecbiA{E\smallSplitSym E'}{\P\smallSplitSym \P'}{BB} \weave
  \whilecbiA{E\smallSplitSym E'}{\P\smallSplitSym \P'}{CC} \\
\letcombi{m}{\splitbi{B}{B'}}{ BB } \weave \letcombi{m}{\splitbi{B}{B'}}{ CC }
\end{array}\]

At this point we have defined the syntax and semantics of the logic.
In the following sections we apply the logic to examples.
Then Sections~\ref{app:consistency} and~\ref{app:sound} justify the semantics and prove the rules sound.

\section{Proof for example of ``dissonant'' loop}\label{app:dissonant}

Example $C_1$, discussed in Secs.~\ref{sec:unaryLogic} and~\ref{sec:overview},
shows the use of alignment guards to achieve intermittent alignment of iterations.
It also uses a region variable, $r$, to express frame conditions for heap locations.
\labf{C_1}{
\begin{array}[t]{l}
s:=0;\WHILE\ p\not=\NULL\ \DO\\
\quad\quad\quad\quad\IF\ \neg p.del\ \THEN\ s:=s+p.val\ \FI \\
\quad\quad\quad\quad p:=p.nxt;\\
\quad\quad\quad \OD
\end{array}}
Let
\labf{\eff}{\rw{s},\rw{p},\rd{r},\rd{r\Img val},\rd{r\Img nxt},\rd{r\Img del}}
We want to prove the judgment 
\[\begin{array}{l}
\splitbi{C_1}{C_1} \: : \: \Both(p\in r\land r\Img nxt\subseteq r)\land 
listnd(p)\eqbi listnd(p)
%\\ \quad\quad\quad\quad
\rspecSym s\eqbi s ~ [\eff]
\end{array}
\]
Let 
%\labf{D}{\IF\ \neg p.del\ \THEN\ s:=s+p.val\ \IF; p:=p.nxt;}
\labf{B}{\IF\ \neg p.del\ \THEN\ s:=s+p.val\ \IF}
\labf{D}{B; p:=p.nxt;}
\labf{DD}{\begin{array}[t]{l}
\WHILE\ p\not=\NULL\mid p\not=\NULL\ \cdot\ \leftF{p.del}\mid \rightF{p.del}\ \DO\\
\quad\quad\quad\quad\quad\quad\splitbi{D}{D}\ \OD
\end{array}}
We have $\splitbi{C_1}{C_1}\weave^* \syncbi{s:=0};DD$. By rule \rn{rWeave} (in
Fig.~\ref{fig:proofrulesR}, Sec.~\ref{sec:relLogic}), it is enough to prove 
\[\begin{array}{l}
\syncbi{s:=0};DD \: : % \\
\quad \Both(p\in r\land r\Img nxt\subseteq r)\land 
listnd(p)\eqbi listnd(p)
\rspecSym s\eqbi s ~ [\eff]
\end{array}
\]
By a unary judgment 
then \rn{rEmb} (Fig.~\ref{fig:proofrulesR})
and \rn{rConseq} (Sec.~\ref{app:relProofRules}) we have
\[\syncbi{s:=0}\: : \: \True\rspecSym s\eqbi s  ~ [\wri{s}] \]
By \rn{rFrame} and \rn{rConseq} (for subeffects)
we have 
\[\begin{array}{l}
\syncbi{s:=0} \: : \: \Both(p\in r\land r\Img nxt\subseteq r)\land listnd(p)\eqbi listnd(p)\rspecSym\\
\Both(p\in r\land r\Img nxt\subseteq r)\land listnd(p)\eqbi listnd(p) \land s\eqbi s ~ [\eff\mid\eff]
\end{array}
\]
By \rn{rSeq}, it is enough to show that 
\[DD\: : \: Q\rspecSym Q ~ [\eff\mid\eff]\]
where
\labf{Q}{\Both(p\in r\land r\Img nxt\subseteq r)\land listnd(p)\eqbi listnd(p) \land s\eqbi s}
which will serve as loop invariant in \rn{rWh}.
We have 
\[Q\imp(p=\NULL\eqbi p=\NULL)\lor\leftF{(p.del\land p\not=\NULL)}\lor\rightF{(p.del\land p\not=\NULL)}\]
The consequent follows from $listnd(p)\eqbi listnd(p)$.
It remains to prove the following three judgments:
%% \[\begin{array}{l}
%% \splitbi{D}{\skipc}\: : \: Q\land\leftF{(p.del\land p\not=\NULL)}\rspecSym Q[\eff\mid]\\
%% \splitbi{\skipc}{D}\: : \: Q\land\rightF{(p.del\land p\not=\NULL)}\rspecSym Q[\mid\eff]\\
%% \splitbi{D}{D}\: : \: Q\land\Both{(\neg p.del\land p\not=\NULL)}\rspecSym Q[\eff\mid\eff]
%% \end{array}
%% \]
\begin{eqnarray}
\label{leftWh}
&&\splitbi{D}{\skipc}\: : \: Q\land\leftF{(p.del\land p\not=\NULL)}\rspecSym Q ~ [\eff\mid]\\
\label{rightWh}
&&\splitbi{\skipc}{D}\: : \: Q\land\rightF{(p.del\land p\not=\NULL)}\rspecSym Q ~ [\mid\eff]\\
\label{bothWh}
&&\splitbi{D}{D}\: : \: Q\land\Both{(\neg p.del\land p\not=\NULL)}\rspecSym Q ~ [\eff\mid\eff]
\end{eqnarray}
(That is, the three premises of \rn{rWh}, after some simplification.)
To prove these, we use unary judgments and then embed.
Add variables $t$ and $l$ for use in assertions.
By \rn{If} we have 
\[B\: : \: p.del\land p\neq \NULL\land t=s\ \leadsto t=s ~ [\wri{s}]\]
By \rn{Frame} we get 
\[\begin{array}{l}
B\: : \: 
p.del\land p\neq \NULL \land t=s 
 \land  
(p\in r\land r\Img nxt\subseteq r\land l=listnd(p)) 
\\
\leadsto
t=s \land (p\in r\land r\Img nxt\subseteq r \land l=listnd(p)) ~ [\effe]
\end{array}
\]
where $\effe$ is $\eff\setminus\wri{p}$. 
For $p:=p.nxt$ we use the local axiom for field read, together with \rn{Conseq} to reason 
using the mathematical fact
$p\neq \NULL \land p.del \imp listnd(p)=listnd(p.nxt)$.
Then by \rn{Seq} we get
\[\begin{array}{l}
D\: : \: p\in r\land r\Img nxt\subseteq r\land l=listnd(p) \land t=s \land p.del\land\\ p\neq \NULL \leadsto
p\in r\land r\Img nxt\subseteq r \land l=listnd(p) \land t=s ~ [\eff]
\end{array}
\]
On the other hand, we have
\[\begin{array}{l}
\skipc\: : \: p\in r\land r\Img nxt\subseteq r\land l=listnd(p)\land t=s\\\leadsto
p\in r\land r\Img nxt\subseteq r \land l=listnd(p)\land t=s ~ [\eff]
\end{array}
\]
Since $t$ and $l$ are not written, we can use \rn{rEmb}, \rn{rFrame}, and \rn{rConseq} to get 
(\ref{leftWh}) and (\ref{rightWh}).
%% \[\begin{array}{l}
%% \splitbi{D}{\skipc}\: : \: Q\land\leftF{(p.del\land p\not=\NULL)}\rspecSym Q[\eff\mid]\\
%% \splitbi{\skipc}{D}\: : \: Q\land\rightF{(p.del\land p\not=\NULL)}\rspecSym Q[\mid\eff]\\
%% \end{array}\]
To be precise, we need the usual $\exists$-elimination rule (e.g.,~\cite{RegLogJrnI}),
to eliminate the preconditions $t=s$ and $l=listnd(p)$.
For (\ref{bothWh}) one could weave $\splitbi{D}{D}$ but there's no need to.  Similar steps to the preceding ones
can be used to obtain (\ref{bothWh}).

Finally, the sequence and loop rules have immunity side conditions,
variations on $\eff$ being $true/\eff$-immune.  
In brief, this condition simplifies to $true$ because the assigned variables $s$ and $p$ do not occur in region expressions in $\eff$.
In more detail, 
notice that the only image expressions 
in $\eff$ are $r\Img val,r\Img nxt,r\Img del$. 
Applying the $\indSymbol$ operator, we find that
%$\ind{\rd{r\Img val},\rd{r\Img nxt},\rd{r\Img del},\rd{r}}{\wri{s},\wri{p}}$ 
$\ind{\rd{r\Img val, r\Img nxt, r\Img del, r}}{\wri{s,p}}$ 
trivially simplifies to $true$. 

\section{Loop tiling example}

Loop tiling is a compiler optimization that changes program structure.
Here is an example from \cite{BartheCK13}.
%% \[
%% \begin{array}{l}
%% C_2\eqdef x := 0;\WHILE\; x < N*M\;\DO\;  a[x] := f(x); x\increment\;\OD\\
%% C_2'\eqdef i:=0;\WHILE\; i < N\;\DO\;j := 0;\\
%% \quad\quad\quad\quad\quad\quad\WHILE\;j < M\;\DO\; A[i, j] := f(i*M + j); j\increment\;\OD; \\
%% \quad\quad\quad\quad\quad\quad i\increment\;\OD
%% \end{array}
%% \]
\[
\begin{array}{l}
C_2\eqdef x := 0;\WHILE\; x < N*M\;\DO\;  a[x] := f(x); x\increment\;\OD\\
C_2'\eqdef i:=0;\WHILE\; i < N\;\DO\;j := 0;
     \; \WHILE\;j < M\;\DO\; A[i, j] := f(i*M + j); j\increment\;\OD;
     \; i\increment\;\OD
\end{array}
\]
These are not equivalent, but are equivalent modulo change of data representation.
We express this by the judgment 
\begin{equation}
\label{eq:tile}
\splitbi{C_2}{C_2'}: \: \Both{true}  \rspecSym \R(M * N, N, M)
\end{equation}
\labf{\R(x, i, j)}{
\!\!\!
\begin{array}[t]{l}
\all{l,r,s}{0\leq l < x \land 0\leq r < i \land 0\leq s < i \land 
 l=r *  M + s \imp a[l] \eqbi A[r,s]}
\end{array}
}
%\dn{Do we need to reconcile this with relational formula syntax?}
To prove it, we rely on some unconditional program equivalences
that change the control structure without altering the order of atomic commands
(see Lemma~\ref{lem:uncond}).
%% Unconditional equivalences:\\
%% \textit{Eager Loop:}
%% \begin{lstlisting}[mathescape]
%% while E do C od 
%%     $\equiv$ while E do (C; while E \land E0 do C od) od
%% \end{lstlisting}
%% \textit{Sandwich:}
%% \begin{lstlisting}[mathescape]
%% C $\equiv$ skip; C; skip
%% \end{lstlisting}
%% \textit{Prefactor:}
%% \begin{lstlisting}[mathescape]
%% while E do C od $\equiv$ if E do C fi; while E do C od
%% \end{lstlisting}
First, rewrite $C_2$ and $C_2'$ to 
\labf{C_3}{
\begin{array}[t]{l@{\,}l}
x := 0; 
&
\WHILE\; x < N*M\;\DO \\
&\quad\skipc; a[x] := f(x); x\increment;\\
&\quad\WHILE\; x < M*N \land x \% M \neq 0\;\DO \\
&\quad\quad a[x] := f(x); x\increment\;\OD;\\
&\quad\skipc;\OD
\end{array}
}
%% \labf{C_3}{
%% \begin{array}[t]{l@{\,}l}
%% x := 0; 
%% &
%% \WHILE\; x < N*M\;\DO\; \skipc; a[x] := f(x); x\increment;\\
%% &\quad\WHILE\; x < M*N \land x \% M \neq 0\;\DO \\
%% &\quad\quad a[x] := f(x); x\increment\;\OD;\\
%% &\quad\skipc;\OD
%% \end{array}
%% }
\labf{C_3'}{
\begin{array}[t]{l@{\,}l}
i := 0; &
\WHILE\; i < N\; \DO\; j := 0; \\
&\quad\IF\; j < M\;\THEN\; A[i, j] := f(i*M + j); j\increment\;\FI\\
&\quad\WHILE\; j < M\;\DO\; A[i, j] := f(i*M + j); j\increment\;\OD\\
&\quad i\increment\;\OD
\end{array}
}
%% \labf{C_3'}{
%% \begin{array}[t]{l@{\,}l}
%% i := 0;\\
%% \WHILE\; i < N\; \DO\; j := 0; \\
%% \quad\IF\; j < M\;\THEN\; A[i, j] := f(i*M + j); j\increment\;\FI\\
%% \quad\WHILE\; j < M\;\DO\; A[i, j] := f(i*M + j); j\increment\;\OD\\
%% \quad i\increment\;\OD
%% \end{array}
%% }
These rewrites change the control state without altering the trace of data states (modulo stuttering).
Formally, $C_2\uequiv C_3$ and $C_2'\uequiv C_3'$, where $\uequiv$ means unconditional equivalence
and is defined in Sec.~\ref{app:relProofRules}.
We apply the relational proof rule \rn{rEqu}, that is,
$\splitbi{C_2}{C_2'}$ satisfies the 
specification in (\ref{eq:tile}) if 
$\splitbi{C_3}{C_3'}$ does.
The rewrites are chosen so that we can weave $\splitbi{C_3}{C_3'}$  to 
the carefully aligned biprogram
\[ %\begin{equation}\label{eq:TilingWoven}
\begin{array}[t]{l} 
\Splitbi{x := 0}{i := 0};\\
\WHILE\; x < N * M \mid i < N \; \DO\\
\quad\Splitbi{\skipc}{j := 0};\\
\quad \big(a[x] := f(x);\;x\increment 
\quad \big|
\begin{array}[t]{l}
\IF\;j < M\;\THEN\;  %\\ \quad 
A[i, j] := f(i *  M + j);\;j\increment ~ \FI
\end{array}
\big)\\
\quad\WHILE\; x > M * N\;\land x \% M \neq 0 \mid j < M \; \DO\\
\quad\quad\Splitbi{a[x] := f(x); x\increment}{A[i, j] := f(i * M + j); j\increment} \\
\quad\OD;\\
\quad\Splitbi{\skipc}{i\increment}\\
\OD
\end{array}
\] %\end{equation}
To prove that this satisfies $\Both{true}  \rspecSym \R(M * N, N, M)$, 
we use 
\(
x\eqbi i\times M + j\land \R(x,i,j)
\) 
as the invariant for inner loop and 
\(
x\eqbi i\times M \land \R(x,i,0)
\)
as the invariant of the outer loop.

%A number of other optimizations can be treated similarly to the preceding examples, e.g.,
%dead code elimination and loop invariant hoisting.

%Dead code elimination: use \rn{rEmb} with unary judgments. \dn{That's what Benton does?}
%Loop invariant hoisting: use weaving and the consonant loop construct, similar to the %factorial example; unary reasoning is only needed for straight line code.

\section{Proof for Stack example}\label{app:stack}

In this section we provide a more detailed proof of equivalence for data representation example.
The sketch of the proof given in Section~\ref{sec:overview} glossed  
over dynamic allocation.
To fully consider dynamic allocation we use rule \rn{Alloc} in Fig.~\ref{fig:unaryRules} for unary judgments and the axiom of allocation 
mentioned in Sec.~\ref{sec:relLogic}.
%given under Fig.~\ref{fig:proofrulesR} for relational judgments. 
According to these rules we need to change $\effe$ in (\ref{eq:push}) to the following:
\[\effe\eqdef  \rw{rep,size,rep\Img\allfields,\lloc} \] 
and we also add $\later$ to the postcondition in (\ref{eq:ggx}).
We will use the fact, noted in Sec.~\ref{sec:relForm}, that 
\begin{equation}\label{eq:app:mono-distrib}
\later\P\land\Q \imp \later(\P\land\Q) \qquad \mbox{ is valid (for any $\P$ and any monotonic $\Q$)}
\end{equation}

Recall that $\Phi$ on page \pageref{disp:Phi:pg}
gives the relational specification for $push$. 
To show $\splitbi{B}{B'}$ satisfies the relational specification for $push$,
we weave it to 
\[ %\label{eq:bb'wv}
\begin{array}{c}
\big( \begin{array}{l}
            top := \NEW\ Node(top,x); \\
            rep := rep \unionC \{top\}
          \end{array}
\big |
          \begin{array}{l}
             \IF\ slots = \NULL\ \THEN\ \ldots \FI;\\
             slots[free\increment] := x
          \end{array}
\big) ;
\\
\syncbi{size\increment}
\end{array}
\]
Let $D$ and $D'$ name the split parts, so the woven code 
has the form $\splitbi{D}{D'};\syncbi{size\increment}$.
By unary reasoning and applying rules \rn{rEmb}, \rn{rFrame}, and \rn{rConseq}, we can show that $\splitbi{D}{D'}$ satisifies
\begin{equation}
\label{eq:ddd'}
\begin{array}{l}
   \Both{R}
   \land \L
   \land \Agr(size,x)
\rspecSym
   LtR
   \land \leftF{I}
   \land \rightF{I'_1}
   \land \Agr(size,x)
\\ {}%TeX
[\rw{top,rep,rep\Img\allfields}  % \lloc
\mid \rw{slots,free,rep}]         % \lloc
\end{array}
\end{equation}
Here $I'_1$ is a slight variant of $I'$, where the last conjunct is $free = size + 1$.
By unary logic we get
\[ 
\begin{array}{l}
size\increment : \:
   I 
   \leadsto
   (size=\keyw{old}(size) + 1)
   \land I 
   ~[\rw{size}]
\\ 
size\increment : \:
   I'_1 
   \leadsto
   (size=\keyw{old}(size) + 1)
   \land I' 
   ~[\rw{size}]
\end{array}
\]
An embedding rule lifts these to a relational judgment with agreements, 
%\dn{just rEmb and then frame?}\mn{I think $\Agr size$ shouldn't be in this judgment as precondition. Using RConseq we are adding it to the next judgment. Other than this the judgment and the rules used are correct.   }
and then \rn{rFrame} for $LtR$ yields
\[
\syncbi{size\increment} : \:
\begin{array}[t]{l}
   LtR 
   \land \leftF{I} 
   \land \rightF{I'_1} 
%   \land \Agr size
\\
   \rspecSym
   \Both{(size=\keyw{old}(size) + 1)}  \land 
   \Agr size \land \L 
 ~[\rw{size}]
\end{array}
\]
which by \rn{rConseq} shows that $\syncbi{size\increment}$ satisfies
\begin{equation}
\label{eq:size}
   LtR 
   \land \leftF{I} 
   \land \rightF{I'_1} 
   \land \Agr size
   \rspecSym
   \Both{S}  \land 
   \Agr size \land \L ~ [\rw{size}]
\end{equation}
From (\ref{eq:ddd'}) and (\ref{eq:size}) 
by rule \rn{rSeq} we get that 
$\splitbi{D}{D'};\syncbi{size\increment}$ satisfies the specification for $push$.  Hence by the weaving rule so does $\splitbi{B}{B'}$.

Now we aim to prove the revised version of (\ref{eq:ggx}), that is, 
\[
\Syncbi{Cli}\;:\; \Both{(size=0)} \land \L
\rspecSym \later(\Agr(p,size) \land \Agr \sing{p}\Img \allfields \land \L)
\]
where $\Syncbi{Cli}$ is the fully aligned biprogram 
\[ 
\syncbi{push(1)}; 
  \syncbi{p := \NEW\ Node(\NULL,2)};
  \syncbi{p.val := 3}; 
  \syncbi{push(4)} 
\]

The first command is a method call to $push$. From rule \rn{rCall} and the relational specification $\Phi$ of $push$, we derive 
\[
\syncbi{push(1)}\;:
\begin{array}[t]{l}
   \Both{R}
   \land \Agr size
   \land \L
\rspecSym
\later(
   \Both{S} 
   \land \Agr size
   \land \L)
\\ {}%TeX
[\effe, \rw{top} % \lloc
\mid 
\effe, \rw{slots,free}] % \lloc
\end{array}
\]
Notice that $\Both{(size=0)} \land \L$ implies $\Both{R} \land \Agr size \land \L$. Using rule \rn{rConseq} 
and a little sleight of hand we get 
%\dn{just use fresh vars and exist elim?}\mn{Yes, as explained in the next paragraph.}
\begin{equation}\label{eq:FPush}
\syncbi{push(1)}\;:
\begin{array}[t]{l}
  \Both{(size=0)} \land \L
\rspecSym\\
\later(   \Both(size =1 \land r=\lloc\land rep\subseteq r)
   \land \Agr size
   \land \L)
\\ {}%TeX
[\effe, \rw{top} % \lloc
\mid 
\effe, \rw{slots,free}] % \lloc
\end{array}
\end{equation}
The sleight of hand is to introduce a fresh ghost variable $r$ to snapshot $\lloc$.  (The condition $rep\subseteq r$ follows 
from $r=\lloc$.)  An entirely rigorous proof would add an assignment to $r$ but for clarity we will skip that.

The second command in $\Syncbi{Cli}$ is allocation. Using the axiom of allocation mentioned above, we derive
\[
\begin{array}{l}
\syncbi{p := \NEW\ Node(\NULL,2)}\;: \Both (r=\lloc)\rspecSym\\
\later (\Both(p\not=\NULL\land p\notin r) \land \Agr (p, \sing{p}\Img \allfields))\\
{[\wri{p},\rw{alloc}]}
\end{array}
\]
We aim to frame $rep\subseteq r$ over this judgment. Note that $\rd{rep,r}$ frames $rep\subseteq r$ and 
$\ind{\rd{rep,r}}{\wri{p},\rw{alloc}}$. By rules \rn{rFrame}, and \rn{rConseq} using the validity (\ref{eq:app:mono-distrib}), we get 
\[
\begin{array}{l}
\syncbi{p := \NEW\ Node(\NULL,2)}\;: \Both(r=\lloc\land rep\subseteq r)\rspecSym\\
\later (\Both(p\not=\NULL\land p\notin r\land rep\subseteq r) \land \Agr (p, \sing{p}\Img \allfields))\\
{[\wri{p},\rw{alloc}]}
\end{array}
\]
Using \rn{rConseq} we rewrite the postcondition to get 
\begin{equation}
\label{eq:allocp}
\begin{array}{l}
\syncbi{p := \NEW\ Node(\NULL,2)}\;: \Both(r=\lloc\land rep\subseteq r)\rspecSym\\
\later (\Both(p\not=\NULL\land p\notin rep) \land \Agr (p, \sing{p}\Img \allfields))\\
{[\wri{p},\rw{alloc}]}
\end{array}
\end{equation}
For the third command we use the unary \rn{FieldUpd} and \rn{rEmb} and \rn{rConseq} to get
\[
\syncbi{p.val:=3}\;:\;\Both(p\not=\NULL)\land\Agr(p)\rspecSym \Agr(p.val)[\rd{p},\wri{p.val}]
\]
We aim to use \rn{rFrame} on this last judgment to add $\Agr p,\sing{p}\Img nxt\land \Both(p\notin rep)$ to precondition and postcondition of this last 
judgment. Note that $\rd{p,\sing{p}\Img nxt, rep}$ frames $\Agr (p,\sing{p}\Img nxt)\land \Both (p\notin rep)$.
And $\ind{\rd{p,\sing{p}\Img nxt},rep}{\rd{p},\wri{p.val}}$ simplifies to true. 
Using these two facts and rule \rn{rFrame} we derive 
\[
\begin{array}{ll}
\syncbi{p.val:=3}\;:\Both(p\not=\NULL\land p\notin rep)\land\Agr(p,\sing{p}\Img nxt)\rspecSym\\
\Both(p\notin rep)\land\Agr(p,\sing{p}\Img nxt,\sing{p}\Img val )\\
{[\rd{p},\wri{p.val}]}
\end{array}
\]
Now we use rule \rn{rConseq} and \rn{rLater} to get 
\begin{equation}\label{eq:pvalch}
\begin{array}{ll}
\syncbi{p.val:=3}\;:\\
\later(\Both(p\not=\NULL\land p\notin rep)\land\Agr(p,\sing{p}\Img nxt))\\
\rspecSym
\later(\Both(p\notin rep)\land\Agr(p,\sing{p}\Img \allfields ))
{[\rd{p},\rw{p.val}]}
\end{array}
\end{equation}
The judgments (\ref{eq:allocp}) and (\ref{eq:pvalch}) are now ready to be unified by rule \rn{rSeq}. 
So, from \rn{rSeq} we derive
\[
\begin{array}{l}
\syncbi{p := \NEW\ Node(\NULL,2)};\syncbi{p.val:=3}:
\Both(r=\lloc\land rep\subseteq r)\\
\rspecSym \later(\Both(p\notin rep) \land \Agr (p, \sing{p}\Img \allfields))[\rw{p, alloc}]
\end{array}
\]
Actually we need to use the general form of \rn{rSeq} which, like \rn{Seq}, lets us remove $\rw{p.val}$ from the overall effects of this last judgment.
The general form of the \rn{rSeq} indicates that if in the first command some references are allocated and the second commands writes some fields of these newly allocated references then the second command's effects should have writes as well as reads of these fields.

Now we frame $\Both(size=1)\land\Agr size \land \L$ over the last judgment. 
% Note that $\rd{size,rep, free}$ frames $\Both(size =1)\land\Agr size \land \L$ and $\ind{\rd{size,rep, free}}{\rw{p, alloc}}$ is valid. 
\[
\begin{array}{l}
\syncbi{p := \NEW\ Node(\NULL,2)};\syncbi{p.val:=3}:\\
\Both(size = 1 \land r=\lloc\land rep\subseteq r)\land \Agr size \land \L\\
\rspecSym\later(\Both(p\notin rep) \land \Agr (p, \sing{p}\Img \allfields))
            \land \Both(size=1) \land \Agr size \land \L \\
{ } [\rw{p, alloc}]
\end{array}
\]
Now by \rn{rConseq} we get.
\[
\begin{array}{l}
\syncbi{p := \NEW\ Node(\NULL,2)};\syncbi{p.val:=3}:\\
\Both(size = 1 \land r=\lloc\land rep\subseteq r)\land \Agr size \land \L\\
\rspecSym\later(\Both(p\notin rep) \land \Agr (p, \sing{p}\Img \allfields, size)
            \land \Both(size=1) \land \L) \\
{ } [\rw{p, alloc}]
\end{array}
\]
Now using \rn{rLater}, followed by \rn{rConseq} instantiating the valid formula
$\later\later\P\imp\later\P$ we get 
\[
\begin{array}{l}
\syncbi{p := \NEW\ Node(\NULL,2)};\syncbi{p.val:=3}:\\
\later(\Both(size = 1 \land r=\lloc\land rep\subseteq r)\land \Agr size \land \L)\\
\rspecSym\later(\Both(p\notin rep) \land \Agr (p, \sing{p}\Img \allfields, size)
            \land \Both(size=1) \land \L) \\
{ } [\rw{p, alloc}]
\end{array}
\]
Using \rn{rSeq} on (\ref{eq:FPush}) and the last judgment we get
\[
\begin{array}{l}
\syncbi{push(1)};\syncbi{p := \NEW\ Node(\NULL,2)};\syncbi{p.val:=3}:\\
  \Both{(size=0)} \land \L
\rspecSym\\
\later(\Both(size = 1 \land p\notin rep) \land \Agr (p, \sing{p}\Img \allfields, size)\land \L)\\
{[\effe, \rw{top,p} % \lloc
\mid 
\effe, \rw{slots,free,p}] % \lloc
}
\end{array}
\]
Using \rn{rConseq} to remove $\Both( p\notin rep)$ from postcondition, we get 
\begin{equation}
\label{eq:threecoms}
\begin{array}{l}
\syncbi{push(1)};\syncbi{p := \NEW\ Node(\NULL,2)};\syncbi{p.val:=3}:\\
  \Both{(size=0)} \land \L
\rspecSym\later (\Both (size = 1)\land \Agr (p, \sing{p}\Img \allfields, size)\land \L)\\
{[\effe, \rw{top,p} % \lloc
\mid 
\effe, \rw{slots,free,p}] % \lloc
}
\end{array}
\end{equation}
For the last command we use \rn{rCall} similar to (\ref{eq:FPush}) to get 
\[
\syncbi{push(4)}\;:
\begin{array}[t]{l}
   \Both{R}
   \land \Agr size 
   \land \L
\rspecSym
\later(
   \Both{S} 
   \land \Agr size
   \land \L)
\\ {}%TeX
[\effe, \rw{top} % \lloc
\mid 
\effe, \rw{slots,free}] % \lloc
\end{array}
\]
We use \rn{rConseq} we derive
\[
\syncbi{push(4)}\;:
\begin{array}[t]{l}
   \Both{(size = 1)}
   \land \Agr size 
   \land \L
\rspecSym
\later( \Agr size
   \land \L)
\\ {}%TeX
[\effe, \rw{top} % \lloc
\mid 
\effe, \rw{slots,free}] % \lloc
\end{array}
\]
%We aim to frame $\Agr (p, \sing{p}\Img \allfields)$ over this judgment. Note that $\rd{p,\sing{p}\Img \allfields}$
%frames $\Agr (p, \sing{p}\Img \allfields)$, $\ind{\rd{p,\sing{p}\Img \allfields}}{\effe, \rw{top}}$ and
%$\ind{\rd{p,\sing{p}\Img \allfields}}{\effe, \rw{slots,free}}$. 
Using \rn{rFrame} and \rn{rConseq} we derive
\[
\syncbi{push(4)}\;:
\begin{array}[t]{l}
   \Both{(size = 1)}
   \land \Agr (p, \sing{p}\Img \allfields, size)\land \L
\rspecSym\\
 \Agr (p, \sing{p}\Img \allfields, size)\land \L
 {}%TeX
[\effe, \rw{top} % \lloc
\mid 
\effe, \rw{slots,free}] % \lloc
\end{array}
\]
To add the last command to (\ref{eq:threecoms}), we use \rn{rSeq} to derive 
\[
\begin{array}{l}
\Syncbi{Cli}:
  \Both{(size=0)} \land \L
\rspecSym\Agr (p, \sing{p}\Img \allfields, size)\land \L\\
{[\effe, \rw{top,p} % \lloc
\mid 
\effe, \rw{slots,free,p}] % \lloc
}
\end{array}
\]
This finishes the proof.

\section{Semantic consistency theorem}\label{app:consistency}

The ultimate point of the relational logic is to prove relational properties of ordinary
programs.  Theorem~\ref{thm:biprogram-soundness} connects biprogram semantics with unary semantics,
for hypothesis contexts that have only unary specifications.  Such contexts model ambient libraries, and are meaningful for biprograms as well as for ordinary commands.  By contrast, relational hypotheses can be introduced by rule \rn{rLink} for modular relational reasoning 
about linked subprograms.   

\begin{theorem}[semantic consistency]\label{thm:biprogram-soundness}
Suppose $\Phi$ has only unary specifications.
Suppose  $\Phi\rHPflowtr{}{}{\P}{\splitbi{C}{C'}}{\Q}{\eff|\eff'}$ is valid.
Consider any $\Phi$-interpretation $\phi$.
Consider any $\sigma,\sigma',\pi$ with $\sigma|\sigma'\models_\pi\P$.
If $\configm{C}{\sigma}{\_}\tranStar{\phi}\configm{\skipc}{\tau}{\_}$ and 
$\configm{C'}{\sigma'}{\_}\tranStar{\phi}\configm{\skipc}{\tau'}{\_}$
then $\tau|\tau'\models_\pi\Q$.
Furthermore, if $C$ does not diverge from $\sigma$ 
then both of these initial configurations are safe 
(i.e., do not fault).
\end{theorem}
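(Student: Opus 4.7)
The central observation is that the biprogram semantics of $\splitbi{C}{C'}$, via rules \rn{bSplitL} and \rn{bSplitR}, is designed so that $C$ runs to completion on the left before $C'$ begins on the right. Consequently, any pair of unary terminating traces corresponds to a canonical biprogram trace obtained by concatenation. The proof reduces the theorem to (i) a lifting lemma from unary to biprogram traces and (ii) an application of the validity of the given judgment.

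\paragraph*{Step 1 (Lifting lemma).}
First I would establish, by induction on the length of the unary derivation, the following for any $\Phi$-interpretation $\phi$ and any $C'$, $\sigma'$:
\begin{itemize}
\item If $\configm{C}{\sigma}{\_}\tranStar{\una\phi}\configm{D}{\tau}{\_}$, then
  $\configr{\splitbi{C}{C'}}{\sigma}{\sigma'}{\_}{\_}\biTranStar{}{\phi}\configr{\splitbi{D}{C'}}{\tau}{\sigma'}{\_}{\_}$, using \rn{bSplitL} at every step.
\item If $\configm{C}{\sigma}{\_}\tranStar{\una\phi}\Fault$, then $\configr{\splitbi{C}{C'}}{\sigma}{\sigma'}{\_}{\_}\biTranStar{}{\phi}\Fault$, using \rn{bSplitLX} at the faulting step.
\item The symmetric versions for the right side: once the left component is $\skipc$, every unary step of $C'$ lifts via \rn{bSplitR}, and a unary fault lifts via \rn{bSplitRX}.
\end{itemize}
Each case is a direct matching of unary rules with the corresponding biprogram rule; the only subtlety is that whenever a context call is taken, both the unary and biprogram semantics consult the same $\phi(m)$ (since $\Phi$ has only unary specifications), so the effect is identical.

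\paragraph*{Step 2 (Postcondition).}
Given the two terminating traces $\configm{C}{\sigma}{\_}\tranStar{\phi}\configm{\skipc}{\tau}{\_}$ and $\configm{C'}{\sigma'}{\_}\tranStar{\phi}\configm{\skipc}{\tau'}{\_}$, I concatenate the two lifted biprogram traces to obtain
$\configr{\splitbi{C}{C'}}{\sigma}{\sigma'}{\_}{\_}\biTranStar{}{\phi}\configr{\splitbi{\skipc}{\skipc}}{\tau}{\tau'}{\_}{\_}$, and the identification $\splitbi{\skipc}{\skipc}\equiv\syncbi{\skipc}$ puts this in the form required by the (Post) clause of validity. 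Since $\sigma|\sigma'\models_\pi\P$, the (Post) clause of the given valid judgment yields $\tau|\tau'\models_\pi\Q$.

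\paragraph*{Step 3 (Safety).}
Assume $C$ does not diverge from $\sigma$. Suppose for contradiction that $C$ faults from $\sigma$; then by the fault clause of the lifting lemma $\configr{\splitbi{C}{C'}}{\sigma}{\sigma'}{\_}{\_}\biTranStar{}{\phi}\Fault$, contradicting the (Safety) clause of the valid judgment. Hence $C$ from $\sigma$ is safe, and being non-divergent it terminates at some $\tau$. The lifted biprogram trace then reaches $\configr{\splitbi{\skipc}{C'}}{\tau}{\sigma'}{\_}{\_}$. If $C'$ from $\sigma'$ were to fault, the right-side lifting would extend the biprogram trace to $\Fault$, again contradicting (Safety). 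So $C'$ from $\sigma'$ is safe as well.

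\paragraph*{Main obstacle.}
The proof is essentially bookkeeping; the only delicate point is the lifting lemma in the presence of a possibly nondeterministic allocator and possibly nondeterministic context interpretations. The argument works per trace: we pick the particular allocation and interpretation choices witnessed by the given unary traces, and the biprogram step rules \rn{bSplitL}/\rn{bSplitR} consume exactly the same choices. Because the judgment is quantified over all $\phi$ and all reachable $\tau,\tau'$, this trace-by-trace lifting is enough to transfer the (Post) and (Safety) conclusions from the biprogram world back to the unary world.
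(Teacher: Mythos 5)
Your proposal is correct and follows essentially the same route as the paper: lift the two given unary traces to a single left-then-right biprogram trace ending in $\syncbi{\skipc}$, apply the (Post) clause of validity, and derive safety by contradiction with the (Safety) clause. The only difference is that you prove a bespoke lifting lemma for split biprograms directly from \rn{bSplitL}/\rn{bSplitR} (and their faulting variants), whereas the paper invokes its general unary-to-bi correspondence lemma (Lemma~\ref{lem:unary-to-bi}) and observes that only its case (a) can apply to a terminated split; your specialized version is a bit more self-contained here, while the paper's general lemma is reused elsewhere (e.g., for the soundness of \rn{rWeave}), and your handling of the nondeterminism from allocation and context calls matches the paper's.
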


To prove the theorem we use lemmas that connect biprogram and unary semantics; these lemmas are also used in proving soundness for some of the proof rules.

A \dt{trace} is a consecutive sequence of configurations,
numbered from 0. %.\footnote{A sequence is a function with domain an initial prefix of the natural numbers.}
Let $T$ be a biprogram trace and $U,V$ unary traces.  
A \dt{schedule of $U,V$ for $T$} is a pair $l,r$ with $l:(\dom(T))\to(\dom(U))$ and $r:(\dom(T))\to(\dom(V))$, each surjective and monotonic.
(For example, look at the sketches in Sec.~\ref{sec:biprogramSem}, where the 
dashed lines indicate how indices of the biprogram trace are mapped by $l$ and $r$.)
A schedule $l,r$ is an \dt{alignment} % DN formerly called 'correspondence'
of $U,V$ for $T$, written $align(l,r,T, U, V)$,
iff % $l,r$ is a schedule of $U,V$ for $T$ and
$U_{l(i)} = \Left{T_i}$ and $V_{r(i)} = \Right{T_i}$
for all $i$ in $\dom(T)$.

It is convenient to classify the biprogram transition rules as follows.
Rules \rn{bSeq} and \rn{bSeqX} simply close the transitions under command sequencing.
All the other rules apply to a non-sequence biprogram of some form; for any 
biprogram configuration that is not terminated, there is a unique one of these rules that applies.
In the case of context calls, this is a consequence of a condition (fault determinacy) in the definition of context interpretation.  
We dub this \dt{rule determinacy}.  
One consequence is that if a configuration can step to fault then that is the only possible step.

Among these non-sequence rules,
\rn{bSplitL}, \rn{bSplitLX}, and \rn{bWhL} take left-only steps, leaving the right side unchanged;
whereas \rn{bSplitR}, \rn{bSplitRX}, and \rn{bWhR} take right-only steps.
All the other rules are for both-sides steps.

%By inspection of the transition rules we have the following.

%\begin{lemma}[step consistency]\label{lem:consistency}
\begin{lemma}[bi-to-unary correspondence]\label{lem:bi-to-unary-plus}
Suppose $\phi$ is a $\Phi$-interpretation, with only unary specifications.
(a) For any step
$\configr{BB}{\sigma}{\sigma'}{\mu}{\mu'}\biTrans{}{\phi} \configr{CC}{\tau}{\tau'}{\nu}{\nu'}$, we have either
$\configm{\Left{BB}}{\sigma}{\mu} = \configm{\Left{CC}}{\tau}{\nu}$ or
$\configm{\Left{BB}}{\sigma}{\mu}\trans{\phi}\configm{\Left{CC}}{\tau}{\nu}$.
\emph{Mutatis mutandis} for the right.
(b) For any trace $T$ 
via $\biTrans{}{\phi}$, 
%without fault, DN: traces are sequences of configurations
there are unique traces $U,V$ via $\trans{\phi}$
and $l,r$ such that $align(l,r,T,U,V)$.
\end{lemma}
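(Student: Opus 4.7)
The plan is to prove part (a) by case analysis on the biprogram transition rule used in the step, and then to bootstrap part (b) by induction on the length of $T$, exploiting the rule determinacy observed just above the lemma.

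For (a), consider the rule used in the step $\configr{BB}{\sigma}{\sigma'}{\mu}{\mu'}\biTrans{}{\phi}\configr{CC}{\tau}{\tau'}{\nu}{\nu'}$. The rules of Figs.~\ref{fig:biprogTrans} and~\ref{fig:biprogTransU} fall into three groups. The both-sides rules (\rn{bSync}, \rn{bCall}, \rn{bCallE}, \rn{bLet}, the \rn{bIf*} family, \rn{bWhTT}, \rn{bWhFF}) are each engineered so that the left projection takes exactly one unary step under $\una{\phi}$; for instance the left projection of the \rn{bLet} redex is $\letcom{m}{C}{\Left{DD}}$, whose unary step yields $\Left{DD};\Endlet(m)$, matching the left projection of the successor $DD;\syncbi{\Endlet(m)}$. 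The left-only rules (\rn{bSplitL}, \rn{bSplitLX}, \rn{bWhL}) advance the left by one unary step while leaving $\Right{\cdot}$, $\sigma'$ and $\mu'$ unchanged, so the right projections of $BB$ and $CC$ are literally equal as configurations; the case \rn{bWhL} uses the identification $\syncbi{\skipc};CC\equiv CC$ on the right. The right-only rules are symmetric. Finally, \rn{bSeq} and \rn{bSeqX} inherit the property by invoking the analysis on the inner biprogram step, using $\Left{BB;DD} = \Left{BB};\Left{DD}$ together with closure of $\trans{\phi}$ under sequencing.

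For (b), I induct on the length of $T$. The base case is the singleton trace with $U = \langle\Left{T_0}\rangle$, $V = \langle\Right{T_0}\rangle$, $l(0)=r(0)=0$. For the inductive step, given witnesses $U,V,l,r$ for $T_0,\ldots,T_n$, consider the step $T_n\biTrans{}{\phi}T_{n+1}$ and apply part (a): if the left projection is unchanged, set $l(n{+}1)=l(n)$ and leave $U$ as is; otherwise extend $U$ with the unary successor and set $l(n{+}1)=l(n)+1$. Proceed analogously on the right. Monotonicity is immediate, and surjectivity is preserved because $U$ grows exactly when the left projection advances. Uniqueness of $U,V,l,r$ follows because the alignment equations together with surjectivity and monotonicity force $U$ to be the sequence of consecutive distinct left projections of $T$; rule determinacy then ensures that exactly one case of (a) applies at each biprogram step, so the increment of $l$ is forced.

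The main obstacle I anticipate is the book-keeping in part (a): each case requires aligning the syntactic projection operators with the structural identifications in the Biprograms tdisplay (e.g.\ $\skipc;C\equiv C$, $\syncbi{\skipc};CC\equiv CC$, $\splitbi{\skipc}{\skipc}\equiv\syncbi{\skipc}$), particularly for \rn{bLet}, \rn{bWhTT}, and \rn{bWhL}/\rn{bWhR}, where the successor contains a freshly introduced $\splitbi{\cdot}{\skipc}$ or $\splitbi{\skipc}{\cdot}$ whose projection must collapse appropriately for the untouched side. Once these projection identities are verified, the case analysis for (a) is mechanical and the induction in (b) is routine.
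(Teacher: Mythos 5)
Your proposal follows essentially the same route as the paper: part (a) by case analysis on the biprogram transition rules, sorted into both-sides, left-only, right-only, and the sequencing closures; part (b) by induction on the length of $T$ with the same three sub-cases, and the same construction of $U,V,l,r$. Your explicit treatment of the projection identities and of uniqueness in (b) is a welcome addition the paper leaves implicit.

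One point needs correcting. You list \rn{bCall} among the both-sides rules and claim its left projection ``takes exactly one unary step under $\una{\phi}$.'' That is not how this case is handled, and the claim would in general be false: \rn{bCall} fires when $\phi(m)$ is a \emph{relational} interpretation, whose outcome $(\tau,\tau')\in\phi(m)(\sigma|\sigma')$ need not factor through any unary interpretation of $m$, so the left projection $m()$ has no matching unary transition. The correct observation --- and the one place the hypothesis that $\Phi$ has only unary specifications actually does work in part (a) --- is that \rn{bCall} and \rn{bCallX} simply cannot fire: with only unary specifications, a context call to $m$ is resolved by \rn{bSync} (or \rn{bSplitL}/\rn{bSplitR}), so these cases are vacuous rather than ``engineered'' to project correctly. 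With that adjustment the case analysis, and hence the rest of your argument, goes through as in the paper.
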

\begin{proof}
Part (a) is by case analysis of the biprogram transition rules.
Rules \rn{bCall} and \rn{bCallX} are not relevant because they are for relational specifications and $\phi$ has only unary ones.
In all other cases, it is straightforward to check that the rule corresponds to a unary step on one or both sides, 
and in case it is a step on just one side the other side remains unchanged.

For part (b) the proof goes by induction on $T$ and case analysis on the rule by which the last step was taken.
Recall that traces are indexed from 0.  
The base case is $T$ comprised of a single configuration, $T_0$.
Let $U$ be $\Left{T_0}$, $V$ be $\Right{T_0}$, and let both $l$ and $r$ be the singleton mapping $\{ (0,0) \}$.
For the induction step, suppose $T$ has length $n+1$ and let $S$ be the prefix including all but the last configuration $T_n$.  
By induction hypothesis we get $l,r,U,V$ such that $align(l,r,S,U,V)$.
There are three sub-cases, depending on whether the step from $T_{n-1}$ to $T_n$ is a
left-only step (rule \rn{bSplitL} or \rn{bWhL}), or right-only, or both sides.
In the case of left-only, Let $U'$ be $U \Left{T_n}$, let $l'$ be $l\union \{ (n,len(U)) \}$.
Then $align(l',r,T,U',V)$.  
The other two sub-cases are similar. 
\end{proof}

Next, we need a result going from unary to biprogram traces, which is more intricate.

\begin{lemma}[unary-to-bi correspondence]\label{lem:unary-to-bi}
Suppose $\Phi$ has only unary specifications, and $\phi$ is a $\Phi$-interpretation.
Let $cfg$ be a biprogram configuration.
%Consider any biprogram configuration $cfg$.
Let $U$ be a trace via $\phi$ from $\Left{cfg}$, 
and $V$ via $\phi$ from $\Right{cfg}$.
% both traces without fault. DN traces are seq's of configs
Then there is trace $T$ via $\phi$ from $cfg$
and traces $W$ from $\Left{cfg}$ and
$X$ from $\Right{cfg}$
and $l,r$ with $align(l,r,T,W,X)$, such that either
\begin{list}{}{}
\item[(a)] $U\leq W$ and $V\leq X$,
\item[(b)] $U\leq W$ and $W$ faults next %or is stuck but we prevent that by nonempty interpretations
and so does $T$, 
\item[(c)] $V\leq X$ and $X$ faults next %or is stuck 
and so does $T$,
\item[(d)] $U\leq W$ and $W$ is diverging and so is $T$,
\item[(e)] $V\leq X$ and $X$ is diverging and so is  $T$, or
\item[(f)] $W\leq U$ or $X\leq V$ and the last configuration of $T$
(which is aligned with the last ones of $W$ and $X$) faults due to 
lack of agreement for if-biprogram or while-biprogram, i.e., transitions \rn{bWhX}, \rn{bIfX}.
% not  \rn{tSyncCallX} which is for relational specs.  
\end{list}
Moreover, if $U,V$ are the projections of a biprogram trace (see Lemma~\ref{lem:bi-to-unary-plus}) then case (f) does not occur.
\end{lemma}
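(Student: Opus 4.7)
The plan is to construct $T$ as the unique maximal biprogram trace from $cfg$ under $\biTrans{}{\phi}$, with allocation choices synchronized to those of $U$ and $V$; then $W$ and $X$ are obtained as its unary projections via Lemma~\ref{lem:bi-to-unary-plus}(b). The construction proceeds stepwise: extend the current trace by the unique applicable transition (rule determinacy applies, since $\Phi$ is unary and so \rn{bCall}/\rn{bCallX} do not arise), and whenever the transition delegates to a unary step that allocates, pick the fresh reference to agree with the corresponding step in $U$ (or $V$). This is always feasible because $\fresh(\sigma)$ is a nonempty set and, by the invariant maintained along the construction, the current biprogram configuration's left (respectively right) component coincides with $U_{l(k)}$ (resp.\ $V_{r(k)}$) whenever $l(k) < |U|$ (resp.\ $r(k) < |V|$).

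The proof then reduces to a case analysis on how $T$ terminates. If $T$ reaches $\syncbi{\skipc}$, both $W$ and $X$ are complete normally-terminating unary runs starting from $\Left{cfg}$ and $\Right{cfg}$; by unary determinism with matched allocation, $U\leq W$ and $V\leq X$, giving case~(a). If $T$ faults via a rule whose firing corresponds to a left-side unary fault (\rn{bSplitLX}, or \rn{bSyncX} on the left), then $W$ faults next and, since $U$ cannot extend past its own trace, $U\leq W$, yielding case~(b); case~(c) is symmetric. If $T$ is infinite, then either the trace takes infinitely many steps that advance the left projection (giving case~(d), with $W$ diverging and $U\leq W$), or infinitely many that advance the right (case~(e)). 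If $T$ faults via \rn{bIfX} or \rn{bWhX}, the unary traces $U,V$ need not fault at the aligned positions, so $W\leq U$ and $X\leq V$ at that point, yielding case~(f).

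The main obstacle is the bookkeeping in the synchronized construction: in particular, for rules like \rn{bSync} on an allocation, one must verify that fresh references can be chosen independently on the two sides to match the respective unary steps in $U$ and $V$---this holds because \rn{bSync}'s two unary sub-transitions are independent and each is free to inherit the fresh choice made in the corresponding unary trace. The final ``moreover'' clause---that case~(f) cannot occur when $U,V$ are themselves the projections of some biprogram trace $T'$ from $cfg$---follows from rule determinacy: $T$ and $T'$ must perform the same sequence of biprogram transitions under synchronized allocations, and since $T'$ does not fault via \rn{bIfX}/\rn{bWhX} (its left/right projections are the given non-faulting traces $U,V$), neither does $T$.
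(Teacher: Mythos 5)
Your proposal follows essentially the same route as the paper's proof: an iterative construction of $T$ together with its projections $W,X$ and the schedule $l,r$, resolving nondeterministic choices to match $U$ and $V$, followed by a case analysis on whether the construction covers both given traces, faults on one side, diverges, or hits an alignment fault. Two small repairs are needed: (i) besides allocation, context calls are also nondeterministic --- an interpretation $\phi(m)(\sigma)$ returns a \emph{set} of outcomes --- so those choices must likewise be resolved to match the corresponding steps of $U$ and $V$ (the fix is identical to the allocation case, since the same interpretation governs the unary and biprogram transitions); and (ii) in your \rn{bIfX}/\rn{bWhX} case the correct conclusion is the disjunction $W\leq U$ or $X\leq V$ rather than the conjunction you assert, and if the alignment fault occurs only after both $U\leq W$ and $V\leq X$ have been achieved then case (a) already applies, which is why the paper stops its construction as soon as both given traces are covered.
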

\begin{proof}
%In brief, the proof constructs $T$ and the alignment by executing $cfg$ and connecting each step with executions of %$\Left{cfg}$ and $\Right{cfg}$.  This involves analysis of each of the biprogram transition rules.
%
First, we define an iterative procedure in which 
$l,r,W,X,T$ are treated as mutable variables.  It maintains this invariant: 
\[ align(l,r,T,W,X) \mbox{ and } (U\leq W \lor  W\leq U) \mbox{ and } (V\leq X \lor X\leq V)\]
$\bullet$ Initialize $W,X,T$ to the singleton traces $\Left{cfg}$, $\Right{cfg}$, and $cfg$ respectively.
Let $l(0)=0$ and $r(0)=0$ (and otherwise $l$ and $r$ are undefined).  \\
$\bullet$ While $(U\nleq W \mbox{ or } V\nleq X)$ and neither $W$, $X$, nor $T$ faults next, do the following:
\begin{list}{}{}
\item[(step A)] Let $k=len(T)-1$, so $k$ is the index of the last configuration of $T$.
Note that $l(k)$ and $r(k)$ index the last configurations of $U$ and $W$ respectively.
\item[(step B)]
If the rule that applies to $T_k$ is a left-only step (rule \rn{tSplitL} or \rn{tWhL}, since $T$ does not fault next),
then extend $l$ by $l(k+1)=l(k)+1$ (noting this is less than $len(U)$ because the loop guard and invariant imply $W$ is a strict prefix of $U$) and extend $r$ by $r(k+1)=r(k)$.
If right-only, extend by $r(k+1)=r(k)+1$ and $l(k+1)=l(k)$.
Otherwise, extend $l(k+1)=l(k)+1$ and $r(k+1)=r(k)+1$.
\item[(step C)]
Extend $T$ by one step via $\biTrans{}{\phi}$.
There may be nondeterministic choices to make on one or both sides, due to allocation and due to context calls; resolve these choices to match the configurations $U_{l(k+1)}$ and/or $V_{r(k+1)}$.
% FUTURE With var, we need to choose the right variable names.
For allocation, this can be done because the same allocator is used by $\trans{\phi}$ and $\biTrans{}{\phi}$.
For context call, this can be done because the same interpretation is used.
(Because $\phi$ has only unary specifications, context calls go by rule \rn{bSync},
\rn{bSplitL}, or \rn{bSplitR}, not \rn{bCall}.)

%% % FUTURE With var and param passing, we need to choose the right variable names.
\end{list}
To see that the invariant holds following step C, note that 
the invariant implies $\Left{T_k} = W_{l(k)}$ and $\Right{T_k} = X_{r(k)}$, for $k=len(T)-1$.
Then by construction 
%so by step consistency Lemma \ref{lem:consistency} 
we get a match for the new configuration: $\Left{T_{k+1}} = W_{l(k+1)}$ and $\Right{T_{k+1}} = X_{r(k+1)}$.

Now we can prove the lemma. If the loop terminates because guard condition 
$(U\nleq W \lor V\nleq X)$ became false then we have (a).
If it terminates because $W$ faults next then we have (b),
using invariant $(U\leq W\lor W\leq U)$.
Similarly, we get (c) if it terminates because $X$ faults next.
If it terminates because $T$ faults, and case (a) does not hold, then we have (f) owing to the invariants $U\leq W \lor  W\leq U$ and $V\leq X \lor X\leq V$.
If the loop fails to terminate we get (d) or (e). The reason is that every iteration lengthens both $T$ and either $W$ or $X$ (or both), so eventually either $U\leq W$ or $V\leq X$.  It is possible for left-only steps to diverge while
$V$ is still a proper prefix of $X$, as shown in earlier examples, and then we have (d); \emph{mutatis mutandis} for (e).

Finally, suppose $U,V$ are the projections of a biprogram trace $S$ (see Lemma~\ref{lem:bi-to-unary-plus}).
Then case (f) cannot happen: $T$ cannot fault until at least $S\leq T$, at which point the loop terminates and case (a) applies.
\end{proof}

\paragraph*{Proof Theorem~\ref{thm:biprogram-soundness}.}

Given terminated traces $U$ and $V$ of $\configm{C}{\sigma}{\mu}$ and $\configm{C'}{\sigma'}{\mu'}$, we can apply Lemma~\ref{lem:unary-to-bi} to obtain a trace $T$ of $\configr{\splitbi{C}{C'}}{\sigma}{\sigma'}{\mu}{\mu'}$
for which condition (a) in the Lemma holds. 
By correspondence (Lemma~\ref{lem:bi-to-unary-plus}(a)),
 $T$ must be terminated, so by the correctness judgment we get $\Q$ in the final state.
To prove safety, suppose $\configm{C}{\sigma}{\mu}$ can fault.  
Then by semantics, $\configr{\splitbi{C}{C'}}{\sigma}{\sigma'}{\mu}{\mu'}$
takes left-only steps until it reaches the fault, contrary to the assumed correctness judgment.
Suppose $\configm{C'}{\sigma'}{\mu'}$ can fault.  
By semantics, and the assumption that $C$ does not diverge from $\sigma$, $\configr{\splitbi{C}{C'}}{\sigma}{\sigma'}{\mu}{\mu'}$
reaches $\configr{\splitbi{\skipc}{C'}}{\tau}{\sigma'}{\mu}{\mu'}$ for some $\tau$, and then proceeds with right-only steps for $C'$.
So a fault of $C'$ gives rise to a fault of $\splitbi{C}{C'}$, contrary to the assumed correctness judgment.

\section{Soundness theorem}\label{app:sound}

\begin{theorem}\label{thm:sound}
All the proof rules for relational correctness judgments are sound with respect to the semantics in Sec.~\ref{app:relProofRules}.
\end{theorem}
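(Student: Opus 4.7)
The plan is to proceed by case analysis on the proof rules, unfolding the definition of validity of the conclusion: fix arbitrary $\sigma,\sigma'$, refperm $\pi$, and $\Phi$-interpretation $\phi$ with $\sigma|\sigma'\models_\pi\P$, and then verify safety, (Post), (Write Effect), and (Read Effect) for every biprogram trace produced by $\biTrans{}{\phi}$. For the structural rules (rConseq, rLater, rEmb, rEqu) the arguments follow directly from the semantic clauses for formulas and the subeffect characterization, once one observes that $\rn{rEqu}$ is justified by unconditional equivalence, which by definition preserves trace endpoints and fault potential. Rule rCall is immediate from the definition of relational context interpretation. For rFrame, I use the framing judgment $\P\models\fra{\effe|\effe'}{\R}$ together with the standard lemma that $\sigma\allowTo\tau\models\eff$ and the disjointness side condition $\ind{\effe}{\eff}$ imply $\agree(\sigma,\tau,\effe,id)$, so the frame relation $\R$ transports through the transitions. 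For rSeq, I combine the immunity side conditions (which guarantee that the read footprints of $\eff_2|\eff_2'$ are preserved across the first biprogram) with compositionality of $\biTrans{}{\phi}$, essentially lifting the unary argument of~\cite{BanerjeeNN15,RegLogJrnII} to the binary setting.

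For rIf and rIf4, transitions bIfTT, bIfFF, and bIfX align exactly with the branches of the premises, and the entailment $\P\imp E\eqbi E'$ rules out bIfX in the aligned case. For rWh, I induct on the number of loop iterations in a given biprogram trace: each iteration fires exactly one of bWhL, bWhR, bWhTT (exhaustively, by the side condition $\Q\imp E\eqbi E'\lorbi\cdots$, which precludes bWhX), and the three premises preserve the invariant in the respective cases; the refperm-independence conditions $\later\P\imp\P$ and $\later\P'\imp\P'$ let one read the alignment guards semantically under any extension of $\pi$, reconciling the absence of refperms in the biprogram transitions with their presence in validity. For rWeave, the $terminates$ side condition is exactly what rules out cases (d) and (e) of Lemma~\ref{lem:unary-to-bi}; combining Lemma~\ref{lem:bi-to-unary-plus} (applied to $CC$) with Lemma~\ref{lem:unary-to-bi} (applied to $DD$) then shows that any terminating pair of unary executions of $\Left{CC},\Right{CC}$ corresponds to a terminating $DD$-trace, so the conclusions (Post), (Write Effect), and (Read Effect) transfer from the premise judgment about $DD$ to the conclusion about $CC$. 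Safety transfers likewise since case (f) of Lemma~\ref{lem:unary-to-bi} cannot arise from projections of a biprogram trace.

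The main obstacle is rLink. Given validity of the body judgment $\rHPflowtr{}{}{\R}{\splitbi{B}{B'}}{\S}{\effe}$ and of the client judgment $\rflowtr{\R}{m}{\S}{\effe}\rHPflowtr{}{}{\P}{\Syncbi{C}}{\Q}{\eff}$, one must construct, from an arbitrary $\Phi$-interpretation $\phi$, an extension $\phi' = \phi[m\mapsto\theta]$ that is a $\Phi,m{:}\rflowty{\R}{\S}{\effe}$-interpretation, and show that every let-biprogram trace via $\phi$ corresponds to a $\Syncbi{C}$-trace via $\phi'$. The natural definition is $\theta(\sigma,\sigma') = \{(\tau,\tau')\mid\configr{\splitbi{B}{B'}}{\sigma}{\sigma'}{\mu}{\mu'}\biTranStar{}{\phi}\configr{\syncbi{\skipc}}{\tau}{\tau'}{\mu}{\mu'}\}\cup\{\Fault\text{ if unsafe or }\neg\exists\pi.\,\sigma|\sigma'\models_\pi\R\}$; verifying the interpretation clauses (a), (post), (write), and especially (read)(i)–(ii) reduces to the read condition of the body judgment, with the refperm composition $\dot\pi;\pi$ being matched against witnesses provided by that judgment. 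The delicate point, as in the unary analogue in~\cite{BanerjeeNN15,RegLogJrnII}, is that the small-step semantics requires $\theta$ to be defined \emph{before} the client's execution, so the construction must be justified by an induction on the number of calls to $m$ in the client's trace, bootstrapping $\phi'$ incrementally; the fault-determinacy clause is what ensures the resulting transition system is deterministic enough to fit the semantics of $\biTrans{}{\phi'}$. Once $\phi'$ is built, rule determinacy and the bLet, bCall transitions align a let-biprogram trace under $\phi$ step-by-step with a $\Syncbi{C}$-trace under $\phi'$, transferring all four correctness conditions from the client premise.
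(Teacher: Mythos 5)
Your overall architecture matches the paper's: the appendix likewise treats most rules as routine semantic unfoldings, proves \rn{rWeave} via the two correspondence lemmas (Lemma~\ref{lem:bi-to-unary-plus} and Lemma~\ref{lem:unary-to-bi}, packaged as Lemma~\ref{lem:weave-trace}), uses the $terminates$ side conditions exactly as you do to exclude the divergence cases, and defers \rn{rLink} to the interpretation-construction-plus-trace-induction strategy of the unary linking proofs, which is what you sketch.

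The one genuine gap is the safety half of \rn{rWeave}. Your argument covers the terminating case (reconstruct a $DD$-trace from the projections of a terminating $CC$-trace; the ``moreover'' clause excluding case (f) of Lemma~\ref{lem:unary-to-bi} applies there), but it does not establish that $CC$ cannot \emph{fault}. A faulting trace of $CC$ need not project to faulting unary traces --- the fault may be an alignment fault of $CC$ itself (via \rn{bIfX} or \rn{bWhX}), or a unary fault reached only under $CC$'s left-then-right scheduling of a split --- so applying Lemma~\ref{lem:unary-to-bi} to the projections can land you in case (a) with a non-faulting $DD$-trace, and no contradiction follows. The paper closes this with a separate rule induction on $\weave$ (Lemma~\ref{lem:weave-safe}), showing that any fault of the less-woven $CC$ forces a fault or divergence of the more-woven $DD$; this is the longest case analysis in the soundness proof and is not recoverable from the correspondence lemmas alone. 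You would need to state and prove that lemma (or an equivalent) to complete the argument.
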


The soundness proofs are straightforward for many of the rules.
As noted in the body of the paper, the proof for $\rn{rLink}$
(Fig.~\ref{fig:proofrulesR}) 
follows the lines of the soundness proof for the linking rules in~\cite{RegLogJrnII} and~\cite{BanerjeeNN15}.
It involves induction on biprogram traces, and the relational hypothesis can be used because the relevant context 
calls are aligned.

\paragraph*{Soundness of rule \rn{rWeave}}

We prove this rule in detail.  The argument may illuminate some design choices in the semantics of biprograms.

We write $\equiv$ for syntactic identity.

\begin{lemma}[weave and project]\label{lem:weave-project}
If $CC\weave DD$ then
$\Left{CC}\equiv \Left{DD}$ and 
$\Right{CC}\equiv \Right{DD}$.
\end{lemma}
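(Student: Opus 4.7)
The plan is to proceed by induction on the derivation of $CC \weave DD$, using the inductive definition of $\weave$ via the weaving axioms and the congruence rules listed in Section~\ref{app:relProofRules}. Since the syntactic projections $\Left{\cdot}$ and $\Right{\cdot}$ are defined compositionally over biprogram constructors, each case reduces to unfolding the definitions of the projections on both sides of the weave relation.

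For the base cases, I would check each of the five weaving axioms directly. For $\splitbi{A}{A} \weave \syncbi{A}$, both sides project on the left to $A$ (since $\Left{\splitbi{A}{A}} = A$ by the split clause and $\Left{\syncbi{A}} = A$ by the sync clause), and symmetrically for the right. For $\splitbi{C;D}{C';D'} \weave \splitbi{C}{C'};\splitbi{D}{D'}$, the left projection of the LHS is $C;D$, while the left projection of the RHS is $\Left{\splitbi{C}{C'}};\Left{\splitbi{D}{D'}} \equiv C;D$ by the sequencing clause of projection. The conditional, loop, and let-axioms are handled analogously, with the loop case relying on the fact that $\Left{\whilecbiA{E\smallSplitSym E'}{\P\smallSplitSym \P'}{\splitbi{C}{C'}}} \equiv \whilec{E}{C}$, matching $\Left{\splitbi{\whilec{E}{C}}{\whilec{E'}{C'}}}$, regardless of the chosen alignment guards $\P,\P'$.

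For the inductive step, I would treat each congruence rule uniformly. Suppose the rule has premise $BB \weave CC$ and places $BB$ (respectively $CC$) inside a biprogram context. By the induction hypothesis, $\Left{BB} \equiv \Left{CC}$ and $\Right{BB} \equiv \Right{CC}$. Then the projection clauses for the enclosing constructor --- sequencing, conditional, while, and let --- all distribute over the contained sub-biprogram, so substituting $\Left{CC}$ for $\Left{BB}$ (or vice versa for the right) yields syntactic identity of the full projections. For instance, for the sequencing congruence $BB;DD \weave CC;DD$, we have $\Left{BB;DD} \equiv \Left{BB};\Left{DD} \equiv \Left{CC};\Left{DD} \equiv \Left{CC;DD}$, and similarly on the right.

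I do not expect any real obstacle: the lemma is a purely syntactic compatibility property, and the only modest subtlety is recognizing that the choice of alignment guards $\P,\P'$ introduced by the loop-weaving axiom is invisible to the projections, which is immediate from the projection clause for $\whilecbiA{E\smallSplitSym E'}{\P\smallSplitSym \P'}{\cdot}$ discarding the guards. The entire proof is thus a straightforward structural induction, serving principally as a sanity check that the weaving relation preserves the underlying pair of commands.
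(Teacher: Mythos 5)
Your proposal is correct and matches the paper's proof exactly: both proceed by rule induction on the definition of $\weave$, discharging the weaving axioms by directly unfolding the syntactic projections and the congruence rules by the induction hypothesis together with the compositionality of $\Left{\cdot}$ and $\Right{\cdot}$ (your sequencing example is literally the one the paper gives). Your added remark that the alignment guards $\P,\P'$ are discarded by the projection of the loop form is a correct and worthwhile observation, though the paper does not spell it out.
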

\begin{proof}
By induction on the rules for $\weave$ (near the end of Sec.~\ref{app:relProofRules}),
making straightforward use of the definitions of the syntactic projections.

As an example, 
for the if-else axiom we have
$\Left{ \Splitbi{ \ifc{E}{C}{D} }{ \ifc{E'}{C'}{D'} } }
\equiv \ifc{E}{C}{D}
\equiv \ifc{E}{\Left{\splitbi{C}{C'}}}{\Left{\splitbi{D}{D'}}}
\equiv \Left{ \ifcbi{E\smallSplitSym E'}{\splitbi{C}{C'} }{ \splitbi{D}{D'} }}$.

As an example inductive case, 
for the rule from $BB\weave CC$ infer $BB;DD \weave CC;DD$, we have
$\Left{BB;DD} \equiv \Left{BB};\Left{DD} \equiv \Left{CC};\Left{DD} \equiv \Left{CC;DD}$
where the middle step is by induction hypothesis.
\end{proof}

\begin{lemma}[weave and trace]\label{lem:weave-trace}
Suppose $\Phi$ has only unary specifications, $BB \weave CC$,
and $\phi$ is a $\Phi$-interpretation.
If 
$\configm{BB}{\sigma|\sigma'}{\mu|\mu'}
 \biTranStar{}{\phi}
 \configm{\syncbi{\skipc}}{\tau|\tau'}{\mu|\mu'}$
then 
$\configm{CC}{\sigma|\sigma'}{\mu|\mu'}$ either faults, diverges,
or 
$\configm{CC}{\sigma|\sigma'}{\mu|\mu'}
 \biTranStar{}{\phi}
 \configm{\syncbi{\skipc}}{\tau|\tau'}{\mu|\mu'}$
(for any $\sigma,\sigma',\tau,\tau',\mu,\mu'$).
\end{lemma}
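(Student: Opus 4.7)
The plan is to reduce to the unary transition semantics by projection, and then lift back to the biprogram level using the correspondence lemmas. First, by Lemma~\ref{lem:weave-project}, weaving preserves projections, so $\Left{BB}\equiv\Left{CC}$ and $\Right{BB}\equiv\Right{CC}$. Second, applying Lemma~\ref{lem:bi-to-unary-plus}(b) to the given terminating biprogram trace of $BB$ extracts unary traces $U$ of $\Left{BB}$ from $\sigma$ ending in $\skipc$ at $\tau$, and $V$ of $\Right{BB}$ from $\sigma'$ ending in $\skipc$ at $\tau'$. Because the projections coincide, $U$ and $V$ are also traces of $\Left{CC}$ and $\Right{CC}$ under the same $\phi$.

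Third, I would apply Lemma~\ref{lem:unary-to-bi} to $CC$ together with these $U$ and $V$. This yields a biprogram trace $T$ of $\configm{CC}{\sigma|\sigma'}{\mu|\mu'}$ along with unary traces $W$, $X$ and an alignment $l, r$, satisfying one of clauses (a)--(f) of the lemma. The key observation is that $U$ and $V$ end in $\skipc$, which has no outgoing transitions, so any unary trace $W$ with $U\leq W$ must in fact equal $U$, and similarly $X=V$. This immediately rules out clauses (b), (c), (d), and (e), each of which requires the relevant projection to be strictly extended beyond $U$ or $V$ by faulting next or diverging.

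Only clauses (a) and (f) remain. In clause (f) the biprogram $T$ faults at an if- or while-biprogram due to lack of agreement, so $CC$ faults along $T$, which is one of the permitted outcomes. In clause (a), $W=U$ and $X=V$, so $T$ is aligned with the two terminated unary traces; by Lemma~\ref{lem:bi-to-unary-plus}(a), the final configuration of $T$ has left projection $\configm{\skipc}{\tau}{\mu}$ and right projection $\configm{\skipc}{\tau'}{\mu'}$. Under the identification conventions on biprograms such a configuration is $\configm{\syncbi{\skipc}}{\tau|\tau'}{\mu|\mu'}$, so $T$ witnesses termination of $CC$ at $\tau|\tau'$, as required.

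The main obstacle I expect is the last step: confirming that a biprogram configuration whose left and right projections are both $\skipc$ is necessarily (equivalent to) $\syncbi{\skipc}$. This relies on the identifications $\splitbi{\skipc}{\skipc}\equiv\syncbi{\skipc}$ and $(\skipc;C)\equiv C$ and calls for a small structural induction on biprogram forms---for sequencing, aligned conditionals, and aligned loops, one must check that the active construct has been fully reduced whenever both projections have become $\skipc$. A subsidiary technical point is that the construction inside Lemma~\ref{lem:unary-to-bi} resolves nondeterministic allocation and context-call choices to match $U$ and $V$; because $\phi$ contains only unary specifications, the biprogram context-call rule \rn{bCall} is never invoked along $T$, so this matching against $U$ and $V$ is unambiguous and the construction goes through.
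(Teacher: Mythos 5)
Your proof is correct and takes essentially the same route as the paper's: project the terminated trace of $BB$ to unary traces via Lemma~\ref{lem:bi-to-unary-plus}(b), transfer them to $CC$ using Lemma~\ref{lem:weave-project}, and apply Lemma~\ref{lem:unary-to-bi}, analysing its cases (a)--(f). The only (immaterial) difference is that you rule out cases (b)--(e) as impossible because $U,V$ end in $\skipc$, whereas the paper simply notes that (d),(e) yield divergence and (b),(c),(f) yield a fault of $CC$ --- both of which are already permitted outcomes of the lemma.
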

\begin{proof}
Consider any trace $T$ of $BB$ from $\sigma|\sigma'$.
Let $U,V$ be the corresponding unary traces, given by Lemma~\ref{lem:bi-to-unary-plus}(b).
In light of Lemma~\ref{lem:weave-project},
we can obtain a trace $\hat{T}$ from
$\configm{CC}{\sigma|\sigma'}{\_|\_}$ satisfying the conditions of 
Lemma~\ref{lem:unary-to-bi}.

Suppose in particular that $T$ is a trace from $\configm{BB}{\sigma|\sigma'}{\mu|\mu'}$
that terminates in $\configm{\syncbi{\skipc}}{\tau|\tau'}{\mu|\mu'}$.
Conditions (d) and (e) in Lemma~\ref{lem:unary-to-bi} imply that $\hat{T}$ diverges; then we are done.
Conditions (b), (c), and (f) all imply that $\configm{CC}{\sigma|\sigma'}{\mu|\mu'}$ faults; again we are done.
The remaining condition, (a), 
implies that $\hat{T}$ covers all the steps of $T$
and since $T$ is terminated, so is $\hat{T}$. 
The $align$ conditions of Lemmas~\ref{lem:bi-to-unary-plus} and~\ref{lem:unary-to-bi}
imply that the final states of $\hat{T}$ are $\tau,\tau'$.
\end{proof}

Now we can prove soundness of rule \rn{rWeave}.

\begin{proof}
Suppose the premise and side conditions hold:
\begin{itemize}
\item $\Phi \rHVflowtr{}{}{\P}{DD}{\Q}{\eff}$.
\item $CC \weave DD$
\item $\Phi$ has only unary specifications
\item $\Left{DD}$ terminates from any $\Left{\P}$-state, 
and $\Right{DD}$ terminates from any $\Right{\P}$-state.
\end{itemize}
To show the conclusion $\Phi \rHVflowtr{}{}{\P}{CC}{\Q}{\eff}$ 
(as per the semantics of judgments, in Sec.~\ref{app:relProofRules}),
consider any $\Phi$-interpretation $\phi$.
Consider any $\pi$ and any $\sigma,\sigma'$ such that $\sigma|\sigma'\models_\pi \P$.
Suppose $CC$ terminates from $\sigma|\sigma'$ in final states $\tau|\tau'$.
By Lemma~\ref{lem:weave-trace}, execution of $DD$ from $\sigma|\sigma'$ either faults, diverges, or terminates 
in $\tau|\tau'$.
It cannot fault, owing to the premise for $DD$.  
It cannot diverge:  Because if the traces of $DD$ from 
$\sigma|\sigma'$ could be extended without bound, then by Lemma~\ref{lem:bi-to-unary-plus}
either traces of $\Left{DD}$ from $\sigma$ could be extended without bound, 
or traces of $\Right{DD}$ from $\sigma'$ could be---which contradicts the termination conditions,
since $\sigma\models \Left{\P}$ and $\sigma'\models\Right{\P}$ (because
$\P\imp\Left{\P}\land\Right{\P}$ is valid).
So $DD$ terminates in $\tau|\tau'$.
Now conditions (Post), (Write Effect), and (Read Effect) for $CC$ are immediate from the premise for $DD$.

It remains to show that safety for $CC$ follows from safety for $DD$.  This is a direct consequence of a general property of weaving, which we state as Lemma~\ref{lem:weave-safe}.
\end{proof}

\begin{lemma}\label{lem:weave-safe}
Consider any $CC,DD$ such that $CC\weave DD$.
Consider any interpretation $\phi$ of some $\Phi$ with only unary specifications.\footnote{The specifications are irrelevant; all that matters is that the interpretation is unary.}
For any $\sigma,\sigma',\mu,\mu'$,
if $\configm{CC}{\sigma|\sigma'}{\mu|\mu'}$ can fault then
$\configm{DD}{\sigma|\sigma'}{\mu|\mu'}$ can fault or diverge.
\end{lemma}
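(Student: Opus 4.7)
The plan is to reduce the lemma to Lemma~\ref{lem:unary-to-bi} via the projection machinery of Lemmas~\ref{lem:bi-to-unary-plus} and~\ref{lem:weave-project}. Fix a faulting biprogram trace $T$ of $\configm{CC}{\sigma|\sigma'}{\mu|\mu'}$ and let $T^*$ be its maximal non-faulting prefix. By Lemma~\ref{lem:bi-to-unary-plus}(b), $T^*$ has unique aligned unary projections $U$ (a trace of $\Left{CC}$ from $\sigma$) and $V$ (a trace of $\Right{CC}$ from $\sigma'$); by Lemma~\ref{lem:weave-project}, $U$ and $V$ are also valid unary traces from $\Left{DD}$ and $\Right{DD}$. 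Apply Lemma~\ref{lem:unary-to-bi} to $\configm{DD}{\sigma|\sigma'}{\mu|\mu'}$ with these $U$ and $V$ to obtain a biprogram trace $T''$ of $DD$ falling into one of cases (a)--(f).

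Cases (b), (c), (f) of that lemma give that $T''$ (or a one-step extension) faults, and cases (d), (e) give that $T''$ diverges; each of these immediately yields the conclusion. The remaining work is case (a), where $T''$'s projections $W$, $X$ merely extend $U$, $V$ and $T''$ has not yet faulted. I split on how $T$ faulted. If via \rn{bSyncX}, \rn{bSplitLX}, or \rn{bSplitRX} (rule \rn{bCallX} is excluded because $\phi$ has only unary specifications), say with a left-side unary fault, then the final configuration of $U$ has no non-faulting successor, so $W$ cannot strictly extend $U$ and in fact $W=U$. Hence the left projection of the final configuration of $T''$ is itself about to unary-fault. By rule determinacy, $T''$ has at most one continuation; if that continuation performs any left unary step it must, by unary fault determinacy, be a faulting one, yielding a fault of $T''$; if instead it is a right-only or purely structural transition, it extends $T''$ without advancing the left projection. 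Iterating, $T''$ cannot terminate (termination would require the left projection to reach $\skipc$), so it either faults in finitely many steps or extends indefinitely, giving divergence.

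If the faulting step of $T$ is instead \rn{bIfX} or \rn{bWhX}, then the active biprogram at the final configuration of $T^*$ has a top-level aligned \keyw{if} or \keyw{while} whose guards $E, E'$ (together with alignment guards $\P, \P'$ in the \keyw{while} case) disagree at the reached states. The weaving axioms and congruence rules only replace splits by more structured forms and never remove or alter an outer aligned \keyw{if} or \keyw{while} nor its guards and alignment guards, so the corresponding active biprogram at the matching control position in $T''$ carries the same aligned form with the same $E, E', \P, \P'$. Since those depend only on the state, which is determined by the projections $U, V$, the same disagreement forces $T''$ to fault by the same rule.

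The main obstacle is the alignment-fault sub-case: formalizing ``the corresponding active biprogram at the matching control position in $T''$'' rests on a structural invariant that weaving preserves outer aligned \keyw{if} and \keyw{while} nodes along with their guards and alignment guards. This invariant follows by inspection of the five weaving axioms (which all operate on splits) and the congruence rules (which only refine subterms), and is cleanest to establish via a small side induction on the derivation of $CC\weave DD$. Should this uniform argument prove inconvenient, the whole lemma can instead be proved by induction on that derivation: the congruence rules combine the induction hypothesis with Lemma~\ref{lem:weave-trace} to handle faults occurring after a sub-biprogram terminates, and each of the five axioms is dispatched by a short case analysis on the first fault mode encountered (with the \keyw{while} axiom the most involved because its two sides run in different interleavings).
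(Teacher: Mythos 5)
Your primary argument is a genuinely different route from the paper's, and it works cleanly for the unary-fault modes: reducing to Lemma~\ref{lem:unary-to-bi} and observing that in case (a) the left (or right) projection is pinned at a configuration with no non-faulting successor, so that $DD$'s run can never terminate and must fault or diverge, is a nice uniform argument. However, there is a genuine gap in the alignment-fault sub-case (\rn{bIfX}/\rn{bWhX}). The structural invariant you invoke --- that weaving preserves aligned \keyw{if}/\keyw{while} nodes together with their guards and alignment guards --- is true and provable by induction on the derivation of $\weave$, but it is not sufficient. What you actually need is the dynamic claim that the run of $DD$ \emph{arrives at the corresponding aligned node with the same pair of states} $\tau,\tau'$ at which $CC$ faulted. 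Case (a) of Lemma~\ref{lem:unary-to-bi} only tells you that $DD$ has a trace whose projections cover $U$ and $V$; it does not tell you that the final configuration of that trace has the corresponding aligned \keyw{if}/\keyw{while} as its active sub-biprogram (the same unary guard command can occur at several dynamic instances, and the sub-biprograms preceding the aligned node may have been rewoven, changing the interleaving). Establishing that correspondence is precisely where Lemma~\ref{lem:weave-trace} is needed --- to show that a rewoven sub-biprogram preceding the aligned node either faults, diverges, or terminates in the \emph{same} states --- and that is the content of the rule induction you relegate to a fallback.

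That fallback is, in fact, the paper's proof: a rule induction on the derivation of $CC\weave DD$, with the five weaving axioms dispatched by case analysis on the first fault mode and the congruence rules handled by the induction hypothesis plus Lemma~\ref{lem:weave-trace}. So you have correctly identified all the ingredients, but the main argument as written does not close the alignment-fault case; to be complete you would either have to carry out the rule induction (as the paper does --- the \keyw{while} axiom is the heaviest case, since the two sides schedule the one-sided iterations differently), or strengthen your ``structural invariant'' into a genuine simulation statement relating control positions of $CC$-runs and $DD$-runs, which amounts to the same induction.
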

\begin{proof}
By rule induction on the definition of $\weave$.
In each case, 
we assume the lhs ($CC$) faults and show that rhs ($DD$) either faults or diverges,
by an analysis using the biprogram semantics (Figs.~\ref{fig:biprogTrans} and~\ref{fig:biprogTransU}).

In reasoning about transitions that do not manipulate the environment we omit $\mu,\mu'$.
% ALERT: using \config for biprograms when omitting env't.
Also, we omit $\phi$ from $\trans{\phi}$ and $\biTrans{}{\phi}$.

The base cases are the weaving axioms.
\begin{description}
\item[case] 
$\splitbi{A}{A} \weave \syncbi{A}$

By semantics there are two ways $\splitbi{A}{A}$ can fault:
\begin{itemize}
\item $\config{\splitbi{A}{A}}{\sigma|\sigma'} \biTrans{}{} \Fault$
by transition \rn{bSplitLX}, where $\config{A}{\sigma}\trans{}{}\Fault$.
\item $\config{\splitbi{A}{A}}{\sigma|\sigma'} \biTrans{}{} 
       \config{\splitbi{\skipc}{A}}{\sigma|\sigma'} \biTrans{}{} \Fault$
by \rn{bSplitL} and then \rn{bSplitRX}, where $\config{A}{\sigma'}\trans{}{}\Fault$.
\end{itemize}
In either case we have 
$\config{\syncbi{A}}{\sigma|\sigma'} \biTrans{}{} \Fault$ by \rn{bSyncX}.

\item[case]
$\Splitbi{C;D}{C';D'} \weave \splitbi{C}{C'};\splitbi{D}{D'}$

There are four ways $\Splitbi{C;D}{C';D'}$ can fault:
\begin{enumerate}
\item $\config{\Splitbi{C;D}{C';D'}}{\sigma|\sigma'} \biTranStar{}{}
       \config{\Splitbi{C_0;D}{C';D'}}{\tau|\sigma'} \biTrans{}{} \Fault$ 
      for some $C_0,\tau$ such that $\config{C_0}{\tau}\trans{}{}\Fault$
(by some number of instances of \rn{bSplitL} and then \rn{bSplitLX}).

\item $\config{\Splitbi{C;D}{C';D'}}{\sigma|\sigma'} \biTranStar{}{}
       \config{\Splitbi{D}{C';D'}}{\upsilon|\sigma'} \biTranStar{}{}
       \config{\Splitbi{D_0}{C';D'}}{\tau|\sigma'} \biTrans{}{} \Fault$
       where $\config{D_0}{\tau}\trans{}\Fault$
(for some $D_0,\upsilon,\tau$, and again by \rn{bSplitL} and \rn{bSplitLX}).

\item $\config{\Splitbi{C;D}{C';D'}}{\sigma|\sigma'} \biTranStar{}{}
       \config{\Splitbi{D}{C';D'}}{\upsilon|\sigma'} \biTranStar{}{}
       \config{\Splitbi{\skipc}{C';D'}}{\tau|\sigma'} \biTranStar{}{}
       \config{\Splitbi{\skipc}{C'_0;D'}}{\tau|\tau'} \biTrans{}{} \Fault$
       where $\config{C'_0}{\tau'}\trans{}\Fault$
(by \rn{bSplitL}, \rn{bSplitR}, and \rn{bSplitRX}).

\item $\config{\Splitbi{C;D}{C';D'}}{\sigma|\sigma'} \biTranStar{}{}
       \config{\Splitbi{D}{C';D'}}{\upsilon|\sigma'} \biTranStar{}{}
       \config{\Splitbi{\skipc}{C';D'}}{\tau|\sigma'} \biTranStar{}{}
       \config{\Splitbi{\skipc}{D'}}{\tau|\upsilon'}  \biTranStar{}{}
       \config{\Splitbi{\skipc}{D'_0}}{\tau|\tau'}  \biTrans{}{} \Fault$
       where $\config{D'_0}{\tau'}\trans{}\Fault$
\end{enumerate}
For each case we show that the woven biprogram $\splitbi{C}{C'};\splitbi{D}{D'}$ faults or diverges.
\begin{enumerate}
\item $\config{\splitbi{C}{C'};\splitbi{D}{D'}}{\sigma|\sigma'} \biTranStar{}{}
       \config{\splitbi{C_0}{C'};\splitbi{D}{D'}}{\tau|\sigma'} \biTrans{}{} \Fault$ 
by \rn{bSplitL} and \rn{bSplitLX}, because $\config{C_0}{\tau}\trans{}{}\Fault$

\item We have $\config{\splitbi{C}{C'};\splitbi{D}{D'}}{\sigma|\sigma'} \biTranStar{}{}
       \config{\splitbi{\skipc}{C'};\splitbi{D}{D'}}{\upsilon|\sigma'}$.
From this point, $C'$ could fault, or diverge, in which case $\splitbi{C}{C'};\splitbi{D}{D'}$ faults, or diverges,  and we are done.
Otherwise, execution can continue as 
$\config{\splitbi{\skipc}{C'};\splitbi{D}{D'}}{\upsilon|\sigma'} \biTranStar{}{}
 \config{\splitbi{D}{D'}}{\upsilon|\upsilon'} \biTranStar{}{}
 \config{\splitbi{D_0}{D'}}{\tau|\upsilon'} \biTrans{}{} \Fault$
because $\config{D_0}{\tau}\trans{}\Fault$.

\item 
$\config{\splitbi{C}{C'};\splitbi{D}{D'}}{\sigma|\sigma'} \biTranStar{}{}
 \config{\splitbi{\skipc}{C'};\splitbi{D}{D'}}{\upsilon|\sigma'} \biTranStar{}{}
 \config{\splitbi{\skipc}{C'_0};\splitbi{D}{D'}}{\upsilon|\tau'} \biTrans{}{} \Fault$

\item 
$\config{\splitbi{C}{C'};\splitbi{D}{D'}}{\sigma|\sigma'} \biTranStar{}{}
 \config{\splitbi{\skipc}{C'};\splitbi{D}{D'}}{\upsilon|\sigma'} \biTranStar{}{}
 \config{\splitbi{D}{D'}}{\upsilon|\upsilon'} \biTranStar{}{}
 \config{\splitbi{\skipc}{D'}}{\tau|\upsilon'} \biTranStar{}{}
 \config{\splitbi{\skipc}{D'_0}}{\tau|\tau'} \biTrans{}{} \Fault$

\end{enumerate}

\item[case]
$\Splitbi{ \ifc{E}{C}{D} }{ \ifc{E'}{C'}{D'} } 
   \weave \ifcbi{E\smallSplitSym E'}{\splitbi{C}{C'} }{ \splitbi{D}{D'} }$

If $\sigma(E)\neq \sigma'(E')$ then 
$\config{\ifcbi{E\smallSplitSym E'}{\splitbi{C}{C'} }{ \splitbi{D}{D'} }}{\sigma|\sigma'}\biTrans{}{}\Fault$ by \rn{bIfX} and we are done.
Otherwise, $\sigma(E) = \sigma'(E')$, and we consider the four ways that the lhs can fault.
\begin{enumerate}
\item 
  $\config{ \Splitbi{ \ifc{E}{C}{D} }{ \ifc{E'}{C'}{D'} } }{\sigma|\sigma'} \biTrans{}{}
   \config{ \Splitbi{ C }{ \ifc{E'}{C'}{D'} } }{\sigma|\sigma'} \biTranStar{}{} 
   \config{ \Splitbi{ C_0 }{ \ifc{E'}{C'}{D'} } }{\tau|\sigma'} \biTrans{}{} \Fault$
by \rn{bSplitL} and then \rn{bSplitLX}, where $\sigma(E)=true$.

Then for the rhs we have 
$\config{ \ifcbi{E\smallSplitSym E'}{\splitbi{C}{C'} }{ \splitbi{D}{D'} }}{\sigma|\sigma'} \biTrans{}{}
 \config{ \splitbi{C}{C'} }{\sigma|\sigma'} \biTranStar{}{}
 \config{ \splitbi{C_0}{C'} }{\tau|\sigma'} \biTrans{}{} \Fault$

\item 
  $\config{ \Splitbi{ \ifc{E}{C}{D} }{ \ifc{E'}{C'}{D'} } }{\sigma|\sigma'} \biTrans{}{}
   \config{ \Splitbi{ C }{ \ifc{E'}{C'}{D'} } }{\sigma|\sigma'} \biTranStar{}{} 
   \config{ \Splitbi{ \skipc }{ \ifc{E'}{C'}{D'} } }{\tau|\sigma'} \biTrans{}{} 
   \config{ \Splitbi{ \skipc }{ C'}}{\tau|\sigma'} \biTranStar{}{} 
   \config{ \Splitbi{ \skipc }{ C'_0}}{\tau|\tau'} \biTrans{}{} \Fault$
where $\sigma(E)=true=\sigma'(E')$.

Then for the rhs we have
$\config{ \ifcbi{E\smallSplitSym E'}{\splitbi{C}{C'} }{ \splitbi{D}{D'} }}{\sigma|\sigma'} \biTrans{}{}
\config{ \splitbi{C}{C'} }{\sigma|\sigma'} \biTranStar{}{}
\config{ \splitbi{\skipc}{C'} }{\tau|\sigma'} \biTranStar{}{}
\config{ \splitbi{\skipc}{C'_0} }{\tau|\tau'} \biTrans{}{} \Fault$
\end{enumerate}
The other two cases are symmetric.

\item[case]
$\Splitbi{ \letcom{m}{B}{C} }{ \letcom{m}{B'}{C'} }
    \weave \letcombi{m}{\splitbi{B}{B'}}{ \splitbi{C}{C'} } $

Similar to the preceding cases, using that transitions taken on one side are not affected by the presence or absence of 
a binding for $m$ on the other side.  
% Details in Dave's handwritten notes

\item[case]
$\Splitbi{ \whilec{E}{C} }{ \whilec{E'}{C'} }
   \weave \whilecbiA{E\smallSplitSym E'}{\P\smallSplitSym \P'}{\splitbi{C}{C'}}$

There are two ways the lhs can fault.
\begin{enumerate}
\item 
$\config{ \Splitbi{ \whilec{E}{C} }{ \whilec{E'}{C'} } }{\sigma|\sigma'} \biTranStar{}{}
\config{ \Splitbi{ C;\whilec{E}{C} }{ \whilec{E'}{C'} } }{\upsilon|\sigma'} \biTranStar{}{}
\config{ \Splitbi{ C_0;\whilec{E}{C} }{ \whilec{E'}{C'} } }{\tau|\sigma'} \biTrans{}{} \Fault$
where $\upsilon$ is the state after the $n$th completed iteration on the left, and 
$\config{C_0}{\tau}\trans{}\Fault$.

\item
$\config{ \Splitbi{ \whilec{E}{C} }{ \whilec{E'}{C'} } }{\sigma|\sigma'} \biTranStar{}{}
\config{ \Splitbi{ \skipc }{ \whilec{E'}{C'} } }{\tau|\sigma'} \biTranStar{}{}
\config{ \Splitbi{ \skipc }{ C';\whilec{E'}{C'} } }{\tau|\upsilon'} \biTranStar{}{}
\config{ \Splitbi{ \skipc }{ C'_0;\whilec{E'}{C'} } }{\tau|\tau'} \biTrans{}{} \Fault$
where $\upsilon'$ is the state after the $k$th completed iteration on the right, and
$\config{C'_0}{\tau'}\trans{}\Fault$.
\end{enumerate}

Note that $\Left{\splitbi{C}{C'}}$ is $C$ and $\Right{\splitbi{C}{C'}}$ is $C'$.
Consider a trace from
$\config{ \whilecbiA{E\smallSplitSym E'}{\P\smallSplitSym \P'}{\splitbi{C}{C'}} }{\sigma|\sigma'}$.
If it faults due to \rn{rWhX} we are done.
Otherwise it can be segmented into the iterates, each of which begins with 
a step by \rn{rWhL}, \rn{rWhR}, or \rn{rWhTT}, and accordingly executes 
$\splitbi{C}{\skipc}$, $\splitbi{\skipc}{C'}$, or $\splitbi{C}{C'}$.
These can be put in correspondence with some or all the iterates of 
$\Splitbi{ \whilec{E}{C} }{ \whilec{E'}{C'} } $ 
so that the same states are reached.
 
Whereas the lhs never executes $C'$ unless and until $C$ terminates 
(and does not fault), the rhs may do some iterations of $C'$ before all iterations of $C$ have been done.
Hence, if some iteration of $C'$ diverges, then the rhs diverges and we are done.
In the absence of divergence, the rhs eventually reaches either the $n$th iteration on the left (case 1 above)
or the $k$th iteration on the right (case 2 above).
From that point, either steps by \rn{bSplitL} lead to a point where 
we get a fault by \rn{bSplitLX} ($\config{C_0}{\tau}\trans{}\Fault$) or
steps by \rn{bSplitR} lead to a point where 
we get a fault by \rn{bSplitRX} ($\config{C'_0}{\tau'}\trans{}\Fault$).
\end{description}

Having dispensed with the base cases, we turn to the 
inductive cases which each have as premise that 
$BB\weave CC$.  The inductive hypothesis:   
for any $\sigma,\sigma',\mu,\mu'$,
if $\configm{BB}{\sigma|\sigma'}{\mu|\mu'}$ can fault then
$\configm{CC}{\sigma|\sigma'}{\mu|\mu'}$ can fault or diverge.
\begin{description}
\item[case]
$BB;DD \weave CC;DD$

There are two ways the lhs can fault.
\begin{enumerate}
\item $BB;DD$ faults from $\sigma,\sigma'$ 
because $BB$ does.

Then by induction hypothesis, $CC$ (and hence $CC;DD$) diverges or faults.

\item 
$\config{BB;DD}{\sigma|\sigma'} \biTranStar{}{}
 \config{DD}{\upsilon|\upsilon'} \biTranStar{}{}
 \config{DD_0}{\tau|\tau'} \biTrans{}{} \Fault$,
with $\config{BB}{\sigma|\sigma'} \biTranStar{}{} \config{\syncbi{\skipc}}{\upsilon|\upsilon'}$.

Then by Lem.~\ref{lem:weave-trace},
$\config{CC}{\sigma|\sigma'}$ either faults or diverges (and then we are done) or
it terminates in $\upsilon,\upsilon'$.
In the latter case we have
$\config{CC;DD}{\sigma|\sigma'} \biTranStar{}{}
 \config{DD}{\upsilon|\upsilon'} \biTranStar{}{}
 \config{DD_0}{\tau|\tau'} \biTrans{}{} \Fault$.
\end{enumerate}

\item[case]
$DD;BB \weave DD;CC$

If lhs faults in $DD$ then so does rhs.  
Otherwise both lhs and rhs reach the same intermediate states upon termination of $DD$, 
from which $BB$ faults.  So by induction hypothesis, $CC$ faults from those states.

\item[case]
$\ifcbi{E\smallSplitSym E'}{BB}{DD} \weave \ifcbi{E\smallSplitSym E'}{CC}{DD}$

If $\sigma(E)\neq\sigma'(E)$ then lhs and rhs both fault immediately.

If $\sigma(E)= false = \sigma'(E)$ then both sides take one step to the same configuration
$\config{ DD }{\sigma|\sigma'}$ so fault on lhs implies fault on rhs.

If $\sigma(E)= true = \sigma'(E)$ then 
fault on the lhs looks like 
$\config{ \ifcbi{E\smallSplitSym E'}{BB}{DD} }{\sigma|\sigma'} \biTrans{}{}
\config{ BB }{\sigma|\sigma'} \biTranStar{}{} \Fault$ 
so for the rhs it suffices to apply the induction hypothesis.

\item[case]
$\ifcbi{E\smallSplitSym E'}{DD}{BB} \weave \ifcbi{E\smallSplitSym E'}{DD}{CC}$

Symmetric to the preceding case.

\item[case]
$\letcombi{m}{\splitbi{B}{B'}}{ BB } \weave \letcombi{m}{\splitbi{B}{B'}}{ CC }$

A faulting trace for lhs has the form
$\configm{ \letcombi{m}{\splitbi{B}{B'}}{ BB } }{\sigma|\sigma'}{\mu|\mu'} \biTrans{}{}
 \configm{ BB }{\sigma|\sigma'}{\dot{\mu}|\dot{\mu'}} \biTranStar{}{}
 \configm{ BB_0 }{\tau|\tau'}{\dot{\mu}|\dot{\mu'}} \biTrans{}{} \Fault$
where $\dot{\mu}$ is $\extend{\mu}{m}{B}$ and $\dot{\mu}'$ is $\extend{\mu'}{m}{B'}$.
By the induction hypothesis (which is stated for all environments),
we get a fault or divergence for the rhs.

\item[case]
$\whilecbiA{E\smallSplitSym E'}{\P\smallSplitSym \P'}{BB} \weave
  \whilecbiA{E\smallSplitSym E'}{\P\smallSplitSym \P'}{CC}$

A faulting trace $T$ of $\whilecbiA{E\smallSplitSym E'}{\P\smallSplitSym \P'}{BB}$ from $\sigma,\sigma'$
can be segmented into $n$ completed iterates (each executing $BB$, $\splitbi{\Left{BB}}{\skipc}$, or $\splitbi{\skipc}{\Right{BB}}$),
followed by a partial iterate that faults.

Claim: a trace of $ \whilecbiA{E\smallSplitSym E'}{\P\smallSplitSym \P'}{CC}$ from $\sigma,\sigma'$
begins with $k\leq n$ completed iterates,
ending in the same states as the corresponding iteration in $T$,
and executing $\splitbi{\Left{CC}}{\skipc}$, $\splitbi{\skipc}{\Right{CC}}$, or $CC$,
according to whether the corresponding iteration in $T$ is left, right, or both.
These $k$ completed iterates are possibly followed by fault or divergence.

Proof of claim: by induction the iterates in $T$.  For the induction step, having the same states implies
the rhs takes either a left-, right-, or both-iteration just as the lhs did (because the weaving did not change the alignment guards).  If it is a one-sided iteration, by Lemma~\ref{lem:weave-project} the rhs is executing the same command, hence it faults.  If it is a both-sided iteration, i.e., a terminated execution of $BB$,
then Lemma~\ref{lem:weave-trace} tells us that $CC$ either faults, diverges,
or terminates in the same states.  The claim is proved.

Returning to the partial iterate of $BB$ that faults,
if it is one-sided, say $\Left{BB}$, then by Lemma~\ref{lem:weave-project} we have $\Left{CC}\equiv \Left{BB}$
and hence rhs faults.
If it is a both-sided iteration (i.e., beginning with \rn{bWhTT}),
the induction hypothesis applies, to yield a fault or divergence of $CC$.
\end{description}
\end{proof}

%% %Future: for the rule like \rn{rWeave} but using $\weave^*$,
%% %it's not derivable but it is admissible.  We could presumably prove the $\weave^*$ rule
%% %sound by direct semantic argument.

There is an obvious rule like \rn{rWeave} but using the transitive closure $\weave^*$.
This is admissible: any proof using that rule can be transformed to one making repeated use of \rn{rWeave}.
Indeed, it is sound: soundness can be proved for $\weave^n$, by induction on $n$,
by using the above argument for each $\weave$ in sequence.

\paragraph*{Framed reads and rules $\rn{rSeq}$ and $\rn{rWh}$}

One complication in the rules for sequence and loops (\rn{rSeq} and \rn{rWh}) 
is already present in the unary rules for sequence and loops.
The issue is that, because frame conditions can be expressed in terms of mutable locations (often ghost variables and fields), 
sound sequencing of judgments must avoid interference with those locations---so that the interpretation of an effect can be preserved over a command.
Soundness is achieved by the immunity conditions.  These are already present in the unary logic~\cite{RegLogJrnI}, but use of immunity is more delicate with the addition of read effects in~\cite{BanerjeeNN15}.  A key restriction is that specifications have framed reads (see Sec.~\ref{app:unary}).  This ensures that read effects are preserved under suitable immunity conditions. 

To state the key lemma, first
we define the image of refperm $\pi$ on arbitrary location set $W$.
This is written $\pi(W)$ and defined for variables by
$ x\in \pi(W) \mbox{ iff } x \in W$ and for heap locations by
$o.f\in \pi(W) \mbox{ iff } (\pi^{-1}(o)).f \in W $.
Second, we note a basic lemma:
%\begin{lemma} %[agreement symmetry]
%\label{lem:self-framingonreads}
Suppose $\eff$ has framed reads.
If $\agree(\sigma,\sigma',\eff,\pi)$ then
$\rlocs(\sigma',\eff)=\pi(\rlocs(\sigma,\eff))$ and hence
$\agree(\sigma',\sigma,\eff,\pi^{-1})$.

The key lemma is a bit technical.  
Keep in mind that the read effect part of a correctness judgment, and also
the read effect condition for context interpretations, quantifies over all pairs of runs.
In particular, for any two initial states $\sigma,\sigma'$, the conditions will be instantiated both with
$\sigma$ on the left and with $\sigma$ on the right.

\begin{lemma}[preservation of agreement]\label{lem:selfframing-agreement2}
\upshape
Suppose $\eff$ has framed reads.
Suppose $\sigma,\sigma'\allowDep\tau,\tau' \models \eff$ and 
$\sigma',\sigma\allowDep\tau',\tau \models \eff$.
Suppose  $\agree(\sigma, \sigma',\eff,\pi)$ and 
$\agree(\sigma', \sigma,\eff,\pi^{-1})$.
Let $\rho$ be any refperm $\rho\supseteq\pi$
for which 
\(\Lagree(\tau,\tau',\rho,\freshLocs(\sigma,\tau)\union \written(\sigma,\tau))
\).
Then for any set of locations $W$ in $\sigma$, if $\Lagree(\sigma,\sigma',\pi,W)$ then 
$\Lagree(\tau,\tau',\rho,W)$.
\end{lemma}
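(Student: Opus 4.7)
The plan is to work location by location: fix an arbitrary $\ell \in W$ and show $\Lagree(\tau,\tau',\rho,\{\ell\})$, then take the conjunction. I would treat the heap case $\ell = o.f$ in detail and note that variables go through identically. Note that $\Lagree(\sigma,\sigma',\pi,W)$ already forces $o \in \dom(\pi)$, so $\pi(o)$ makes sense and $\rho(o) = \pi(o)$ since $\rho \supseteq \pi$. I then split on whether $o.f$ was overwritten on the left.

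First I would dispatch the routine subcases. If $o.f \in \written(\sigma,\tau)$, then $o.f$ lies in the set $\freshLocs(\sigma,\tau) \cup \written(\sigma,\tau)$ on which $\rho$ is already assumed to give agreement between $\tau$ and $\tau'$, and we are done. Otherwise $\tau(o.f) = \sigma(o.f)$, and I would further split on whether $\pi(o).f$ was written on the right. If $\pi(o).f \notin \written(\sigma',\tau')$, then $\tau'(\pi(o).f) = \sigma'(\pi(o).f)$, and the hypothesis $\Lagree(\sigma,\sigma',\pi,W)$ gives $\sigma(o.f) \stackrel{\pi}{\sim} \sigma'(\pi(o).f)$, which lifts to the desired $\rho$-relation on $\tau(o.f)$ and $\tau'(\pi(o).f)$ simply because $\rho$ extends $\pi$.

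The main obstacle is the asymmetric subcase: $o.f$ is unchanged on the left but $\pi(o).f$ is overwritten on the right. Here the $\rho$-agreement at this location is not supplied by either the hypothesis on $\rho$ or by the $\pi$-agreement on $\sigma,\sigma'$, so I need an external source. My plan is to invoke the symmetric allowed-dependence hypothesis $\sigma',\sigma \allowDep \tau',\tau \models \eff$ applied to the refperm $\pi^{-1}$, which is legal by the given $\agree(\sigma',\sigma,\eff,\pi^{-1})$. This produces some $\rho' \supseteq \pi^{-1}$ with $\Lagree(\tau',\tau,\rho',\freshLocs(\sigma',\tau') \cup \written(\sigma',\tau'))$. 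Since $\pi(o).f$ lies in $\written(\sigma',\tau')$ and $\rho'(\pi(o)) = \pi^{-1}(\pi(o)) = o$, this yields $\tau'(\pi(o).f) \stackrel{\rho'}{\sim} \tau(o.f)$.

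What remains is to reconcile $\rho'^{-1}$ with the given $\rho$, since both extend $\pi$ but need not agree everywhere. The key observation I plan to use is that $\tau(o.f) = \sigma(o.f)$ is unchanged, so if it is a non-null reference $r$, then $r$ is already allocated in $\sigma$, and from $\sigma(o.f) \stackrel{\pi}{\sim} \sigma'(\pi(o).f)$ we learn that $r \in \dom(\pi)$ with $\pi(r) = \sigma'(\pi(o).f)$. Then both $\rho(r)$ and $\rho'^{-1}(r)$ are forced to equal $\pi(r)$, so the two refperms coincide on $r$, and the $\rho'$-agreement transports to $\rho$-agreement. The case $r = \semNull$ is immediate because the symmetric allowDep then forces $\tau'(\pi(o).f) = \semNull$ as well. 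The variable subcase is the same argument with $o.f$ replaced by $x$ and $\pi(o).f$ by $x$. I expect the reconciliation step to be the only nontrivial part, and the framed-reads hypothesis to enter (if at all) only through preserving shapes of $\rlocs$ when invoking the two allowDep conditions.
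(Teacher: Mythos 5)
Your proof is correct. The paper itself does not prove this lemma (it defers to the cited report \cite{BanerjeeNN15}), but the hypotheses --- the symmetric allowed-dependence $\sigma',\sigma\allowDep\tau',\tau\models\eff$ and the symmetric agreement $\agree(\sigma',\sigma,\eff,\pi^{-1})$ --- are tailored to exactly the case analysis you perform (left-written: use the given $\rho$; neither written: lift the $\pi$-agreement; right-only written: invoke the symmetric allowDep), so this is the intended argument. Your reconciliation step checks out: since $\rho\supseteq\pi$ and $\rho'^{-1}\supseteq\pi$ both agree with $\pi$ on the unchanged value $r=\sigma(o.f)\in\dom(\pi)$, the $\rho'$-agreement transports to $\rho$; in fact your computation shows $\tau'(\pi(o).f)=\pi(r)=\sigma'(\pi(o).f)$, so the ``asymmetric'' subcase contradicts $\pi(o).f\in\written(\sigma',\tau')$ and is vacuous. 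One small observation: as you suspected, the framed-reads hypothesis is not needed in your argument, since its only role (deriving $\agree(\sigma',\sigma,\eff,\pi^{-1})$ from $\agree(\sigma,\sigma',\eff,\pi)$ via the basic lemma stated just before) is short-circuited by the lemma supplying the symmetric agreement directly.
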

Existence of $\rho$ for 
$\Lagree(\tau,\tau',\rho,\freshLocs(\sigma,\tau)\union \written(\sigma,\tau))$
is a consequence of \mbox{$\sigma,\sigma'\allowDep\tau,\tau' \models \eff$}.
The result says that agreement on an arbitrary set $W$ is preserved.
For a proof, see \cite{BanerjeeNN15}.

%%% Local Variables: 
%%% mode: latex
%%% TeX-master: "rlfhTR"
%%% End: 

\tableofcontents

\end{document}

%%% Local Variables: 
%%% mode: latex
%%% TeX-master: t
%%% End: 